\author{
Paraschos Koutris\\
University of Wisconsin-Madison\\ 
WI, USA\\
\texttt{paris@cs.wisc.edu}\\
\and
Jef Wijsen\\
University of Mons\\
Mons, Belgium\\
\texttt{jef.wijsen@umons.ac.be}
}
\title{Consistent Query Answering for Primary Keys in Logspace}
\date{}
\newcommand{\blockfiller}{\ast}
\newcommand{\catoms}[1]{{#1}^{\mathsf{cons}}}
\newcommand{\rename}[2]{{#1}^{(#2)}}
\newcommand{\ssymbol}[1]{\@fnsymbol{#1}}
\newcommand{\supone}{\ssymbol{2}}
\newcommand{\supthree}{\ssymbol{4}}
\newcommand{\predicate}[1]{\mathsf{#1}}
\newcommand{\conedge}{\predicate{Link}}
\newcommand{\concomp}{\predicate{Trans}}
\newcommand{\anyone}{\predicate{Any1Emb}}
\newcommand{\relone}{\predicate{Rel1Emb}}
\newcommand{\irrone}{\predicate{Irr1Emb}}
\newcommand{\polyedge}{\predicate{E}}
\newcommand{\cpolyedge}[4]{\polyedge({#1},{#2},\,{#3},{#4})}
\newcommand{\polydistinct}{\predicate{Neq}}
\newcommand{\cpolydistinct}[4]{\polydistinct({#1},{#2},\,{#3},{#4})}
\newcommand{\polyconnected}{\predicate{UCon}}
\newcommand{\polylong}{\predicate{InLongCycle}}
\newcommand{\del}[1]{\predicate{Garbage{#1}}}
\newcommand{\good}[1]{\predicate{Rlvant{#1}}}
\newcommand{\keep}[1]{\predicate{Keep{#1}}}
\newcommand{\detu}[1]{\predicate{N{#1}}}
\newcommand{\eqpred}[1]{\predicate{Eq{#1}}}
\newcommand{\diseqpred}[1]{\predicate{Neq{#1}}}
\newcommand{\encodet}{\predicate{T}}
\newcommand{\pick}{\predicate{IdentifiedBy}}
\newcommand{\att}[1]{\mathsf{#1}}
\newcommand{\logspace}{\ComplexityFont{L}}
\newcommand{\calV}{{\mathcal{V}}}
\newcommand{\card}[1]{\left|{#1}\right|}
\newcommand{\formula}[1]{({#1})}
\newcommand{\lrformula}[1]{\left({#1}\right)}
\newcommand{\tuple}[1]{\langle{#1}\rangle}
\newcommand{\defeq}{\mathrel{\mathop:}=}
\newcommand{\signature}[2]{[{#1},{#2}]}
\newcommand{\bfo}{{\mathbf{o}}}
\newcommand{\bfp}{{\mathbf{p}}}
\newcommand{\db}{\mbox{${\mathbf{db}}$}}
\newcommand{\rep}{{\mathbf{r}}}
\newcommand{\sep}{{\mathbf{s}}}
\newcommand{\block}{{\mathbf{b}}}
\newcommand{\theblock}[2]{{\mathsf{block}}({#1},{#2})}
\newcommand{\repairs}[1]{\mathsf{rset}({#1})}
\newcommand{\atomvars}[1]{{\mathsf{vars}}({#1})}
\newcommand{\keyvars}[1]{{\mathsf{key}}({#1})}
\newcommand{\queryvars}[1]{\mathsf{vars}({#1})}
\newcommand{\sequencevars}[1]{\mathsf{vars}({#1})}
\newcommand{\substitute}[3]{{#1}_{[{{#2}\mapsto{#3}}]}}
\newcommand{\fd}[2]{{#1}\rightarrow{#2}}
\newcommand{\FD}[1]{{\mathcal{K}}({#1})}
\newcommand{\keycl}[2]{{#1}^{+,{#2}}}
\newcommand{\cqa}[1]{{\mathsf{CERTAINTY}}({#1})}
\newcommand{\attacksymbol}[1]{\stackrel{#1}{\rightsquigarrow}}
\newcommand{\nattacksymbol}[1]{\stackrel{#1}{\not\rightsquigarrow}}
\newcommand{\attacks}[1]{\attacksymbol{#1}}
\newcommand{\nattacks}[1]{\nattacksymbol{#1}}
\newtheorem{lemma}{Lemma}
\newtheorem{corollary}{Corollary}
\newtheorem{theorem}{Theorem}
\theoremstyle{definition}
\newtheorem{definition}{Definition}
\newtheorem{example}{Example}
\newcommand{\step}[1]{\stackrel{#1}{\smallfrown}}
\newcommand{\mgraphsymbol}{${\mathsf{M}}$}
\newcommand{\mgraph}{\mgraphsymbol-graph\xspace}
\newcommand{\mgraphs}{\mgraphsymbol-graphs\xspace}
\newcommand{\mGraphs}{\mgraphsymbol-Graphs\xspace}
\newcommand{\mGraph}{\mgraphsymbol-Graph\xspace}
\newcommand{\mcycle}{\mgraphsymbol-cycle\xspace}
\newcommand{\mcycles}{\mgraphsymbol-cycles\xspace}
\newcommand{\mCycles}{\mgraphsymbol-Cycles\xspace}
\newcommand{\markov}{\stackrel{\mathsf{{}_{M}}}{\longrightarrow}}
\newcommand{\ucq}{{\mathsf{UCQ}}}
\newcommand{\sjfbcq}{{\mathsf{sjfBCQ}}}
\newcommand{\rifi}[1]{{\textsf{Reify}}({#1})}
\newcommand{\isc}{\mathcal{S}}
\newcommand{\mhook}{\hookrightarrow}
\newcommand{\cmhook}[1]{\stackrel{_{#1}}{\hookrightarrow}}
\newcommand{\problem}[1]{{\textsf{#1}}}
\newcommand{\keyequal}{\sim}
\newcommand{\prewrite}{\operatorname{\mathbf{P-Rewrite}}}
\newcommand{\calL}{{\mathcal{L}}}
\newcommand{\rdir}{\att{Directors}}
\newcommand{\rmov}{\att{Movies}}
\newcommand{\constant}[1]{\mbox{`{#1}'}}
\newcommand{\qpruning}{q_{1}}
\newcommand{\qatomfun}[1]{{\mathsf{genre}}_{#1}}
\newcommand{\qatom}[2]{\qatomfun{#2}({#1})}
\newsavebox{\savepar}
\newenvironment{datalogpgm}
{\begin{small}
\begin{array}{l@{\,\leftarrow\,}l}}
{\end{array}
\end{small}
}
\newenvironment{datalogpgmns}
{
\begin{array}{l@{\,\leftarrow\,}l}}
{\end{array}
}
\newcommand{\spacebetweenrules}{\multicolumn{2}{c}{\mbox{}}\\}
\newcommand{\myparagraph}[1]{\vspace{0.15\baselineskip}\noindent\textbf{#1}.}
\newcommand{\smallcell}[1]{\begin{minipage}[t]{0.1\textwidth}{{#1}}\end{minipage}}
\newcommand{\cell}[1]{\begin{minipage}[t]{0.45\textwidth}{{#1}\vspace{2ex}}\end{minipage}}
\newcommand{\po}[1]{\widehat{#1}}
\newcommand{\dom}{{\mathbf{dom}}}
\newcommand{\cid}{{\mathit{cid}}}
\newcommand{\start}[1]{\mathsf{start}({#1})}
\newcommand{\fin}[1]{\mathsf{end}({#1})}
\newcommand{\CC}[1]{{x}_{#1},{y}_{#1}}
\newcommand{\mc}{\mathsf{c}}
\newcommand{\mi}{\mathsf{i}}
\newcommand{\ssdatalog}{{\mathit{SymStratDatalog}}}
\newcommand{\ssdatalogmin}{\ssdatalog^{\min}}
\newcommand{\qcycle}{C_{3}}
\newcommand{\common}{{\mathsf{shared}}}
\begin{document}

\maketitle


\begin{abstract}
We study the complexity of consistent query answering on databases that may violate primary key constraints.
A repair of such a database is any consistent database that can be obtained by deleting a minimal set of tuples. 
For every Boolean query $q$,
$\cqa{q}$ is the problem that takes a database as input and asks whether $q$ evaluates to true on every repair.
In~\cite{DBLP:journals/tods/KoutrisW17},
the authors show that for every self-join-free Boolean conjunctive query $q$,
the problem $\cqa{q}$ is either in $\P$ or $\coNP$-complete,
and it is decidable which of the two cases applies.
In this paper, we sharpen this result by showing that for every self-join-free Boolean conjunctive query $q$, the problem $\cqa{q}$ is either expressible in symmetric stratified Datalog or $\coNP$-complete.
Since symmetric stratified Datalog is in $\logspace$, we thus obtain a complexity-theoretic dichotomy between $\logspace$ and $\coNP$-complete.
Another new finding of practical importance is that $\cqa{q}$ is on the logspace side of the dichotomy for queries $q$ where all join conditions express foreign-to-primary key matches, which is undoubtedly the most common type of join condition.
\end{abstract}


\section{Motivation}

{\em Consistent query answering\/} (CQA) with respect to primary key constraints is the following problem.
Given a database $\db$ that may violate its primary key constraints,
define a repair as any consistent database that can be obtained by deleting a minimal set of tuples from $\db$.
For every Boolean query~$q$, the problem $\cqa{q}$ takes a database as input and asks whether $q$ evaluates to true on every repair of $\db$. 
In this paper, we focus on $\cqa{q}$ for queries $q$ in the class $\sjfbcq$, the class of self-join-free Boolean conjunctive queries.
In~\cite{DBLP:journals/tods/KoutrisW17},
the authors show that for every query~$q$ in $\sjfbcq$,
the problem $\cqa{q}$ is either in $\P$ or $\coNP$-complete.
This result is proved in a constructive way:
the authors introduce a syntactic (and decidable) property (call it $p$)\footnote{We will recall in Section~\ref{sec:preliminaries} that the property $p$ is: having an attack graph without strong cycles.}  for self-join-free Boolean conjunctive queries, and then show two things:
for queries~$q$ not possessing the property~$p$, $\cqa{q}$ is proved to be $\coNP$-complete;
and for queries $q$ possessing the property~$p$, 
an effective procedure is described that constructs a polynomial-time algorithm for $\cqa{q}$.
For clarity of exposition, we will call this effective procedure $\prewrite$ from here on.
Thus, $\prewrite$ takes as input a self-join-free Boolean conjunctive query $q$ possessing the property~$p$, and returns as output a polynomial-time algorithm that solves $\cqa{q}$.

Given an input query $q$, the procedure $\prewrite$ behaves differently depending on whether $\cqa{q}$ is in $\FO$ or in $\P\setminus\FO$. 
Membership of $\cqa{q}$ in $\FO$ is known to be decidable for queries $q$ in $\sjfbcq$.
If $\cqa{q}$ is in $\FO$, then the output of $\prewrite$ is a relational calculus query which can be encoded in SQL;
in this case, the functioning of $\prewrite$ is well understood and easily implementable (such implementations already exist~\cite{Pijcke2018}), as explained in~\cite[Section~5]{DBLP:journals/tods/KoutrisW17}.
The situation is different if $\cqa{q}$ is in $\P\setminus\FO$.
In~\cite{DBLP:journals/tods/KoutrisW17}, the authors show that if a query $q$ possesses the property $p$ but $\cqa{q}$ is not in $\FO$, then one can construct a polynomial-time algorithm for $\cqa{q}$. 
The construction in~\cite{DBLP:journals/tods/KoutrisW17}, however, is complicated and not amenable to easy implementation.
In this paper, we improve this situation;
we show that if a query $q$ possesses the property $p$,
then one can implement $\cqa{q}$ in symmetric stratified Datalog ($\ssdatalog$), which has logspace data complexity~\cite{DBLP:conf/lics/EgriLT07}.
We thus sharpen the complexity dichotomy of~\cite{DBLP:journals/tods/KoutrisW17} as follows: for every query $q$ in $\sjfbcq$, $\cqa{q}$ is either in $\logspace$ or $\coNP$-complete.
It is significant that Datalog is used as a target language for $\prewrite$, because this allows using optimized Datalog engines for solving $\cqa{q}$ whenever the problem lies on the logspace side of the dichotomy.
Rewriting into Datalog is generally considered a desirable outcome when consistent first-order rewritings do not exist (see, e.g., \cite[page~193]{DBLP:books/daglib/0041477}).
It is also worth noting that the SQL:1999 standard introduced linear recursion into SQL, which has been implemented in varying ways in existing DBMSs~\cite{DBLP:conf/fgit/PrzymusBBS10}. 
Since the Datalog programs in this paper never use non-linear recursion,
they may be partially or fully implementable in these DBMSs.

Throughout this paper, we use the term {\em consistent database\/} to refer to a database that satisfies all primary-key constraints, while the term {\em database\/} refers to both consistent and inconsistent databases.
This is unlike most database textbooks, which tend to say that databases must always be consistent.
The following definition introduces the main focus of this paper;
the complexity dichotomy of Theorem~\ref{the:dichotomy} is the main result of this paper.

\begin{definition}
Let $q$ be a Boolean query.
Let $\calL$ be some logic.
A {\em consistent $\calL$ rewriting for $q$\/} is a Boolean query $P$ in $\calL$ such that for every database $\db$,
$P$ is true in $\db$ if and only if $q$ is true in every repair of $\db$.
If $q$ has a consistent $\calL$ rewriting,
then we say that $\cqa{q}$ is expressible in $\calL$.
\end{definition}


\begin{theorem}\label{the:dichotomy}
For every self-join-free Boolean conjunctive query $q$, the problem $\cqa{q}$ is either $\coNP$-complete or expressible in $\ssdatalogmin$ (and thus in $\logspace$).
\end{theorem}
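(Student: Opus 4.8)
The plan is to establish the theorem by refining the existing dichotomy of \cite{DBLP:journals/tods/KoutrisW17}. The $\coNP$-complete side is already settled there: if a query $q$ does not possess the property $p$ (i.e., its attack graph has a strong cycle), then $\cqa{q}$ is $\coNP$-complete, and nothing new needs to be proved. So the entire burden is to show that for every self-join-free Boolean conjunctive query $q$ whose attack graph has no strong cycle, $\cqa{q}$ has a consistent $\ssdatalogmin$ rewriting. I would do this by strong induction on the number of atoms of $q$, mirroring the structure of the recursive procedure $\prewrite$ from \cite{DBLP:journals/tods/KoutrisW17}, but replacing each recursive step with a Datalog construction. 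The base case is a query with no atoms (trivially true, expressible in $\ssdatalog$), and the inductive step picks an atom that is ``safe'' to handle — in the terminology of the earlier work, an atom that is unattacked in the attack graph, which exists precisely because the attack graph is acyclic.

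The core of the inductive step is to imitate the two regimes in which $\prewrite$ operates. If $\cqa{q}$ is in $\FO$, then by \cite[Section~5]{DBLP:journals/tods/KoutrisW17} we already have a relational calculus rewriting, and relational calculus (first-order logic) is subsumed by $\ssdatalog$ via its non-recursive, symmetric-stratified fragment, so there is nothing to do. The interesting case is when $\cqa{q}$ is in $\P \setminus \FO$: here I would take the unattacked atom $F$, branch on a consistent choice of a block for the relation of $F$ (this is the ``consistent first-order reduction'' step), and recurse on the smaller query $q'$ obtained by removing $F$ and propagating the chosen valuation. The recursion on $q'$ yields, by the induction hypothesis, an $\ssdatalog$ program; what must be added on top is the logic that quantifies over all repairs of the $F$-relation. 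This is where genuine recursion (and symmetry) enters: the need to reason about reachability among blocks — the ``$\mathsf{M}$-graph'' / Markov-graph machinery hinted at by the macros \texttt{\textbackslash markov}, \texttt{\textbackslash mgraph}, etc. — is what forces Datalog rather than $\FO$, and the transitive-closure-like predicates one writes must be shown to be expressible with symmetric stratification (the key technical fact being that symmetric transitive closure, equivalently undirected reachability, is in symmetric Datalog by \cite{DBLP:conf/lics/EgriLT07}).

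I would structure the write-up as: (1) a lemma stating that $\ssdatalogmin$ is closed under the operations used in the construction — relativization/substitution of a rewriting of $q'$, Boolean combinations, and adjoining a symmetric-stratified recursive layer computing the relevant block-reachability; (2) a lemma extracting, from the $\prewrite$ analysis of \cite{DBLP:journals/tods/KoutrisW17}, exactly what the polynomial-time algorithm does in each recursive step, in a form abstract enough to be transcribed into Datalog rules; and (3) the inductive proof of the theorem proper, which glues (1) and (2) together. Throughout, the symmetry requirement of $\ssdatalogmin$ — that for every rule there is a ``reverse'' rule — must be checked for each recursive predicate introduced, which is the most delicate bookkeeping.

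The main obstacle I anticipate is \emph{faithfully reconstructing the $\P\setminus\FO$ behavior of $\prewrite$ inside symmetric stratified Datalog} — in particular, showing that the reachability predicates that $\prewrite$ implicitly computes (among blocks, capturing when one consistent choice forces another) are genuinely \emph{undirected} reachability, so that they land in symmetric Datalog rather than merely in (directed, linear) Datalog. Directed reachability is not known to be in $\logspace$, so if the dependency between blocks were inherently directed the argument would collapse; the crux is a structural claim, to be proved from the absence of strong cycles in the attack graph, that the relevant ``forcing'' relation is symmetric up to the stratification boundaries. A secondary difficulty is handling the interaction between stratification levels (each recursive removal of an atom may add a stratum) while keeping the total program symmetric-stratified, which requires checking that $\ssdatalogmin$ composes well under the inductive construction — precisely the content of lemma (1) above.
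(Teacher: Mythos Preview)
Your proposal has a foundational gap: you assume one can always ``pick an atom that is unattacked in the attack graph, which exists precisely because the attack graph is acyclic.'' But the hypothesis is only that the attack graph has no \emph{strong} cycle; weak cycles are allowed, and when they occur there may be no unattacked atom at all. This is precisely the $\P\setminus\FO$ regime, and your inductive step---which in both branches begins by removing an unattacked $F$---cannot start. The paper handles this case by a different mechanism: after first reducing to \emph{saturated} queries (Lemma~\ref{lem:corsaturate}), it inducts on the number of atoms of mode~$\mi$ (not on the total atom count, since the reduction below adds atoms). When every mode-$\mi$ atom is attacked, Lemma~\ref{lem:happy} produces a cycle $C$ in the \mgraph lying inside an initial strong component of the attack graph; one then computes the maximal garbage set for $C$ (Lemma~\ref{lem:scrub}), deletes it, and replaces all atoms of $C$ by a single fresh atom $T$ together with consistent auxiliary atoms (Lemma~\ref{lem:toT}), which strictly decreases the number of mode-$\mi$ atoms while preserving the absence of strong cycles.

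Your intuition that the crux is an undirected-reachability argument is right in spirit but misidentifies the object. It is not a ``forcing relation among blocks'' that happens to be symmetric by virtue of the attack graph lacking strong cycles. The actual technical heart is Lemma~\ref{lem:longcycle}: computing garbage sets requires detecting $n$-embeddings of $C$ with $n\geq 2$, i.e., long elementary directed cycles in the ($k$-partite) block-quotient graph. The trick is to pass to the undirected graph $\po{G}$ whose vertices are the $k$-cycles of that graph, with edges between intersecting $k$-cycles, and to prove that a long directed cycle downstairs exists if and only if $\po{G}$ has a chordless cycle of length $\geq 2k$ (or a bounded-length directed cycle detectable in $\FO$). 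Only this specific combinatorial reduction lands the problem in undirected connectivity and hence in symmetric Datalog; a generic appeal to symmetry of a block-dependency relation does not work, and indeed the $\cmhook{C}$-graph itself is genuinely directed.
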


The language $\ssdatalogmin$ will be defined in Section~\ref{sec:preliminaries}; informally, the superscript $\min$ means that the language allows selecting a minimum (with respect to some total order) from a finite set of values.
Since $\cqa{q}$ is $\logspace$-complete for some queries $q\in\sjfbcq$,
the logspace upper bound in Theorem~\ref{the:dichotomy} is tight.
The proof of Theorem~\ref{the:dichotomy} relies on novel constructs and insights developed in this paper.

Our second significant result in this paper focuses on consistent query answering for foreign-to-primary key joins.
In Section~\ref{sec:keyjoin}, we define a subclass of $\sjfbcq$ that captures foreign-to-primary key joins, which is undoubtedly the most common type of join. We show that $\cqa{q}$ lies on the logspace side of the dichotomy for \emph{all} queries $q$ in this class.
Thus, for the most common type of joins and primary key constraints, CQA is highly tractable, a result that goes against a widely spread belief that CQA would be impractical because of its high computational complexity.

\myparagraph{Organization}
Section~\ref{sec:related} discusses related work.
Section~\ref{sec:preliminaries} defines our theoretical framework, including the notion of \emph{attack graph}.
To guide the reader through the technical development,
Section~\ref{sec:gt} provides a high-level outline of where we are heading in this paper, including examples of the different graphs used. 
Section~\ref{sec:glimpse} introduces a special subclass of $\sjfbcq$, called  \emph{saturated queries}, and shows that each problem $\cqa{q}$ can be first-order reduced to some $\cqa{q'}$ where $q'$ is saturated. 
Section~\ref{sec:mgraph} introduces the notion of \mgraph, a graph at the schema-level,
and its data-level instantiation, called $\mhook$-graph.
An important result, Lemma~\ref{lem:happy}, relates cycles in attack graphs to cycles in \mgraphs, for saturated queries only.
Section~\ref{sec:garbageset} introduces the notion of garbage set for a subquery.
Informally,  garbage sets contain facts that can never make the subquery hold true, and thus can be removed from the database without changing the answer to $\cqa{q}$.
Section~\ref{sec:scrub} focuses on cycles in the \mgraph of a query, and shows that garbage sets for such cycles can be computed and removed in symmetric stratified Datalog.
At the end of Section~\ref{sec:scrub}, we have all ingredients for the proof of our main theorem.
Finally, Section~\ref{sec:keyjoin} shows that foreign-to-primary key joins fall on the logspace side of the dichotomy.
Most proofs have been moved to an appendix.
Appendix~\ref{sec:notation} contains a list of notations for easy reference.

\section{Related Work}\label{sec:related}

Consistent query answering (CQA) was initiated by the seminal work by 
Arenas, Bertossi, and Chomicki~\cite{DBLP:conf/pods/ArenasBC99},
and is the topic of the monograph~\cite{DBLP:series/synthesis/2011Bertossi}.
The term $\cqa{q}$ was coined in~\cite{DBLP:conf/pods/Wijsen10} to refer to CQA for Boolean queries $q$ on databases that violate primary keys, one per relation,
which are fixed by $q$'s schema.
The complexity classification of $\cqa{q}$ for all $q\in\sjfbcq$ started with the ICDT 2005 paper of Fuxman and Miller~\cite{FUXMAN2005,DBLP:journals/jcss/FuxmanM07}, and has attracted much research since then.
These previous works (see ~\cite{DBLP:conf/foiks/Wijsen14} for a survey) were generalized by~\cite{DBLP:conf/pods/KoutrisW15,DBLP:journals/tods/KoutrisW17}, where it was shown that the set $\{\cqa{q}\mid q\in\sjfbcq\}$ exhibits a $\P$-$\coNP$-complete dichotomy.
Furthermore, it was shown that membership of $\cqa{q}$ in $\FO$ is decidable for queries $q$ in $\sjfbcq$.
The current paper culminates this line of research by showing that the dichotomy is actually between $\logspace$ and $\coNP$-complete, and---even stronger---between expressibility in symmetric stratified Datalog and $\coNP$-complete.

The complexity of $\cqa{q}$ for self-join-free conjunctive queries with negated atoms was studied in~\cite{DBLP:conf/pods/KoutrisW18}.
Little is known about $\cqa{q}$ beyond self-join-free conjunctive queries.
For $\ucq$ (i.e., unions of conjunctive queries, possibly with self-joins),
Fontaine~\cite{DBLP:conf/lics/Fontaine13} showed that a $\P$-$\coNP$-complete dichotomy in the set 
$\{\cqa{q}\mid q$ is a Boolean query in $\ucq\}$ 
implies Bulatov's dichotomy theorem for conservative CSP~\cite{DBLP:journals/tocl/Bulatov11}.
This relationship between CQA and CSP was further explored in~\cite{DBLP:conf/icdt/LutzW15}.
The complexity of CQA for aggregation queries with respect to violations of functional dependencies has been studied in~\cite{DBLP:journals/tcs/ArenasBCHRS03}. 

The counting variant of $\cqa{q}$, which is called $\#\cqa{q}$, asks to determine the number of repairs that satisfy some Boolean query $q$. 
In~\cite{DBLP:journals/jcss/MaslowskiW13}, the authors show a $\FP$-$\#\P$-complete dichotomy in $\{\#\cqa{q}\mid q\in\sjfbcq\}$.
For conjunctive queries $q$ with self-joins,
the complexity of $\#\cqa{q}$ has been established for the case that all primary keys consist of a single attribute~\cite{DBLP:conf/icdt/MaslowskiW14}.

The paradigm of CQA has been implemented in expressive formalisms, such as Disjunctive Logic Programming~\cite{DBLP:journals/tkde/GrecoGZ03} and Binary Integer Programming~\cite{DBLP:journals/pvldb/KolaitisPT13}.
In these formalisms, it is relatively straightforward to express an exponential-time algorithm for $\cqa{q}$.
The drawback is that the efficiency of these algorithms is likely to be far from optimal in case that certain answers are computable in logspace or expressible in first-order logic.


\section{Preliminaries}\label{sec:preliminaries}

We assume an infinite total order $(\dom,\leq)$ of {\em constants\/}.
We assume a set of {\em variables\/} disjoint with $\dom$.
If $\vec{x}$ is a sequence containing variables and constants, then $\sequencevars{\vec{x}}$ denotes the set of variables that occur in $\vec{x}$.
A {\em valuation\/} over a set $U$ of variables is a total mapping $\theta$ from $U$ to $\dom$.
At several places, it is implicitly understood that such a valuation $\theta$ is extended to be the identity on constants and on variables not in~$U$.
If $V \subseteq U$, then $\theta[V]$ denotes the restriction of $\theta$ to $V$.
If $\theta$ is a valuation over a set $U$ of variables, $x$ is a variable (possibly $x\notin U$), and $a$ is a constant,
then $\substitute{\theta}{x}{a}$ is the valuation over $U\cup\{x\}$ such that
$\substitute{\theta}{x}{a}(x)=a$ and for every variable $y$ such that $y\neq x$, $\substitute{\theta}{x}{a}(y)=\theta(y)$.

\myparagraph{Atoms and key-equal facts}
Each {\em relation name\/} $R$ of arity $n$, $n\geq 1$, has a unique {\em primary key\/} which is a set $\{1,2,\dots,k\}$ where $1\leq k\leq n$.
We say that $R$ has {\em signature\/} $\signature{n}{k}$ if $R$ has arity $n$ and primary key $\{1,2,\dots,k\}$. 
Elements of the primary key are called {\em primary-key positions\/},
while $k+1$, $k+2$, \dots, $n$ are {\em non-primary-key positions\/}. 
For all positive integers $n,k$ such that $1\leq k\leq n$, we assume denumerably many relation names with signature $\signature{n}{k}$.
Every relation name has a unique {\em mode\/}, which is a value in $\{\mc,\mi\}$.
Informally, relation names of mode~$\mc$ will be used for consistent relations,
while relations that may be inconsistent will have a relation name of mode~$\mi$.  
We often write $R^{\mc}$ to make clear that $R$ is a relation name of mode~$\mc$.

If $R$ is a relation name with signature $\signature{n}{k}$, then we call $R(s_{1},\dots,s_{n})$ an {\em $R$-atom\/} (or simply atom), where each $s_{i}$ is either a constant or a variable ($1\leq i\leq n$).
Such an atom is commonly written as $R(\underline{\vec{x}},\vec{y})$ where the primary-key value $\vec{x}=s_{1},\dots,s_{k}$ is underlined and $\vec{y}=s_{k+1},\dots,s_{n}$.
An {\em $R$-fact\/} (or simply fact) is an $R$-atom in which no variable occurs.
Two facts $R_{1}(\underline{\vec{a}_{1}},\vec{b}_{1}),R_{2}(\underline{\vec{a}_{2}},\vec{b}_{2})$ are {\em key-equal\/},
denoted $R_{1}(\underline{\vec{a}_{1}},\vec{b}_{1})\keyequal R_{2}(\underline{\vec{a}_{2}},\vec{b}_{2})$,
 if $R_{1}=R_{2}$ and $\vec{a}_{1}=\vec{a}_{2}$.

We will use letters $F,G,H$ for atoms.
For an atom $F=R(\underline{\vec{x}},\vec{y})$, we denote by $\keyvars{F}$ the set of variables that occur in $\vec{x}$,
and by $\atomvars{F}$ the set of variables that occur in $F$, that is, $\keyvars{F}=\sequencevars{\vec{x}}$ and $\atomvars{F}=\sequencevars{\vec{x}}\cup\sequencevars{\vec{y}}$.
We sometimes blur the distinction between relation names and atoms.
For example, if $F$ is an atom,
then the term $F$-fact refers to a fact with the same relation name as $F$.

\myparagraph{Databases, blocks, and repairs}
A {\em database schema\/} is a finite set of relation names.
All constructs that follow are defined relative to a fixed database schema.
A {\em database\/} is a finite set $\db$ of facts using only the relation names of the schema such that for every relation name $R$ of mode~$\mc$, no two distinct $R$-facts of $\db$ are key-equal.

A {\em relation\/} of $\db$ is a maximal set of facts in $\db$ that all share the same relation name.
A {\em block\/} of $\db$ is a maximal set of key-equal facts of $\db$.
A block of $R$-facts is also called an $R$-block.
If $A$ is a fact of $\db$, then $\theblock{A}{\db}$ denotes the block of $\db$ that contains $A$.
If $A=R(\underline{\vec{a}},\vec{b})$, then $\theblock{A}{\db}$ is also denoted by$R(\underline{\vec{a}},\blockfiller)$. 
A database $\db$ is {\em consistent\/} if no two distinct facts of $\db$ are key-equal 
(i.e., if no block of $\db$ contains more than one fact).
A {\em repair\/} of $\db$ is a maximal (with respect to set inclusion) consistent subset of $\db$. 
We write $\repairs{\db}$ for the set of repairs of $\db$.


\myparagraph{Boolean conjunctive queries}
A {\em Boolean query\/} is a mapping $q$ that associates a Boolean (true or false) to each database, 
such that $q$ is closed under isomorphism~\cite{DBLP:books/sp/Libkin04}.
We write $\db\models q$ to denote that $q$ associates true to $\db$, in which case $\db$ is said to {\em satisfy\/} $q$.
A Boolean query $q$ can be viewed as a decision problem that takes a database as input and asks whether $\db$ satisfies~$q$. 
In this paper, the complexity class $\FO$ stands for the set of Boolean queries that can be defined in first-order logic with equality and constants, but without other built-in predicates or function symbols.

A {\em Boolean conjunctive query\/} is a finite set 
$q=\{R_{1}(\underline{\vec{x}_{1}},\vec{y}_{1})$, $\dots$, $R_{n}(\underline{\vec{x}_{n}},\vec{y}_{n})\}$ of atoms, without equality or built-in predicates.
We denote by $\queryvars{q}$ the set of variables that occur in $q$.
The set $q$ represents the first-order sentence 
$$\exists u_{1}\dotsm\exists u_{k}\left(R_{1}(\underline{\vec{x}_{1}},\vec{y}_{1})\land\dotsm\land R_{n}(\underline{\vec{x}_{n}},\vec{y}_{n})\right),$$ where $\{u_{1}, \dots, u_{k}\}=\queryvars{q}$.
This query $q$ is satisfied by a database $\db$ if there exists a valuation $\theta$ over $\queryvars{q}$ such that for each $i\in\{1,\dots,n\}$, 
$R_{i}(\underline{\vec{a}},\vec{b})\in\db$ with $\vec{a}=\theta(\vec{x}_{i})$ and $\vec{b}=\theta(\vec{y}_{i})$. 

We say that a Boolean conjunctive query $q$ has a {\em self-join\/} if some relation name occurs more than once in $q$.
If $q$ has no self-join, then it is called {\em self-join-free\/}. 
We write $\sjfbcq$ for the class of self-join-free Boolean conjunctive queries.
If $q$ is a query in $\sjfbcq$ with an $R$-atom, then, by an abuse of notation, we sometimes write $R$ to mean the $R$-atom of $q$. 

Let $\theta$ be a valuation over some set $X$ of variables.
For every  Boolean conjunctive query $q$, we write $\theta(q)$ for the query obtained from $q$ by replacing all occurrences of each $x\in X\cap\queryvars{q}$ with $\theta(x)$;
variables in $\queryvars{q}\setminus X$ remain unaffected (i.e., $\theta$ is understood to be the identity on variables not in $X$).

\myparagraph{Atoms of mode~$\mc$}
The {\em mode\/} of an atom is the mode of its relation name (a value in $\{\mc,\mi\}$).
If $q$ is a query in $\sjfbcq$,
then $\catoms{q}$ is the set of all atoms of $q$ that are of mode~$\mc$.


\myparagraph{Functional dependencies}
Let $q$ be a Boolean conjunctive query.
A {\em functional dependency for $q$\/} is an expression $\fd{X}{Y}$ where $X,Y\subseteq\queryvars{q}$.
Let $\calV$ be a finite set of valuations over $\queryvars{q}$.
We say that $\calV$ \emph{satisfies} $\fd{X}{Y}$ if for all $\theta,\mu\in\calV$,
if $\theta[X]=\mu[X]$, then $\theta[Y]=\mu[Y]$.
Let $\Sigma$ be a set of functional dependencies for $q$.
We write $\Sigma\models\fd{X}{Y}$ if for every set $\calV$ of valuations over $\queryvars{q}$,
if $\calV$ satisfies each functional dependency in $\Sigma$, then $\calV$ satisfies $\fd{X}{Y}$. 
Note that the foregoing conforms with standard dependency theory if variables are viewed as attributes, and valuations as tuples.


\myparagraph{Consistent query answering}
Let $q$ be a query in $\sjfbcq$.
We define $\cqa{q}$ as the decision problem that takes as input a 
database $\db$, 
and asks whether every repair of $\db$ satisfies~$q$.

\myparagraph{The genre of a fact}
Let $q$ be a query in $\sjfbcq$.
For every fact $A$ whose relation name occurs in $q$,
we denote by $\qatom{A}{q}$ the (unique) atom of $q$ that has the same relation name as $A$. 
From here on, if $\db$ is a database that is given as an input to $\cqa{q}$,
we will assume that each relation name of each fact in $\db$ also occurs in $q$.
Therefore, for every $A\in\db$, $\qatom{A}{q}$ is well defined.
Of course, this assumption is harmless.

\myparagraph{Attack graph}
Let $q$ be a query in $\sjfbcq$.
We define $\FD{q}$ as the following set of functional dependencies:
$\FD{q}\defeq\{\fd{\keyvars{F}}{\atomvars{F}}\mid F\in q\}$.
%
%
%
For every atom $F\in q$, we define $\keycl{F}{q}$ as the set of all variables $x\in\queryvars{q}$ satisfying
$\FD{q\setminus\{F\}}\cup\FD{\catoms{q}}\models\fd{\keyvars{F}}{x}$.
Informally, the term $\FD{\catoms{q}}$ is the set of all functional dependencies that arise in atoms of mode~$\mc$.
The {\em attack graph\/} of $q$ is a directed graph whose vertices are the atoms of $q$.
There is a directed edge from $F$ to $G$ ($F\neq G$), denoted $F\attacks{q}G$, if there exists a sequence
\begin{equation}\label{eq:witness}
F_{0}\step{x_{1}}F_{1}\step{x_{2}}F_{2}\dotsm\step{x_{\ell}}F_{\ell}
\end{equation}
such that $F_{0}=F$, $F_{\ell}=G$, and for each $i\in\{1,\dots,\ell\}$,
$F_{i}$ is an atom of $q$ and $x_{i}$ is a variable satisfying
$x_{i}\in\lrformula{\atomvars{F_{i-1}}\cap\atomvars{F_{i}}}\setminus \keycl{F}{q}$.
The sequence~(\ref{eq:witness}) is also called a {\em witness\/} for $F\attacks{q}G$.
An edge $F\attacks{q}G$ is also called an {\em attack from $F$ to $G$\/}; we also say that {\em $F$ attacks $G$\/}.

An attack on a variable $x\in\queryvars{q}$ is defined as follows:
$F\attacks{q}x$ if $F\attacks{q\cup\{N(\underline{x})\}}N(\underline{x})$ where $N$ is a fresh relation name of signature~$\signature{1}{1}$.
Informally, $x$ is attacked in $q$ if $N(\underline{x})$ has an incoming attack in the attack graph of $q\cup\{N(\underline{x})\}$.

\begin{example}
Let $\qpruning=\{R(\underline{x},y)$, $S(\underline{y},z)$, $U(\underline{y,z,w},x)$, 
$T_{1}(\underline{z},w)$, $T_{2}(\underline{z},w)$, $T^{\mc}(\underline{z},w)\}$.
Using relation names for atoms, we have $\keycl{R}{\qpruning}=\{x\}$.
A witness for $R\attacks{\qpruning}U$ is $R\step{y}U$.
The attack graph of $\qpruning$ is shown in Fig.~\ref{fig:pruning}.
\end{example}

An attack $F\attacks{q}G$ is {\em weak\/} if $\FD{q}\models\fd{\keyvars{F}}{\keyvars{G}}$;
otherwise it is {\em strong\/}.
A cycle in the attack graph is {\em strong\/} if at least one attack in the cycle is strong.
It has been proved~\cite[Lemma~3.6]{DBLP:journals/tods/KoutrisW17} that if the attack graph contains a strong cycle, then it contains a strong cycle of length~$2$.
The main result in~\cite{DBLP:journals/tods/KoutrisW17} can now be stated. 

\begin{theorem}[\cite{DBLP:journals/tods/KoutrisW17}]\label{the:trichotomy}
For every query $q$ in $\sjfbcq$,
\begin{itemize} 
\item
if the attack graph of $q$ is acyclic, then $\cqa{q}$ is in $\FO$;
\item
if the attack graph of $q$ is cyclic but contains no strong cycle, then $\cqa{q}$ is $\logspace$-hard and in $\P$; and
\item
if the attack graph of $q$ contains a strong cycle, then $\cqa{q}$ is $\coNP$-complete.
\end{itemize}
\end{theorem}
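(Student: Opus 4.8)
\medskip
\noindent\emph{Proof plan.}
The plan is to supply, for each of the three regimes, a matching upper and lower bound, against the common ceiling that $\cqa{q}$ lies in $\coNP$ for every $q\in\sjfbcq$: a ``no''-instance is witnessed by a single repair that falsifies $q$, which has size linear in the input and can be checked in polynomial time against the fixed query $q$. The three bullets then ask for (i) a first-order rewriting when the attack graph is acyclic, (ii) a polynomial-time algorithm together with a $\logspace$-hardness reduction when the attack graph is cyclic but carries no strong attack on any of its cycles, and (iii) a $\coNP$-hardness reduction when some cycle is strong. Note that the trichotomy is effective: $\FD{q}$, each $\keycl{F}{q}$, and hence the attack graph together with the weak/strong labelling of its edges, are all computable from $q$.

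First I would treat the acyclic case by induction on $\card{q}$, with the empty query as a trivial base case. An acyclic attack graph has a source, i.e.\ an \emph{unattacked} atom $F=R(\underline{\vec x},\vec y)$. The crux is a ``consume an unattacked atom'' lemma stating that $\cqa{q}$ first-order reduces to $\cqa{q'}$ for strictly smaller $q'$, via the equivalence: $q$ is certainly true in $\db$ iff there is a block $R(\underline{\vec a},\blockfiller)$ of $\db$ such that for every fact $R(\underline{\vec a},\vec b)$ in it, the query $\theta(q\setminus\{F\})$ with $\theta(\vec x)=\vec a$ and $\theta(\vec y)=\vec b$ is certainly true in $\db$. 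Unattackedness of $F$ is exactly what makes this sound: since no attack leaves $F$, the nondeterministic choice of a fact inside the block of $F$ cannot propagate inconsistency into the rest of the query. Deleting the source $F$ and substituting constants for $\atomvars{F}$ cannot create new attacks, so the residual queries stay acyclic and the recursion unfolds into a first-order sentence.

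Next I would handle the cyclic-but-weak case. The lower bound, $\logspace$-hardness, is obtained by a logspace many-one reduction from a standard $\logspace$-complete problem, coding its instance into the blocks of the atoms on some cycle of the attack graph so that the forced propagation of choices around that cycle simulates the target computation; in particular $\cqa{q}$ is then not in $\FO$. The upper bound is the technically hardest point of the whole theorem. One isolates an ``initial'' weak cycle in the attack graph and consumes it by a polynomial-time --- but no longer first-order --- subroutine: tracing how a forced choice in one block is transmitted along the weak edges of the cycle, the certain/not-certain verdict is computed as an iterated least fixpoint over the database, which places $\cqa{q}$ in $\P$; the absence of a strong attack on the cycle is precisely what guarantees that this propagation terminates with a definite verdict. (It is exactly this polynomial-time argument that the present paper revisits and sharpens to symmetric stratified Datalog, hence to $\logspace$.)

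Finally, for the strong-cycle case, membership in $\coNP$ is already noted, so it remains to prove $\coNP$-hardness. Here I would first invoke the cited fact that a strong cycle forces a strong cycle of length two: atoms $F,G$ with $F\attacks{q}G$, $G\attacks{q}F$, where, say, $F\attacks{q}G$ is strong, i.e.\ $\FD{q}\not\models\fd{\keyvars{F}}{\keyvars{G}}$. This two-atom core is used to code a suitable $\coNP$-complete problem (the complement of an $\NP$-complete problem such as monotone satisfiability): $F$-blocks and $G$-blocks carry binary choices, the two opposite attacks impose a consistency condition linking them, and strongness of $F\attacks{q}G$ provides exactly the slack needed so that the choices at $F$-blocks and at $G$-blocks are not rigidly tied, allowing an arbitrary instance to be encoded. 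The atoms of $q$ outside the $2$-cycle are dealt with by a padding construction: they are populated so that they hold, consistently, in every repair, and hence neither help nor obstruct $q$. One then checks that a repair falsifying $q$ corresponds precisely to a solution of the coded instance, so $q$ is certainly true in the constructed $\db$ iff that instance has no solution; this gives $\coNP$-hardness and completes the trichotomy.
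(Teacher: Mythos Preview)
This theorem is not proved in the present paper; it is quoted from \cite{DBLP:journals/tods/KoutrisW17} as background, so there is no in-paper proof to compare against. Your plan does track the architecture of the original proof in \cite{DBLP:journals/tods/KoutrisW17}, but two points are off and a third is too thin to count as a plan.

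First, in the acyclic case you correctly take an unattacked atom $F$ (a \emph{source}: no incoming attacks), but then justify the rewriting by ``since no attack leaves $F$''. That is the wrong direction; a source may well have outgoing attacks. The soundness of consuming $F$ rests on the absence of \emph{incoming} attacks, which is what lets one fix the primary-key value of $F$ existentially and then universally quantify over the block (cf.\ \cite[Lemma~4.4]{DBLP:journals/tods/KoutrisW17}); the explanation in terms of ``propagation outward'' from $F$ does not match the mechanism.

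Second, your $\logspace$-hardness step invokes ``a logspace many-one reduction from a standard $\logspace$-complete problem''. Under logspace reductions every nontrivial language is $\logspace$-hard, so this is vacuous. The meaningful statement (and what \cite{DBLP:journals/tods/KoutrisW17} proves) uses first-order reductions; without that, you cannot conclude $\cqa{q}\notin\FO$.

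Third, the $\P$ upper bound and the $\coNP$-hardness are the two long technical cores of \cite{DBLP:journals/tods/KoutrisW17}, and your sketches for them are at the level of intent only. Sentences like ``the absence of a strong attack on the cycle is precisely what guarantees that this propagation terminates with a definite verdict'' and ``strongness provides exactly the slack needed'' name no mechanism. In particular, the $\coNP$-hardness in \cite{DBLP:journals/tods/KoutrisW17} is not a single two-atom gadget with padding: it proceeds through a case analysis on the structure of the strong $2$-cycle, and the source problems and encodings differ across cases. If you want a genuine plan here, you would need to at least identify which $\NP$-complete problem you reduce from in each structural case and say how the variables of $F$ and $G$ outside $\keyvars{F}\cup\keyvars{G}$ are handled.
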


\myparagraph{Sequential proof}
Let $q$ be a query in $\sjfbcq$.
Let $\fd{Z}{w}$ be a functional dependency for $q$ with a singleton right-hand side (where set delimiters $\{$ and $\}$ are omitted).
A {\em sequential proof for $\fd{Z}{w}$\/} is a (possibly empty) sequence
$F_{1},F_{1},\dots,F_{\ell}$ of atoms in $q$ such that for every $i\in\{1,\dots,\ell\}$, 
$\keyvars{F_{i}}\subseteq Z\cup\lrformula{\bigcup_{j=1}^{i-1}\atomvars{F_{j}}}$
and for some  $k\in\{1,\dots,\ell\}$, $w\in\atomvars{F_{k}}$.
Sequential proofs mimic the computation of a closure of a set of attributes with respect to a set of functional dependencies; see, e.g., \cite[p.~165]{DBLP:books/aw/AbiteboulHV95}.


\myparagraph{Notions from graph theory}
A directed graph is {\em strongly connected\/} if there is a directed path from any vertex to any other.
The maximal strongly connected subgraphs of a graph are vertex-disjoint and are called its {\em strong components\/}.
If $S_{1}$ and $S_{2}$ are strong components such that an edge leads from a vertex in $S_{1}$ to a vertex in $S_{2}$,
then $S_{1}$ is a {\em predecessor\/} of $S_{2}$ and $S_{2}$ is a {\em successor\/} of $S_{1}$.
A strong component is called {\em initial\/} if it has no predecessor.
For a directed graph, 
we define the length of a directed path as the number of edges it contains.
A directed path or cycle without repeated vertices is called {\em elementary\/}.
If $G$ is a graph, then $V(G)$ denotes the vertex set of $G$,
and $E(G)$ denotes the edge set of $G$.


\myparagraph{Linear stratified Datalog}
We assume that the reader is familiar with the syntax and semantics of Datalog.
We fix some terminology for Datalog programs, most of which is standard.
A predicate that occurs in the head of some rule is called an {\em intentional database predicate\/} (IDB predicate); otherwise it is an {\em extensional database predicate\/} (EDB predicate).

The following definition is slightly adapted from~\cite[p.~185]{DBLP:series/txtcs/GradelKLMSVVW07}.
A {\em stratified Datalog program\/} is a sequence $P=(P_{0},\dots,P_{r})$ of basic Datalog programs, which are called the {\em strata\/} of $P$, such that each of the IDB predicates of $P$ is an IDB predicate of precisely one stratum $P_{i}$ and can be used as an EDB predicate (but not as an IDB predicate) in higher strata $P_{j}$ where $j>i$. In particular, this means that
\begin{enumerate}
\item
if an IDB predicate of stratum $P_{j}$ occurs {\em positively\/} in the body of a rule of stratum $P_{i}$, then $j\leq i$, and
\item
if an IDB predicate of stratum $P_{j}$ occurs {\em negatively\/} in the body of a rule of stratum $P_{i}$, then $j<i$.
\end{enumerate}
Stratified Datalog programs are given natural semantics using semantics for Datalog programs for each $P_{i}$, where the IDB predicates of a lower stratum are viewed as EDB predicates for a higher stratum.
A rule is {\em recursive\/} if its body contains an IDB predicate of the same stratum. 


A stratified Datalog program is {\em linear\/} if in the body of each rule there is at most one occurrence of an IDB predicate of the same stratum (but there may be arbitrarily many occurrences of IDB predicates from lower strata).

\myparagraph{Symmetric stratified Datalog}
Assume that some stratum of a linear stratified Datalog program contains a recursive rule 

$$
L_{0}\leftarrow L_{1},L_{2},\dots,L_{m},\neg L_{m+1},\dots,\neg L_{n}
$$
such that $L_{1}$ is an IDB predicate of the same stratum. 
Then, since the program is linear, each predicate among $L_{2},\dots,L_{n}$ is either an EDB predicate or an IDB predicate of a lower stratum.
Such rule has a {\em symmetric rule\/}:

$$
L_{1}\leftarrow L_{0},L_{2},\dots,L_{m},\neg L_{m+1},\dots,\neg L_{n}.
$$
A stratified Datalog program is {\em symmetric\/} if it is linear and the symmetric of any recursive rule is also a rule of the program.

It is known (see, for example, \cite[Proposisition~3.3.72]{DBLP:series/txtcs/GradelKLMSVVW07}) that linear stratified Datalog is equivalent to Transitive Closure Logic.
The data complexity of linear stratified Datalog is in $\NL$ (and is complete for $\NL$).
A symmetric Datalog program can be evaluated in logarithmic space~\cite{DBLP:conf/lics/EgriLT07} and cannot express directed reachability~\cite{DBLP:conf/icalp/EgriLT08}.

We will assume that given a (extensional or intentional) predicate $P$ of some arity $2\ell$, we can express the following query 
(let $\vec{x}=\tuple{x_{1},\dots,x_{\ell}}$,
$\vec{y}=\tuple{y_{1},\dots,y_{\ell}}$, and
$\vec{z}=\tuple{z_{1},\dots,z_{\ell}}$):
\begin{equation}\label{eq:groupby}
\{\vec{x},\vec{y}
\mid P(\vec{x},\vec{y})
\land
\forall z_{1}\dotsm\forall z_{\ell}
\lrformula{
P(\vec{x},\vec{z})
\rightarrow
\vec{y}\leq_{\ell}\vec{z}
}\},
\end{equation}
where $\leq_{\ell}$ is a total order on $\dom^{\ell}$.
Informally, the above query groups by the $\ell$ leftmost positions,
and, within each group, takes the smallest (with respect to $\leq_{\ell}$) value for the remaining positions.
Such query will be useful in Section~\ref{sec:eliminationofmcycle}, where $P$ encodes an equivalence relation on a finite subset of $\dom^{\ell}$, and the query~(\ref{eq:groupby}) allows us to deterministically choose a representative in each equivalence class.
The order $\leq_{\ell}$ can be defined as the lexicographical order on $\dom^{\ell}$ induced by the linear order on $\dom$.
For example, for $\ell=2$, the lexicographical order is defined as $(y_{1},y_{2})\leq_{2}(z_{1},z_{2})$ if $y_{1}<z_{1}\lor\lrformula{\lrformula{y_{1}=z_{1}}\land\lrformula{y_{2}\leq z_{2}}}$.
Nevertheless, our results do not depend on how the order $\leq_{\ell}$ is defined.
Moreover, all queries in our study will be order-invariant in the sense defined in~\cite{DBLP:journals/tocl/GroheS00}.
The order is only needed in the proof of Lemma~\ref{lem:toT} to pick, in a deterministic way, an identifier from a set of candidate identifiers. 
In Datalog, we use the following convenient syntax for~(\ref{eq:groupby}):
$$
\begin{datalogpgmns}
\predicate{Answer}(\vec{x},\min(\vec{y}))
&
P(\vec{x},\vec{y}).
\end{datalogpgmns}
$$
Such rule will always be non-recursive.
Most significantly, if we extend a logspace fragment of stratified Datalog with queries of the form~(\ref{eq:groupby}), the extended fragment will also be in logspace. 
Therefore, assuming queries of the form~(\ref{eq:groupby}) is harmless for our complexity-theoretic purposes.
We use $\ssdatalog$ for symmetric stratified Datalog,
and $\ssdatalogmin$ for symmetric stratified Datalog that allows queries of the form~(\ref{eq:groupby}).

\section{The Main Theorem and an Informal Guide of its Proof}\label{sec:gt}

\begin{figure*}[t!]
\captionsetup[subfigure]{justification=centering}
\begin{subfigure}[t]{0.35\textwidth}\centering
\includegraphics[scale=0.8]{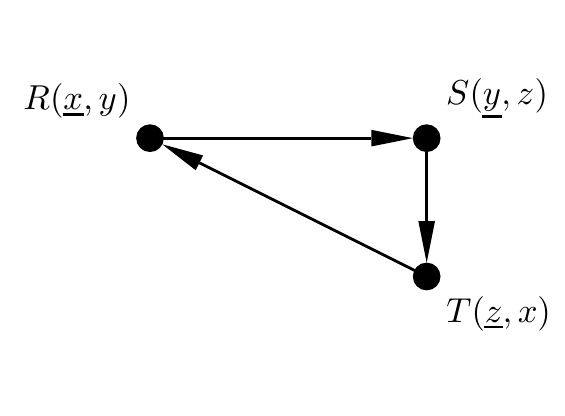}
\caption{\mgraph.}\label{fig:gtm}
\end{subfigure}
\begin{subfigure}[t]{0.6\textwidth}\centering
\includegraphics[scale=0.8]{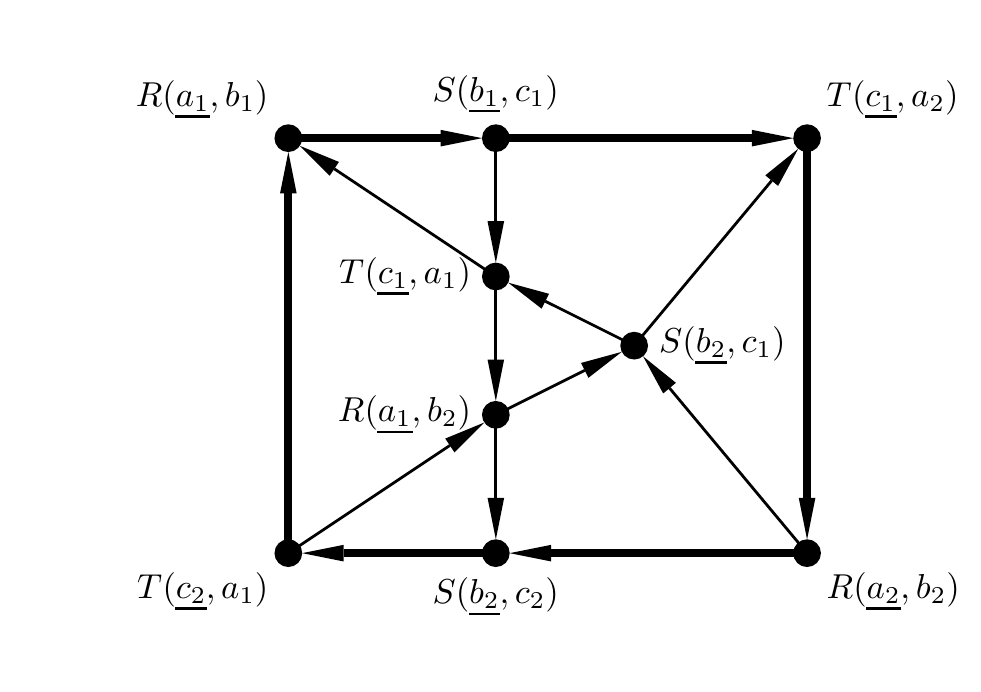}
\caption{$\cmhook{\qcycle}$-graph.}\label{fig:gthook}
\end{subfigure}
\newline
\begin{subfigure}[t]{\textwidth}\centering
\includegraphics[scale=0.8]{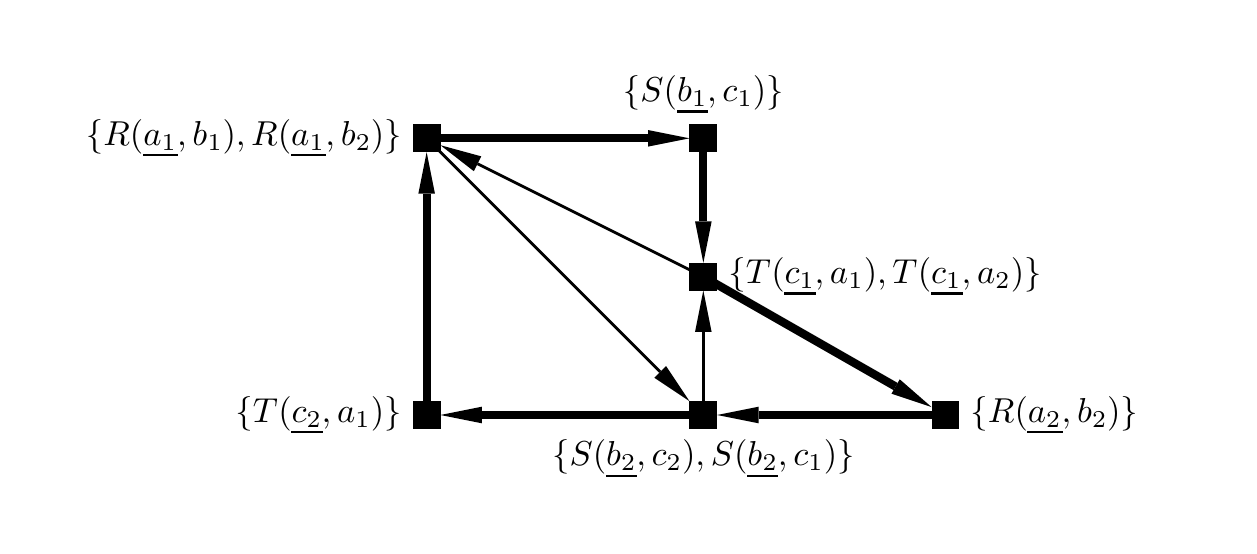}
\caption{Block-quotient graph.}\label{fig:gtbq}
\end{subfigure}
\caption{Examples of three different graphs used in this paper: \mgraph, $\cmhook{}$-graph, block-quotient graph. }\label{fig:gt}
\end{figure*}

In this paper, we prove the following main result.

\begin{theorem}[Main Theorem]\label{the:effectivedichotomy}
For every query $q$ in $\sjfbcq$,
\begin{itemize}
\item
if the attack graph of $q$ contains a strong cycle, then $\cqa{q}$ is $\coNP$-complete; and 
\item
if the attack graph of $q$ contains no strong cycle, 
then $\cqa{q}$ is expressible in $\ssdatalogmin$ (and is thus in $\logspace$).
\end{itemize}
\end{theorem}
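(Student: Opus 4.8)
The $\coNP$-completeness asserted in the first bullet is not new: it is precisely the third case of Theorem~\ref{the:trichotomy}, already proved in~\cite{DBLP:journals/tods/KoutrisW17}. Moreover, when the attack graph of $q$ is \emph{acyclic}, Theorem~\ref{the:trichotomy} gives $\cqa{q}\in\FO$, and $\FO\subseteq\ssdatalogmin$. So the whole content of the theorem reduces to one claim: \emph{if the attack graph of $q$ is cyclic but has no strong cycle, then $\cqa{q}$ is expressible in $\ssdatalogmin$}. I would prove this by a structural induction on the number of atoms of $q$ that repeatedly simplifies the instance using (i)~the classical first-order rewriting step for an unattacked atom and (ii)~the ``scrubbing'' of a cycle of the \mgraph.

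The first move is to reduce to saturated queries. By Section~\ref{sec:glimpse} every $\cqa{q}$ first-order reduces to some $\cqa{q'}$ with $q'$ saturated, and (one checks) this reduction preserves the hypothesis ``no strong cycle in the attack graph''. Because $\FO\subseteq\ssdatalogmin$ and because prepending finitely many non-recursive strata to a symmetric stratified Datalog program again yields such a program, it suffices to prove the claim for saturated $q$. For saturated $q$ I then invoke Lemma~\ref{lem:happy} to pass from the attack graph to the \mgraph: a strong cycle in the attack graph corresponds to an ``ill-behaved'' cycle in the \mgraph, so the hypothesis guarantees that every cycle of the \mgraph is of the kind that Section~\ref{sec:scrub} is able to handle, while every (necessarily weak) cycle of the attack graph is still witnessed by a cycle of the \mgraph.

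Now the induction. Pick an initial strong component $S$ of the attack graph of $q$. \textbf{Case 1: $S=\{F\}$ with $F$ unattacked.} Apply the standard one-step rewriting that eliminates $F$: on a database $\db$, $\cqa{q}$ is equivalent to $\cqa{q\setminus\{F\}}$ evaluated on a database obtained from $\db$ by a first-order transformation driven by $F$ (keep only the $F$-blocks all of whose facts are ``relevant'' and project $F$ away). The query $q\setminus\{F\}$ has one atom fewer and inherits the absence of strong cycles, so by induction $\cqa{q\setminus\{F\}}\in\ssdatalogmin$; prepending the first-order transformation keeps us in $\ssdatalogmin$. \textbf{Case 2: $S$ is nontrivial.} Since the attack graph has no strong cycle, $S$ is a weak cycle and, by Lemma~\ref{lem:happy}, it corresponds to a cycle $C$ of the \mgraph. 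Here I use the results of Section~\ref{sec:scrub}: the garbage set of $C$ --- the facts of $\db$ that can never take part in satisfying the subquery attached to $C$ --- can be computed, and removed from $\db$, by a symmetric stratified Datalog program (the $\min$-construct of~(\ref{eq:groupby}), used exactly as in Lemma~\ref{lem:toT}, is what lets the program deterministically pick one representative per equivalence class). On the scrubbed database the $\mhook$-graph no longer contains the data-level image of $C$, so the cyclic obstruction has disappeared: the residual instance either has a strictly simpler attack-graph structure, to which the induction applies, or has become first-order. In every case the program for $\cqa{q}$ is the stratified composition of the scrubbing strata, the first-order rewriting strata, and the inductively produced $\ssdatalogmin$ program --- hence again an $\ssdatalogmin$ program.

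The real work is concentrated in \textbf{Case 2}, i.e.\ in Section~\ref{sec:scrub}: showing that the garbage set of an \mgraph-cycle is not merely polynomial-time computable but \emph{symmetric}-stratified-Datalog computable. This is exactly where the weakness of the cycle matters. A weak attack $F\attacks{q}G$ means $\FD{q}\models\fd{\keyvars{F}}{\keyvars{G}}$, and around a weak cycle this holds in both directions, so the ``reachability'' between blocks that one has to follow to detect garbage is undirected-like and can be captured by symmetric recursive rules --- in contrast to a strong cycle, whose associated $\cqa{q}$ is $\coNP$-hard and thus certainly not expressible this way. Besides this, the subsidiary obstacles are bookkeeping: verifying that the saturation reduction and Lemma~\ref{lem:happy} transport the no-strong-cycle property faithfully, checking that scrubbing one \mgraph-cycle does not create a strong cycle elsewhere, and arranging the stratification so that the whole construction stays within $\ssdatalogmin$ and terminates (each peeling step strictly decreases the number of atoms, or the number of \mgraph-cycles that remain to be scrubbed).
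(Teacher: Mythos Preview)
Your high-level architecture matches the paper's: reduce to saturated queries via Lemma~\ref{lem:corsaturate}, do an induction, handle an unattacked atom by the standard first-order elimination, and otherwise use Lemma~\ref{lem:happy} to find an \mcycle inside an initial strong component and invoke Section~\ref{sec:scrub}. That much is right.

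The genuine gap is in your Case~2. Removing the garbage set for $C$ from the database does \emph{not} make ``the cyclic obstruction disappear.'' Scrubbing is a data-level operation; the attack graph is a schema-level object attached to the \emph{query} $q$, and $q$ is untouched, so its attack graph is exactly the same after scrubbing as before. Moreover, after scrubbing, the $\cmhook{C}$-graph is not empty: what survives is precisely the union of the relevant $1$-embeddings of $C$ (that is the whole point of Lemma~\ref{lem:scrubalgo}). So you cannot ``apply the induction'' to the scrubbed instance of the same query $q$---nothing got smaller. The step you are missing is Lemma~\ref{lem:toT}: after scrubbing, the surviving $1$-embeddings are \emph{re-encoded} into a fresh relation $T$ (and auxiliary consistent relations $N_i$), and the problem is reduced to $\cqa{(q\setminus C)\cup\{T\}\cup p}$ for a \emph{new} query in which the atoms of $C$ have been replaced by a single atom $T$. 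It is this change of query, not the scrubbing, that makes the induction go through; Lemma~\ref{lem:toT} also certifies that the new attack graph has no strong cycle. Relatedly, the $\min$ construct is not used to compute the garbage set (Lemma~\ref{lem:scrub} is plain $\ssdatalog$); it is used in Lemma~\ref{lem:toT} to pick a canonical identifier for each strong component when building the $T$-relation.

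A second, related issue is your induction measure. The reduction of Lemma~\ref{lem:toT} \emph{adds} atoms ($T$ and the $N_i$'s), so ``number of atoms of $q$'' need not decrease. The paper instead inducts on the number of atoms of mode~$\mi$: the cycle $C$ has $k\geq 2$ such atoms, $T$ contributes one, and the $N_i$'s are of mode~$\mc$, so the count strictly drops. Your termination argument (``each peeling step strictly decreases the number of atoms, or the number of \mgraph-cycles that remain to be scrubbed'') does not stand as written.
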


The above result is stronger than Theorem~\ref{the:dichotomy}, because it also provides an effective criterion for the dichotomy between $\coNP$-complete and expressibility in symmetric stratified Datalog.

Before we delve into the proof in the next sections, we start with a guided tour that introduces our approach 
in an informal way. The focus of this paper is a logspace algorithm for $\cqa{q}$ whenever $\cqa{q}$ is in $\P$ but not in $\FO$ (assuming $\P\neq\coNP$).
An exemplar query is  $\qcycle\defeq\{R(\underline{x},y)$, $S(\underline{y},z)$,  $T(\underline{z},x)\}$, which can be thought of as a cycle of length~$3$.
For the purpose of this example, let $q$ be a query in $\sjfbcq$ that includes $\qcycle$ as a subquery (i.e., $\qcycle\subseteq q$).

An important novel notion in this paper is the \mgraph of a query (see Section~\ref{sec:mgraph}).
The \mgraph of $\qcycle$ is shown in Fig.~\ref{fig:gtm}.
Informally, a directed edge from an atom $F$ to an atom $G$, denoted $F\markov G$, means that every variable that occurs in the primary key of $G$ occurs also in $F$. In Fig.~\ref{fig:gtm}, we have $T(\underline{z},x)\markov R(\underline{x},y)$, because $R$'s primary key (i.e., $x$) occurs in the $T$-atom; there is no edge from $R(\underline{x},y)$ to $T(\underline{z},x)$ because $z$ does not occur in the $R$-atom.
Intuitively, one can think of edges in the \mgraph as foreign-to-primary key joins.
In what follows, we focus on cycles in the \mgraph, called \mcycles.

Figure~\ref{fig:gthook} shows an \emph{instantiation} of the \mgraph, called  $\cmhook{\qcycle}$-graph (see Definitions~\ref{def:mhook} and~\ref{def:cmhook}).
We write $A\cmhook{\qcycle}B$ to denote an edge from fact $A$ to fact $B$.
Each triangle in the $\cmhook{\qcycle}$-graph of Fig.~\ref{fig:gthook} instantiates the query $\qcycle$; for example, the inner triangle is equal to $\theta(\qcycle)$ where $\theta$ is the valuation such that $\theta(xyz)=a_{1}b_{2}c_{1}$.
We call such a triangle a $1$-embedding (see Definition~\ref{def:cmhook}). 
Significantly, some edges are not part of any triangle.
For example, the edge $S(\underline{b_1},c_1)\cmhook{\qcycle}T(\underline{c_1},a_2)$ is not in a triangle, but is present because the primary key of $T(\underline{c_1},a_2)$ occurs in $S(\underline{b_1},c_1)$.

Let $\db$ be a database that is input to $\cqa{q}$ such that $\db$ contains (but is not limited to) all facts of Fig.~\ref{fig:gthook}.
Since $\qcycle$ is a subquery of $q$, $\db$ will typically contain other facts with relation names in $q\setminus\qcycle$.
Furthermore, $\db$ can contain $R$-facts, $S$-facts, and $T$-facts not shown in Fig.~\ref{fig:gthook}.
Then, $\db$ has at least $2^{3}=8$ repairs,
because Fig.~\ref{fig:gthook} shows two $R$-facts with primary key $a_1$, two $S$-facts with primary key $b_2$, and two $T$-facts with primary key~$c_1$.
Consider now the outermost elementary cycle (thick arrows) of length~$6$, i.e., the cycle using the vertices in $\rep\defeq\{R(\underline{a_1},b_1)$, $S(\underline{b_1},c_1)$, $T(\underline{c_1},a_2)$, $R(\underline{a_2},b_2)$, $S(\underline{b_2},c_2)$, $T(\underline{c_2},a_1)\}$, which will be called a $2$-embedding in Definition~\ref{def:cmhook}. 
One can verify that $\rep$ does not contain distinct key-equal facts and does not satisfy  $\qcycle$ (because the subgraph induced by $\rep$ has no triangle).
Let $\bfo$ be the database that contains $\rep$ as well as all facts of $\db$ that are key-equal to some fact in~$\rep$.
A crucial observation is that if $\db\setminus\bfo$ has a repair that falsifies $q$, then so has $\db$ (the converse is trivially true).
Indeed, if $\sep$ is a repair of $\db\setminus\bfo$ that falsifies $q$,
then $\sep\cup\rep$ is a repair of $\db$ that falsifies $q$.
Intuitively, we can add $\rep$ to $\sep$ without creating a triangle in the $\cmhook{\qcycle}$-graph (i.e., without making $\qcycle$ true, and thus without making $q$ true),
because the facts in $\rep$ form a cycle on their own and contain no outgoing $\cmhook{\qcycle}$-edges to facts in $\sep$. 
In Section~\ref{sec:garbageset}, the set $\bfo$ will be called a \emph{garbage set:} its facts can be thrown away without changing the answer to $\cqa{q}$.
Note that the $\cmhook{\qcycle}$-graph of Fig.~\ref{fig:gthook} contains other elementary cycles of length~$6$, which, however, contain distinct key-equal facts: for example, the cycle with vertices $R(\underline{a_1},b_1)$, $S(\underline{b_1},c_1)$, $T(\underline{c_1},a_1)$, $R(\underline{a_1},b_2)$, $S(\underline{b_2},c_2)$, $T(\underline{c_2},a_1)$ contains both $R(\underline{a_1},b_1)$ and $R(\underline{a_1},b_2)$.

Garbage sets thus arise from cycles in the $\cmhook{\qcycle}$-graph that \emph{(i)}~do not contain distinct key-equal facts, and \emph{(ii)}~are not triangles satisfying $\qcycle$.
To find such cycles, we construct the quotient graph of the $\cmhook{\qcycle}$-graph with respect to the equivalence relation ``is key-equal to.''
Since the equivalence classes with respect to ``is key-equal to'' are the \emph{blocks} of the database, we call this graph the \emph{block-quotient graph} (Definition~\ref{def:quotient}).
The block-quotient graph for our example is shown in Fig.~\ref{fig:gtbq}.
The vertices are database blocks; there is an edge from block $\block_{1}$ to $\block_{2}$ if the $\cmhook{\qcycle}$-graph contains an edge from some fact in $\block_{1}$ to some fact in $\block_{2}$.
The block-quotient graph contains exactly one elementary directed cycle of length~$6$ (thick arrows);
this cycle obviously corresponds to the outermost cycle of length~$6$ in the $\cmhook{\qcycle}$-graph.
A core result (Lemma~\ref{lem:longcycle}) of this article is a deterministic logspace algorithm for finding elementary cycles in the block-quotient graph whose lengths are strict multiples of the length of the underlying \mcycle.
In our example, since the \mcycle of $\qcycle$ has length~$3$, we are looking for cycles in the block-quotient graph of lengths $6,9,12,\dots$\ Note here that, since the $\cmhook{\qcycle}$-graph is tripartite, the length of any cycle in it must be a multiple of~$3$.
Our algorithm can be encoded in symmetric stratified Datalog.
This core algorithm is then extended to compute garbage sets (Lemma~\ref{lem:scrub}) for \mcycles.

In our example, $\qcycle$ is a subquery of $q$. In general, \mcycles will be subqueries of larger queries.
The facts that belong to the garbage set for an \mcycle can be removed,
but the other facts must be maintained for computations on the remaining part of the query,
and are stored in a new schema that replaces the relations in the \mcycle with a single relation (see Section~\ref{sec:eliminationofmcycle}).
In our example, this new relation has attributes for $x$, $y$, and $z$, and stores all triangles that are outside the garbage set for $\qcycle$.

We can now sketch our approach for dealing with queries $q$ such that $\cqa{q}$ is in $\P\setminus\FO$.
Lemma~\ref{lem:happy} tells us that such query $q$ will have an \mcycle involving two or more atoms of mode~$\mi$.
The garbage set of this \mcycle is then computed,
and the facts not in the garbage set will be stored in a single new relation of mode $\mi$ that replaces the \mcycle.
In this way, $\cqa{q}$ is reduced to a new problem $\cqa{q'}$, where $q'$ contains less atoms of mode $\mi$ than $q$.
Lemma~\ref{lem:toT} shows that this new problem will be in $\P$ and that our reduction can be expressed in symmetric stratified Datalog.
We can repeat this reduction until we arrive at a query $q''$ such that $\cqa{q''}$ is in $\FO$.

To conclude this guided tour, we point out the role of atoms of mode~$\mc$ in the computation of the \mgraph, which was not illustrated by our running example.
In the \mgraph of Fig.~\ref{fig:toT} \emph{(right)}, we have $S(\underline{y},z)\markov U(\underline{y,z,w},x)$, even though $w$ does not occur in the $S$-atom. The explanation is that the query also contains the consistent relation $T^{\mc}(\underline{z},w)$, which maps each $z$-value to a unique $w$-value. So, even though $w$ does not occur as such in $S(\underline{y},z)$, it is nevertheless uniquely determined by $z$. It is thus important to identify all relations of mode~$\mc$, which is the topic of the next section. 

\section{Saturated Queries}\label{sec:glimpse}


In this section, we show that we can safely extend a query $q$ with new consistent relations. To achieve this,
we need to identify a particular type of functional dependencies for $q$, which are called \emph{internal}.

\begin{definition}\label{def:saturated}
Let $q$ be a query in $\sjfbcq$. 
Let $\fd{Z}{w}$ be a functional dependency for $q$.
We say that $\fd{Z}{w}$ is {\em internal to $q$\/} if the following two conditions are satisfied:
\begin{enumerate}
\item
there exists a sequential proof for $\FD{q}\models\fd{Z}{w}$ such that no atom in the sequential proof attacks a variable in $Z\cup\{w\}$; and
\item
for some $F\in q$, $Z\subseteq\atomvars{F}$. 
\end{enumerate}
We say that $q$ is {\em saturated\/} if for every every functional dependency $\sigma$ that is internal to $q$, we have $\FD{\catoms{q}}\models\sigma$.
\end{definition}

\begin{example}
Assume $q=\{S_{1}(\underline{z},u)$, $S_{2}(\underline{u},w)$, $R_{1}(\underline{z},u')$, $R_{2}(\underline{u'},w)$, $T_{1}(\underline{u},v)$, $T_{2}(\underline{v},w)\}$.
We have that $\tuple{S_{1},S_{2}}$ is a sequential proof for $\FD{q}\models\fd{z}{w}$
in which neither $S_{1}$ nor $S_{2}$ attacks $z$ or $w$.\footnote{Note that we use relation names as a shorthand for atoms.}
Indeed, $S_{1}$ attacks neither $z$ nor $w$ because $z,w\in\keycl{S_{1}}{q}$.
$S_{2}$ attacks no variable because $\atomvars{S_{2}}\subseteq\keycl{S_{2}}{q}$.
It follows that the functional dependency $\fd{z}{w}$ is internal to~$q$.
\end{example}

The next key lemma shows that we can assume without loss of generality that every internal functional dependency $\fd{Z}{w}$ is satisfied, i.e., that every $Z$-value is mapped to a unique $w$-value.
Therefore, whenever $\fd{Z}{w}$ is internal,  we can safely extend $q$ with a new consistent relation $N^{\mc}(\underline{Z},w)$ that materializes the mapping from $Z$-values to $w$-values.
Continuing the above example, we would extend $q$ by adding a fresh atom $N^\mc(\underline{z},w)$.

\begin{lemma}\label{lem:corsaturate}
Let $q$ be a query in $\sjfbcq$.
It is possible to compute a query $q'$ in $\sjfbcq$ with the following properties:
\begin{enumerate} 
\item
there exists a first-order reduction from $\cqa{q}$ to $\cqa{q'}$;
\item
if the attack graph of $q$ contains no strong cycle,
then the attack graph of $\cqa{q'}$ contains no strong cycle; and
\item
$q'$ is saturated.
\end{enumerate}
\end{lemma}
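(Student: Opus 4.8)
The plan is to construct $q'$ by iteratively "saturating" $q$: as long as there is an internal functional dependency $\fd{Z}{w}$ that is not yet implied by $\FD{\catoms{q}}$, we pick an atom $F\in q$ with $Z\subseteq\atomvars{F}$ (which exists by condition~2 of internality) and add a fresh atom $N^{\mc}(\underline{Z},w)$ of mode~$\mc$ with $N$ a new relation name whose arity is $\card{Z}+1$ and whose primary key consists of the $\card{Z}$ positions holding $Z$. We must argue that this process terminates: since every newly added atom has its variables among $\queryvars{q}$ (the set of variables never grows, as $Z\cup\{w\}\subseteq\queryvars{q}$), there are only finitely many distinct atoms of mode~$\mc$ that can be added up to renaming of the relation name and reordering, so after finitely many steps the query becomes saturated. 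Call the final query $q'$. This already establishes property~3.

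For property~1, I would show that each single saturation step admits a first-order reduction, and then compose. Fix a step adding $N^{\mc}(\underline{Z},w)$ to obtain $q_{1}$ from $q_{0}$. Given a database $\db$ for $\cqa{q_{0}}$, the reduction produces a database $\db'$ for $\cqa{q_{1}}$ by keeping all facts of $\db$ and adding, for the new relation name $N$, exactly those facts $N(\underline{\vec{a}},b)$ such that $\db$ contains facts witnessing a sequential proof of $\fd{Z}{w}$ with $Z$ mapped to $\vec{a}$ and $w$ to $b$ — i.e., $N$ materializes the (partial) function computed by the sequential proof; this is first-order definable since a sequential proof has bounded length. The key correctness claim is: every repair of $\db$ satisfies $q_{0}$ iff every repair of $\db'$ satisfies $q_{1}$. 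The direction from $\db'$ to $\db$ is immediate since dropping the $N$-facts turns a repair of $\db'$ satisfying $q_{1}$ into a repair of $\db$ satisfying $q_{0}$, and conversely repairs of $\db$ extend uniquely (the $N$-relation is already consistent and its facts are disjoint from $\db$). I should verify that this reduction is well-defined even when the "function" materialized by a single sequential proof is not actually functional on $\db$ — here is where condition~1 of internality (no atom in the sequential proof attacks a variable in $Z\cup\{w\}$) together with Lemma-level facts about attacks will be needed to guarantee that consistency-relevant inconsistencies cannot sabotage the equivalence; concretely, the standard tool is that if $F\nattacks{q}x$ then the value of $x$ is "frozen" across repairs in a suitable sense, so adding $N$ does not eliminate any repair that falsifies $q_{0}$.

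For property~2, I would show that adding an internal atom $N^{\mc}(\underline{Z},w)$ cannot create a strong cycle in the attack graph. The intuition is that $N^{\mc}$ has mode~$\mc$, so its functional dependency $\fd{Z}{w}$ is added to $\FD{\catoms{q_{1}}}$, which only \emph{enlarges} key closures $\keycl{F}{q_{1}}\supseteq\keycl{F}{q_{0}}$ for $F\in q_{0}$; enlarging key closures can only \emph{remove} attack edges among the old atoms, never add them. It remains to control attacks into and out of the new atom $N^{\mc}$. Since $N^{\mc}$ is of mode~$\mc$ with $Z\subseteq\atomvars{F}$ for some old $F$, and since internality bounds which variables $N^{\mc}$ can be "involved with," one shows that any attack $G\attacks{q_{1}}N$ or $N\attacks{q_{1}}G$ is weak, using $\FD{q_{1}}\models\fd{\keyvars{N}}{\keyvars{G}}$ when relevant and the condition that no atom of the sequential proof attacks a variable in $Z\cup\{w\}$. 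Hence no new strong cycle through $N$ is created. Composing over all steps, if the attack graph of $q$ has no strong cycle then neither does that of $q'$.

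The main obstacle I expect is property~1 — specifically, verifying that materializing the sequential-proof function as a \emph{consistent} relation $N^{\mc}$ does not spuriously kill a repair of $\db$ that falsifies $q_{0}$. This is exactly the point at which the hypothesis "no atom in the sequential proof attacks a variable in $Z\cup\{w\}$" does the real work, and turning that syntactic condition into the semantic statement "in every repair, the $Z$-to-$w$ map induced by the proof is single-valued and repair-independent" will require the attack-graph machinery of \cite{DBLP:journals/tods/KoutrisW17}; I would isolate this as a separate claim before assembling the reduction.
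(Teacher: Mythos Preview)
Your overall scheme---iterate a single-step saturation that adds one atom $N^{\mc}(\underline{\vec{z}},w)$ at a time---is exactly the paper's approach (the paper isolates the single step as a helper Lemma~\ref{lem:saturate} and proves Lemma~\ref{lem:corsaturate} by repeated application). Your termination argument is fine. But two of your single-step arguments have genuine gaps.

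\medskip
\noindent\textbf{Property~1.} Your reduction, as described, does not produce a legal instance. You materialize $N$ by collecting all pairs $(\vec{a},b)$ witnessed by the sequential proof in $\db$; on an inconsistent $\db$ this can yield $N(\underline{\vec{a}},b)$ and $N(\underline{\vec{a}},b')$ with $b\neq b'$, which is forbidden since $N$ has mode~$\mc$. You flag this as the main obstacle, but your suggested resolution (``in every repair the $Z$-to-$w$ map is single-valued and repair-independent'') is not what internality gives you---the map need \emph{not} be single-valued on $\db$, and it is \emph{not} repair-independent in general. The paper's missing idea is a first-order \emph{purification} preprocessing: whenever two valuations $\beta_{1},\beta_{2}$ with $\beta_{1}(q),\beta_{2}(q)\subseteq\db$ agree on $Z$ but disagree on $w$, delete the entire $F$-blocks containing $\beta_{1}(F)$ and $\beta_{2}(F)$ (where $F$ is an atom with $Z\subseteq\atomvars{F}$). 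One then proves, by an inductive repair-swapping argument along the sequential proof $F_{1},\dots,F_{\ell}$ and using that no $F_{i}$ attacks a variable in $Z\cup\{w\}$, that these deletions preserve the answer to $\cqa{q}$. Only after purification is $N$ materialized, and then it is consistent by construction.

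\medskip
\noindent\textbf{Property~2.} Your claim that ``enlarging key closures can only remove attack edges among the old atoms, never add them'' overlooks that $N^{\mc}$ is a new vertex that can sit \emph{in the interior} of a witness for $F\attacks{q'}H$ with $F,H\in q$, potentially creating an attack among old atoms that was absent in $q$. You only treat $N^{\mc}$ as an endpoint. The paper closes this gap by showing that every witness containing $\dots F'\step{u'}N^{\mc}\step{u''}F''\dots$ can be rerouted through old atoms: in the nontrivial case where one of $u',u''$ equals $w$, one walks backward through the sequential proof $J_{1},\dots,J_{\ell}$ for $\fd{Z}{w}$ (none of whose atoms is $F$, since each $J_{i}$ does not attack any variable in $Z\cup\{w\}$ while $F$ does attack $u'$), producing a replacement path through the $J_{i}$'s and the atom $G$ with $Z\subseteq\atomvars{G}$. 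This rerouting argument is the substantive content you are missing.
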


\section{\mGraphs and $\mhook$-Graphs}\label{sec:mgraph}

In this section, we introduce the {\em \mgraph\/} of a query $q$ in $\sjfbcq$,
which is a generalization of the notion of Markov-graph introduced in~\cite{DBLP:journals/tods/KoutrisW17} (hence the use of the letter \mgraphsymbol).
An important new result, Lemma~\ref{lem:happy}, expresses a relationship between attack graphs and \mgraphs.
Finally, we define $\mhook$-graphs, which can be regarded as data-level instantiations of \mgraphs. 

\begin{figure*}\centering
\begin{tabular}{cc}
\includegraphics[scale=0.8,trim={0 0 1cm 0},clip]{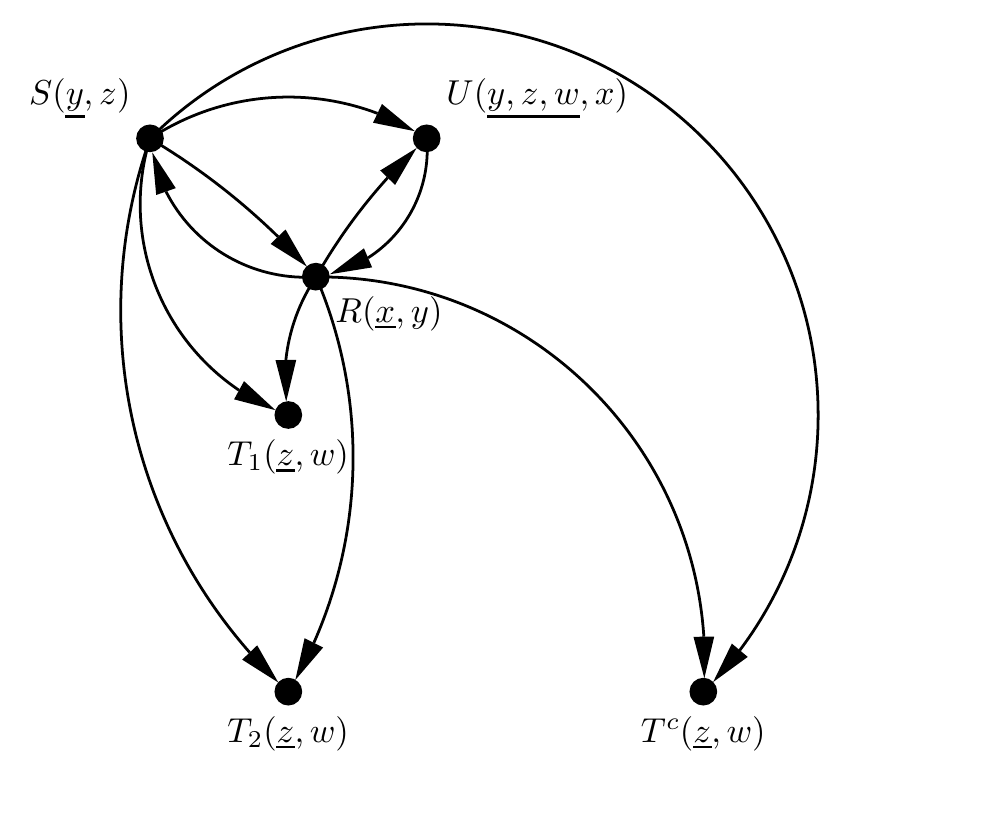}
&
\includegraphics[scale=0.8,trim={0 0 1cm 0},clip]{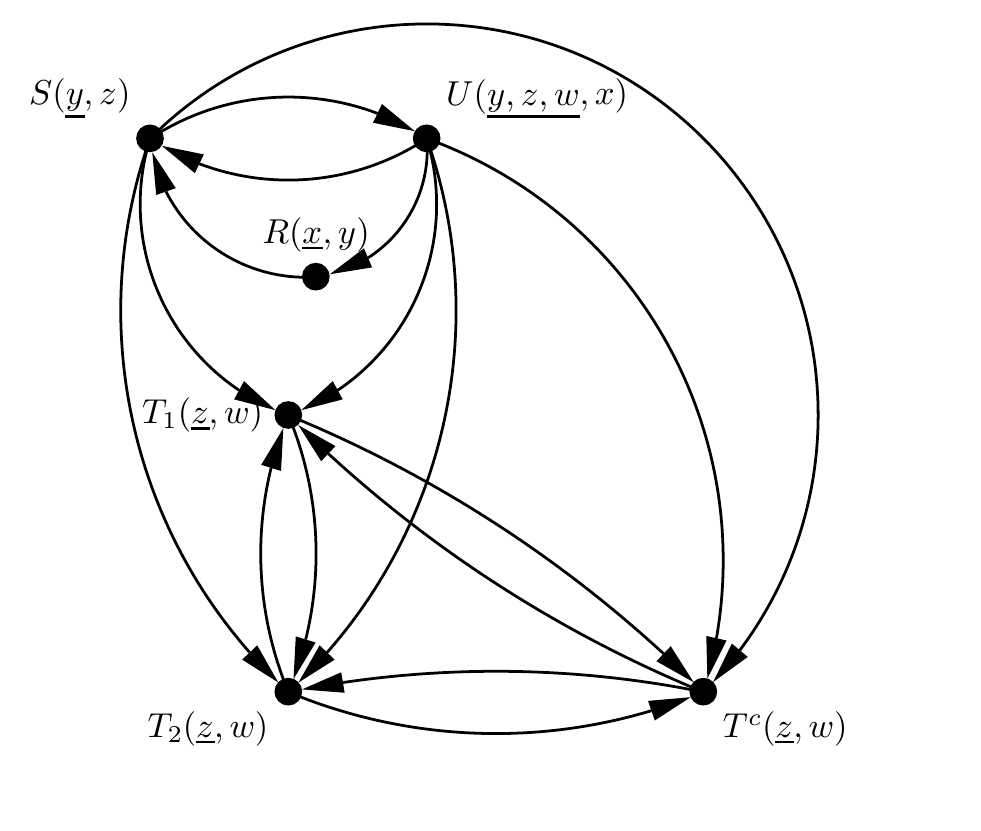}\\
Attack Graph
&
\mGraph
\end{tabular}
\vspace{-2ex}
\caption{Attack graph \emph{(left)} and \mgraph \emph{(right)} of the same query 
$\qpruning=\{R(\underline{x},y)$, $S(\underline{y},z)$, $U(\underline{y,z,w},x)$, 
$T_{1}(\underline{z},w)$, $T_{2}(\underline{z},w)$, $T^{\mc}(\underline{z},w)\}$.
It can be verified that all attacks are weak and that the query is saturated. 
The attack graph has an initial strong component containing three atoms ($R$, $S$, and $U$).
As predicted by Lemma~\ref{lem:happy}, the subgraph of the  \mgraph induced by $\{R,S,U\}$ is cyclic. 
}\label{fig:pruning}
\end{figure*}

\begin{definition}\label{def:mbis}
Let $q$ be a query in $\sjfbcq$ (which need not be saturated).
The {\em \mgraph\/} of $q$ is a directed graph whose vertices are the atoms of $q$.
There is a directed edge from $F$ to $G$ ($F\neq G$), denoted $F\markov G$,
if $\FD{\catoms{q}}\models\fd{\atomvars{F}}{\keyvars{G}}$.
A cycle in the \mgraph is called an \mcycle.
\end{definition}

\begin{example}
The notion of \mgraph is illustrated by Fig.~\ref{fig:pruning}.
We have $\catoms{\qpruning}=\{\fd{z}{w}\}$.
Since $\FD{\catoms{\qpruning}}\models\fd{\atomvars{S}}{\keyvars{U}}$, the \mgraph has a directed edge from $S$ to $U$.
\end{example}


\begin{lemma}\label{lem:happy}
Let $q$ be a query in $\sjfbcq$ such that $q$ is saturated and the attack graph of $q$ contains no strong cycle.
Let $\isc$ be an initial strong component in the attack graph of $q$ with $\card{\isc}\geq 2$.
Then, the \mgraph of $q$ contains a cycle all of whose atoms belong to $\isc$.
\end{lemma}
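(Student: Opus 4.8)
\textbf{Proof plan for Lemma~\ref{lem:happy}.}

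The plan is to work inside the initial strong component $\isc$ and exploit two facts about it: first, that $\isc$ is strongly connected in the attack graph, so every atom in $\isc$ has an incoming attack from within $\isc$; and second, that because $\isc$ is \emph{initial}, no attack enters $\isc$ from outside. The key object to analyze is the relationship, for a pair $F, G \in \isc$, between "$F$ attacks $G$" and "$F \markov G$", i.e. between $\keycl{F}{q}$ not containing the join variable on the witness path and $\FD{\catoms{q}} \models \fd{\atomvars{F}}{\keyvars{G}}$. The strategy is: start from an arbitrary atom $F_0 \in \isc$ and build a sequence of atoms in $\isc$ along \mgraph-edges; since $\isc$ is finite, this sequence must eventually repeat an atom, yielding an \mcycle inside $\isc$. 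So the heart of the matter is to show that from each atom $F \in \isc$ there is an outgoing \mgraph-edge to \emph{some} atom of $\isc$.

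Concretely, I would first fix an atom $G \in \isc$ and show it has an incoming \mgraph-edge from within $\isc$. Since $\isc$ is strongly connected with $\card{\isc} \geq 2$, there is an attack $F \attacks{q} G$ with $F \in \isc$, witnessed by a sequence $F = F_0 \step{x_1} F_1 \cdots \step{x_\ell} F_\ell = G$ with each $x_i \notin \keycl{F}{q}$. I want to conclude $\FD{\catoms{q}} \models \fd{\atomvars{F}}{\keyvars{G}}$. The idea is to consider the key variables of $G$ and split them: a variable $z \in \keyvars{G}$ is either in $\keycl{F}{q}$ or not. If some $z \in \keyvars{G}$ were not in $\keycl{F}{q}$, then I would argue (using the witness path and the fact that $z$ appears in $G = F_\ell$, hence $F$ attacks $z$ as well, or using the weakness of the attack $F \attacks{q} G$) that this contradicts either the no-strong-cycle hypothesis or the structure of $\isc$. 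In particular, since all attacks within $\isc$ are weak (a strong attack inside a strong component would create a strong cycle, contradicting the hypothesis via \cite[Lemma~3.6]{DBLP:journals/tods/KoutrisW17}), we have $\FD{q} \models \fd{\keyvars{F}}{\keyvars{G}}$; this gives us a sequential proof of $\fd{\keyvars{F}}{\keyvars{G}}$ from $\FD{q}$, and saturation plus initiality of $\isc$ should let us upgrade this to a sequential proof using only atoms that do not attack the variables involved, hence show the dependency is internal to $q$, hence (by saturation) entailed by $\FD{\catoms{q}}$. Strengthening $\keyvars{F}$ to $\atomvars{F}$ on the left-hand side is free since a larger left-hand side only makes entailment easier. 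That yields $F \markov G$.

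The main obstacle I anticipate is exactly this upgrade step: turning "$\FD{q} \models \fd{\keyvars{F}}{\keyvars{G}}$" (which follows from weakness) into "$\fd{\keyvars{F}}{\keyvars{G}}$ is internal to $q$" so that saturation applies. The first condition of internality requires a sequential proof in which no atom attacks a variable in $Z \cup \{w\}$; here $Z = \keyvars{F}$ (which sits inside $\atomvars{F}$, so condition~2 of Definition~\ref{def:saturated} is immediate) and $w$ ranges over $\keyvars{G}$. The key leverage is that $\isc$ is an \emph{initial} strong component: any atom that attacks a variable appearing in $F$ or $G$ would, I expect, have to lie in $\isc$ itself or be a predecessor of it; initiality rules out predecessors, and one then has to check that atoms inside $\isc$ appearing in the sequential proof can be rerouted or that the attacks they carry are harmless because they are weak. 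I would handle this by a careful case analysis on which atoms can appear in a minimal-length sequential proof of $\fd{\keyvars{F}}{\keyvars{G}}$, pruning any atom that attacks a relevant variable and showing the remaining atoms still suffice — essentially mimicking the closure computation while staying outside the "attacked" variables. Once the outgoing-\mgraph-edge claim is established for every atom of $\isc$, the pigeonhole argument producing a cycle within $\isc$ is routine and closes the proof.
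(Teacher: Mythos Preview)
Your overall strategy---show that every atom of $\isc$ has an outgoing \mgraph-edge to some atom of $\isc$, then apply pigeonhole---is exactly the paper's approach (via its helping Lemma~\ref{lem:extend}). The gap is in how you propose to establish the outgoing edge.

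You aim to show $F \markov G$ for the \emph{same} $G$ that $F$ attacks, by arguing that $\fd{\atomvars{F}}{w}$ is internal for each $w \in \keyvars{G}$. This fails in general. Take $q = \{R(\underline{x},y),\, S(\underline{y},z),\, T(\underline{z},x)\}$ with all atoms of mode~$\mi$; then $\isc = q$, and $R \attacks{q} T$ weakly, but $R \nmarkov T$ since $\FD{\catoms{q}} = \emptyset \not\models \fd{\{x,y\}}{z}$. The only sequential proof of $\fd{\{x,y\}}{z}$ uses $S$, and $S$ attacks $z$; so the dependency is not internal and saturation buys you nothing. Your proposed ``pruning'' cannot help here: $S$ is indispensable to the derivation, it lies in $\isc$, and it attacks the target variable. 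More generally, any sequential proof of $\fd{\atomvars{F}}{\keyvars{G}}$ may be forced to pass through atoms of $\isc$, and those atoms (being in a strong attack component with $F$ and $G$) will typically attack the variables you need to protect.

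The missing idea is to \emph{change the target}. The paper takes a sequential proof $H_1, \ldots, H_\ell$ for $\FD{q \setminus \{F\}} \models \fd{\atomvars{F}}{\keyvars{G}}$, lets $j$ be the \emph{smallest} index with $H_j \in \isc$, and proves $F \markov H_j$ rather than $F \markov G$. The prefix $H_1, \ldots, H_{j-1}$ then lies entirely outside $\isc$; since $\isc$ is initial, none of these atoms can attack any variable in $\atomvars{F} \cup \keyvars{H_j}$, so $\fd{\atomvars{F}}{w}$ is internal for each $w \in \keyvars{H_j}$, and saturation applies. In the example above this yields $R \markov S$ (via the empty prefix), which is correct. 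Once you have this version of the outgoing-edge claim, your pigeonhole finish goes through unchanged.
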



Given a query $q$, every database that instantiates the schema of $q$ naturally gives rise to an instantiation of the $\markov$-edges in $q$'s \mgraph, in a way that is captured by the following definition.

\begin{definition}\label{def:mhook}
The following notions are defined relative to a query $q$ in $\sjfbcq$ and a database $\db$.
The {\em $\mhook$-graph\/} of $\db$  is a directed graph whose vertices are the atoms of $\db$.
There is a directed edge from $A$ to $B$, denoted $A\mhook B$, if there exists a valuation $\theta$ over $\queryvars{q}$ and an edge $F\markov G$ in the \mgraph of $q$ such that $\theta(q)\subseteq\db$,  $A=\theta(F)$, and $B\keyequal\theta(G)$. 
A cycle in the $\mhook$-graph is also called a $\mhook$-cycle.
\end{definition}

\begin{figure}\centering
\begin{tabular}{cc}
\begin{small}
$
\setlength{\arraycolsep}{1ex}
\begin{array}{cc}
\begin{array}{c|cc}
S & \underline{y} & z\bigstrut\\\cline{2-3}
  & I & 1\bigstrut\\
  & I & 2\\
  & I & 3\\\cdashline{2-3}
\end{array}
&
\begin{array}{c|cc}
T^{\mc}& \underline{z} & w\\\cline{2-3}
  & 1 & a\bigstrut\\\cdashline{2-3}
  & 2 & b\bigstrut\\\cdashline{2-3}
\end{array}
\\
\\
\begin{array}{c|*{4}{c}}
U & \underline{y} & \underline{z} & \underline{w} & x\bigstrut\\\cline{2-5}
  & I & 1 & a & \chi\bigstrut\\\cdashline{2-5}
  & I & 2 & b & \chi\bigstrut\\\cdashline{2-5}
\end{array}
&
\begin{array}{c|cc} 
R & \underline{x} & y\bigstrut\\\cline{2-3}
  & \chi & I\bigstrut\\\cdashline{2-3}
\end{array}
\end{array}
$
\end{small}
&
\raisebox{-18ex}{
\includegraphics[scale=0.8]{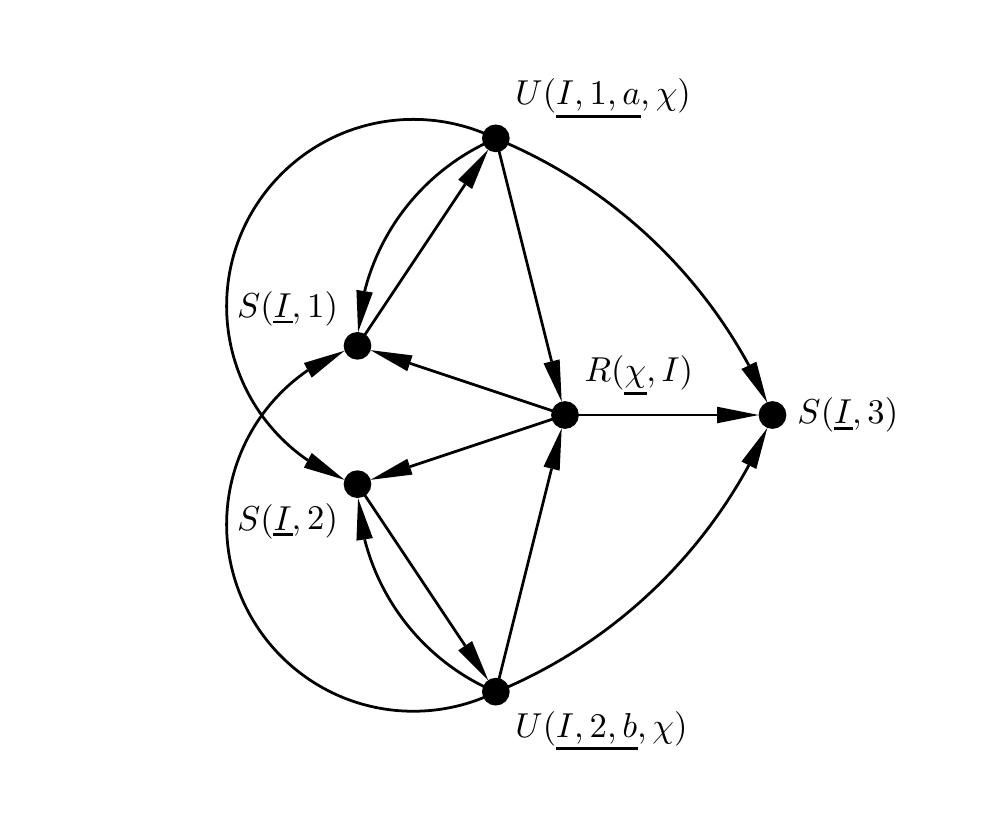}}
\end{tabular}
\vspace{-2ex}
\caption{
\emph{Left:} Database that is input to $\cqa{\qpruning}$ for the query $\qpruning$ in Fig.~\ref{fig:pruning}.
The relations for $T_{1}$ and $T_{2}$, which are identical to the relation for $T^{\mc}$, have been omitted.
\emph{Right:}
The $\mhook$-graph from which, for readability reasons,
$T_{1}$-facts, $T_{2}$-facts, and $T^{\mc}$-facts have been omitted.
}\label{fig:mhookgraph}
\end{figure}

The notion of $\mhook$-graph is illustrated by Fig.~\ref{fig:mhookgraph}.


\begin{lemma}\label{lem:hockey}
Let $q\in\sjfbcq$ and let $\db$ be a database.
Let $A,B\in\db$ and $F,G\in q$.
\begin{enumerate}
\item
if $A\mhook B$, then $A\mhook B'$ for all $B'\in\theblock{B}{\db}$; 
\item
if $A\mhook B$ and  $A\mhook B'$ and $\qatom{B}{q}=\qatom{B'}{q}$, then $B\keyequal B'$.
\end{enumerate}
\end{lemma}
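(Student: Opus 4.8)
The plan is to establish the two items separately; item~1 is immediate from the definition, and only item~2 carries content.

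\medskip
\noindent\emph{Item 1.} This is essentially a restatement of Definition~\ref{def:mhook}. From $A\mhook B$ we obtain a valuation $\theta$ over $\queryvars{q}$ and an \mgraph-edge $F\markov G$ with $\theta(q)\subseteq\db$, $A=\theta(F)$ and $B\keyequal\theta(G)$. For any $B'\in\theblock{B}{\db}$ we have $B'\keyequal B$ by the definition of a block, hence $B'\keyequal\theta(G)$ by transitivity of $\keyequal$; so the very same $\theta$ and edge $F\markov G$ witness $A\mhook B'$.

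\medskip
\noindent\emph{Item 2, setup.} Unfold $A\mhook B$ and $A\mhook B'$ into witnesses $(\theta_{1},F_{1}\markov G_{1})$ and $(\theta_{2},F_{2}\markov G_{2})$ as above. Since $A=\theta_{1}(F_{1})=\theta_{2}(F_{2})$, the atoms $F_{1}$ and $F_{2}$ both carry the relation name of $A$, and since $q$ is self-join-free they are the same atom $F$. Likewise $B\keyequal\theta_{1}(G_{1})$ and $B'\keyequal\theta_{2}(G_{2})$ force $\qatom{B}{q}=G_{1}$ and $\qatom{B'}{q}=G_{2}$, so the hypothesis $\qatom{B}{q}=\qatom{B'}{q}$ gives $G_{1}=G_{2}=:G$. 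Hence both witnesses use one and the same \mgraph-edge $F\markov G$, which by Definition~\ref{def:mbis} means $\FD{\catoms{q}}\models\fd{\atomvars{F}}{\keyvars{G}}$.

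\medskip
\noindent\emph{Item 2, core.} Now specialise the semantics of $\models$ to the two-element set $\calV=\{\theta_{1},\theta_{2}\}$ of valuations over $\queryvars{q}$. The crux is that $\calV$ satisfies $\FD{\catoms{q}}$: for $H\in\catoms{q}$ with $\theta_{1}[\keyvars{H}]=\theta_{2}[\keyvars{H}]$, the facts $\theta_{1}(H)$ and $\theta_{2}(H)$ have the same relation name (that of $H$, which is of mode~$\mc$) and the same primary-key value, i.e.\ they are key-equal; as both lie in $\db$ (being in $\theta_{1}(q)$ and $\theta_{2}(q)$, respectively) and $\db$ contains no two distinct key-equal facts whose relation name is of mode~$\mc$, we get $\theta_{1}(H)=\theta_{2}(H)$, hence $\theta_{1}[\atomvars{H}]=\theta_{2}[\atomvars{H}]$. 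Therefore $\calV$ satisfies $\fd{\atomvars{F}}{\keyvars{G}}$. But $A=\theta_{1}(F)=\theta_{2}(F)$ gives $\theta_{1}[\atomvars{F}]=\theta_{2}[\atomvars{F}]$, so $\theta_{1}[\keyvars{G}]=\theta_{2}[\keyvars{G}]$, i.e.\ $\theta_{1}(G)\keyequal\theta_{2}(G)$. Finally $B\keyequal\theta_{1}(G)\keyequal\theta_{2}(G)\keyequal B'$, which yields $B\keyequal B'$.

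\medskip
\noindent\emph{Where the difficulty lies.} There is no serious obstacle. Item~1 is a one-line transitivity argument, and item~2 is a careful but routine application of the entailment $\FD{\catoms{q}}\models\fd{\atomvars{F}}{\keyvars{G}}$ to the single pair of valuations $\{\theta_{1},\theta_{2}\}$. The two points not to overlook are: (i)~using self-join-freeness to collapse $F_{1},F_{2}$ to a common atom, together with the key-equality and $\qatom{\cdot}{q}$ bookkeeping to collapse $G_{1},G_{2}$; and (ii)~recognising that it is precisely the mode-$\mc$ consistency requirement in the definition of a database that makes $\{\theta_{1},\theta_{2}\}$ a model of $\FD{\catoms{q}}$, which is what licenses invoking the entailment on this particular set of valuations.
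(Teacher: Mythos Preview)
Your proof is correct and follows the same approach as the paper's: both unfold the witnesses for $A\mhook B$ and $A\mhook B'$, use self-join-freeness and the hypothesis $\qatom{B}{q}=\qatom{B'}{q}$ to identify a common edge $F\markov G$, and then apply $\FD{\catoms{q}}\models\fd{\atomvars{F}}{\keyvars{G}}$ to the pair $\{\theta_{1},\theta_{2}\}$. You are simply more explicit than the paper about why the two witnesses collapse to the same edge and why $\{\theta_{1},\theta_{2}\}$ indeed satisfies $\FD{\catoms{q}}$ (via the mode-$\mc$ consistency of $\db$), points the paper leaves implicit.
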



\section{Garbage Sets}\label{sec:garbageset}

Let $\db$ be a database that is an input to $\cqa{q}$ with $q\in\sjfbcq$.
In this section, we show that it is generally possible to downsize $\db$ by deleting blocks from it without changing the answer to $\cqa{q}$.
That is, if the downsized database has a repair falsifying $q$, then so does the original database (the converse holds trivially true).
Intuitively, the deleted blocks can be considered as ``garbage'' for the problem $\cqa{q}$.

\begin{definition}\label{def:garbage}
The following definition is relative to a fixed query~$q$ in $\sjfbcq$.
Let $q_{0}\subseteq q$.
Let $\db$ be a database.
We say that a subset $\bfo$ of $\db$ is a {\em garbage set for $q_{0}$ in $\db$\/} if the following conditions are satisfied:
\begin{enumerate}
\item
for every $A\in\bfo$, we have that $\qatom{A}{q}\in q_{0}$ and $\theblock{A}{\db}\subseteq\bfo$; and
\item
there exists a repair $\rep$ of $\bfo$ such that for every valuation $\theta$ over $\queryvars{q}$, if $\theta(q)\subseteq\formula{\db\setminus\bfo}\cup\rep$,
then $\theta(q_{0})\cap\rep=\emptyset$ (and thus $\theta(q_{0})\cap\bfo=\emptyset$).
\end{enumerate}
\end{definition}

The first condition in the above definition says that the relation names of facts in $\bfo$ must occur in $q_{0}$, and that every block of $\db$ is either included in or disjoint with $\bfo$.
The second condition captures the crux of the definition and was illustrated in Section~\ref{sec:gt}.


We now show a number of useful properties of garbage sets that are quite intuitive.
In particular, by Lemma~\ref{lem:closedunderunion}, there exists a unique maximum (with respect to~$\subseteq$) garbage set for $q_{0}$ in $\db$, which will be called {\em the maximum garbage set for $q_{0}$ in $\db$\/}. 

\begin{lemma}\label{lem:closedunderunion}
Let $q$ be a query in $\sjfbcq$, and let $q_{0}\subseteq q$.
Let $\db$ be a database.
If $\bfo_{1}$ and $\bfo_{2}$ are garbage sets for $q_{0}$ in $\db$,
then $\bfo_{1}\cup\bfo_{2}$ is a garbage set for $q_{0}$ in $\db$.  
\end{lemma}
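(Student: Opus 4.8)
The plan is to prove closure under union directly from the definition of garbage set. Condition~1 of Definition~\ref{def:garbage} is immediate for $\bfo_{1}\cup\bfo_{2}$: if $A\in\bfo_{1}\cup\bfo_{2}$, say $A\in\bfo_{1}$, then $\qatom{A}{q}\in q_{0}$ and $\theblock{A}{\db}\subseteq\bfo_{1}\subseteq\bfo_{1}\cup\bfo_{2}$, and symmetrically for $\bfo_{2}$. So the work is entirely in producing a witnessing repair $\rep$ of $\bfo_{1}\cup\bfo_{2}$ for condition~2. The natural candidate is $\rep\defeq\rep_{1}\cup\rep_{2}$, where $\rep_{i}$ is the repair of $\bfo_{i}$ guaranteed by the fact that $\bfo_{i}$ is a garbage set. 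First I would check that $\rep_{1}\cup\rep_{2}$ really is a repair of $\bfo_{1}\cup\bfo_{2}$: it is consistent because on the overlap $\bfo_{1}\cap\bfo_{2}$, condition~1 forces both $\bfo_{1}$ and $\bfo_{2}$ to contain entire blocks, so $\bfo_{1}\cap\bfo_{2}$ is itself a union of blocks; within each such block $\rep_{1}$ and $\rep_{2}$ each pick exactly one fact — but here is a subtlety: there is no a priori reason they pick the \emph{same} fact. So the honest candidate repair needs to be defined more carefully, e.g.\ take $\rep\defeq\rep_{1}\cup(\rep_{2}\setminus\bfo_{1})$, which picks the $\rep_{1}$-representative on blocks inside $\bfo_{1}$ and the $\rep_{2}$-representative on blocks of $\bfo_{2}$ outside $\bfo_{1}$. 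This is a maximal consistent subset of $\bfo_{1}\cup\bfo_{2}$, hence a repair.

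Next I would verify condition~2 for this $\rep$. Suppose $\theta$ is a valuation over $\queryvars{q}$ with $\theta(q)\subseteq(\db\setminus(\bfo_{1}\cup\bfo_{2}))\cup\rep$; we must show $\theta(q_{0})\cap\rep=\emptyset$. The idea is to play the two garbage-set properties off each other. Consider the facts of $\theta(q_{0})$ and split them according to which $\bfo_{i}$ their blocks lie in. Since $\rep\subseteq\rep_{1}\cup\rep_{2}$ and $\db\setminus(\bfo_{1}\cup\bfo_{2})\subseteq\db\setminus\bfo_{1}$, we have $\theta(q)\subseteq(\db\setminus\bfo_{1})\cup\rep_{1}\cup(\rep_{2}\setminus\bfo_{1})$. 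The facts in $\rep_{2}\setminus\bfo_{1}$ that actually belong to blocks of $\db$ lie either in $\db\setminus\bfo_{1}$ (if their block is outside $\bfo_{1}$) — so they are already absorbed — hence in fact $\theta(q)\subseteq(\db\setminus\bfo_{1})\cup\rep_{1}$. By the garbage-set property of $\bfo_{1}$ with repair $\rep_{1}$, this gives $\theta(q_{0})\cap\rep_{1}=\emptyset$, so in particular no fact of $\theta(q_{0})$ lies in a block contained in $\bfo_{1}$. Therefore every fact of $\theta(q_{0})$ lies outside $\bfo_{1}$, so $\theta(q)\subseteq(\db\setminus\bfo_{1})\cup\rep$ with the $\rep_{1}$-part irrelevant to $\theta(q_{0})$; restricting attention to $\bfo_{2}$, we get $\theta(q)\subseteq(\db\setminus\bfo_{2})\cup\rep_{2}$ on the relevant blocks, and the garbage-set property of $\bfo_{2}$ yields $\theta(q_{0})\cap\rep_{2}=\emptyset$. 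Combining, $\theta(q_{0})\cap\rep\subseteq\theta(q_{0})\cap(\rep_{1}\cup\rep_{2})=\emptyset$.

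The main obstacle — and the step that needs the most care — is the middle argument: making precise that ``$\theta(q)$ avoids $\bfo_{1}$ entirely once we know $\theta(q_{0})\cap\rep_{1}=\emptyset$.'' The point is that a fact $\theta(F)\in\theta(q)$ with $F\notin q_{0}$ cannot lie in $\bfo_{1}$ at all, by condition~1 (relation names in $\bfo_{1}$ occur only in $q_{0}$); and a fact $\theta(F)\in\theta(q)$ with $F\in q_{0}$ lying in $\bfo_{1}$ would have to lie in $\rep$ (since $\theta(q)\subseteq(\db\setminus\bfo_{1}\cup\bfo_{2})\cup\rep$ and it is in $\bfo_{1}\subseteq\bfo_{1}\cup\bfo_{2}$), hence in $\rep_{1}$ on that block, contradicting $\theta(q_{0})\cap\rep_{1}=\emptyset$. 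Once this ``$\theta(q)$ is disjoint from $\bfo_{1}$'' claim is nailed down, everything else is bookkeeping with blocks and restrictions of $\theta$. I would also double-check the edge cases where a block sits in $\bfo_{1}\cap\bfo_{2}$, to confirm the asymmetric definition $\rep=\rep_{1}\cup(\rep_{2}\setminus\bfo_{1})$ causes no trouble in either half of the argument (it does not, precisely because on $\bfo_{1}\cap\bfo_{2}$ we use only the $\bfo_{1}$-property).
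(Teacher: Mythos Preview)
Your proposal is correct and follows essentially the same approach as the paper's proof: the paper also takes the witnessing repair to be $\rep_{1}\uplus(\rep_{2}\setminus\bfo_{1})$, first shows $\theta(q)\subseteq(\db\setminus\bfo_{1})\cup\rep_{1}$ to deduce $\theta(q_{0})\cap\bfo_{1}=\emptyset$, and then uses this to reduce to the garbage condition for $\bfo_{2}$. Your discussion of the ``main obstacle'' (why $\theta(q)$ is entirely disjoint from $\bfo_{1}$, using that relation names in $\bfo_{1}$ occur only in $q_{0}$) makes explicit a step that the paper leaves implicit in the line ``and thus $\theta(q_{0})\cap\bfo_{1}=\emptyset$. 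It follows $\theta(q)\subseteq(\db\setminus(\bfo_{1}\cup\bfo_{2}))\cup\rep_{2}^{-}$.''
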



\begin{lemma}\label{lem:nomenestomen}
Let $q$ be a query in $\sjfbcq$, and let $q_{0}\subseteq q$.
Let $\db$ be a database.
Let $\bfo$ be a garbage set for $q_{0}$ in $\db$.
Then, every repair of $\db$ satisfies $q$ if and only if every repair of $\db\setminus\bfo$ satisfies $q$ (i.e., $\db$ and $\db\setminus\bfo$ agree on their answer to $\cqa{q}$).
\end{lemma}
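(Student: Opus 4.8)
The plan is to prove the two directions separately, and the only non-trivial one is the ``if'' direction: assuming every repair of $\db\setminus\bfo$ satisfies $q$, show every repair of $\db$ satisfies $q$. Equivalently, I would prove the contrapositive: if $\db$ has a repair $\rep^\ast$ that falsifies $q$, then so does $\db\setminus\bfo$. The ``only if'' direction is the easy one and I would dispatch it first: given a repair $\rep'$ of $\db\setminus\bfo$, extend it to a repair $\rep^\ast$ of $\db$ (possible because $\db\setminus\bfo$ is a subset of $\db$ obtained by deleting entire blocks, so $\rep'$ picks one fact from each block of $\db\setminus\bfo$ and we only need to add one fact from each block contained in $\bfo$; any such extension is a repair of $\db$); since $\rep'\subseteq\rep^\ast$ and $\rep'$ satisfies $q$, so does $\rep^\ast$.

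For the main direction, let $\rep^\ast$ be a repair of $\db$ that falsifies $q$. Let $\rep$ be the repair of $\bfo$ guaranteed by condition~2 of Definition~\ref{def:garbage}. The idea is to surgically replace, inside $\rep^\ast$, the facts that come from blocks of $\bfo$ by the facts of $\rep$. Formally, I would set $\sep \defeq (\rep^\ast \setminus \bfo) \cup \rep$. First I would check $\sep$ is a repair of $\db$: by condition~1 every block of $\db$ is either contained in $\bfo$ or disjoint from it, so $\rep^\ast$ splits as a disjoint union of (one fact from each block of $\db\setminus\bfo$) and (one fact from each block of $\bfo$); replacing the latter part by $\rep$ (which also picks exactly one fact from each block of $\bfo$, since $\rep$ is a repair of $\bfo$ and $\bfo$ is a union of blocks of $\db$) yields again a set picking exactly one fact per block of $\db$, i.e.\ a repair. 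Note $\sep$ and $\db\setminus\bfo$ have the same answer is not what we want directly; rather, we want a repair of $\db\setminus\bfo$ falsifying $q$. So after establishing properties of $\sep$, I would take $\sep' \defeq \sep \setminus \bfo = \rep^\ast \setminus \bfo$, which is precisely a repair of $\db\setminus\bfo$ (it picks one fact from each block of $\db\setminus\bfo$ and touches no other block). It remains to show $\sep'$ falsifies $q$; since $\sep' \subseteq \sep$, it suffices to show $\sep$ falsifies $q$ — wait, that goes the wrong way. So instead the real content is: show $\sep$ falsifies $q$ \emph{and} that any satisfying valuation of $q$ in $\sep$ would have to use a fact of $\rep\subseteq\bfo$, hence $\sep'$ falsifies $q$ too. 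This is exactly where condition~2 is used.

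So the crux of the argument is: suppose for contradiction $\sep' = \rep^\ast\setminus\bfo$ satisfies $q$, witnessed by a valuation $\theta$ with $\theta(q)\subseteq\sep'$. Then a fortiori $\theta(q)\subseteq(\db\setminus\bfo)\cup\rep$, so condition~2 of Definition~\ref{def:garbage} gives $\theta(q_0)\cap\rep=\emptyset$. On the other hand I need to derive that $\theta$ actually witnesses satisfaction of $q$ in $\rep^\ast$, contradicting the choice of $\rep^\ast$. For this I would argue $\theta(q)\subseteq\rep^\ast$: we have $\theta(q)\subseteq\sep'=\rep^\ast\setminus\bfo\subseteq\rep^\ast$ — this is immediate, no appeal to condition~2 needed! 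So in fact if $\sep'$ satisfies $q$ then $\rep^\ast$ satisfies $q$, contradiction. Hence $\sep'$ is a repair of $\db\setminus\bfo$ falsifying $q$, completing the contrapositive. In retrospect condition~2 and the auxiliary repair $\rep$ are not needed for this direction at all — which suggests I have the direction backwards and condition~2 is really needed for the ``only if''-style extension or for a subtlety I'm glossing. Let me re-read: the subtle point, and the real obstacle, is that in the contrapositive of the ``if'' direction we start from a repair of $\db$ and want one of $\db\setminus\bfo$, and simple restriction works; but the genuinely delicate direction is actually showing that if $\db\setminus\bfo$ has a repair falsifying $q$ then \emph{so does $\db$} (the ``only if'' contrapositive), and there the constructed repair $\sep=\rep'\cup\rep$ of $\db$ might accidentally satisfy $q$ via a valuation mixing facts of $\rep'$ and facts of the newly added $\rep$. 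Condition~2 is precisely what rules this out: any $\theta$ with $\theta(q)\subseteq(\db\setminus\bfo)\cup\rep$ has $\theta(q_0)\cap\rep=\emptyset$, so $\theta(q)$ avoids all of $\rep$, hence $\theta(q)\subseteq\db\setminus\bfo\subseteq\rep'$... but $\rep'$ falsifies $q$, contradiction — so $\sep$ falsifies $q$. I expect the main obstacle to be exactly this bookkeeping: correctly identifying which direction needs condition~2, handling the ``every block is in or out of $\bfo$'' partition cleanly, and making sure the replaced/restricted fact sets are genuinely repairs (one fact per block, consistency) — the graph-theoretic intuition from Section~\ref{sec:gt} has to be turned into careful set manipulation with blocks.
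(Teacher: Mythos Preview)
Your proposal, after the self-correction in the second half, lands on exactly the paper's argument. The paper phrases the $\implies$-direction directly rather than by contrapositive: take an arbitrary repair $\rep$ of $\db\setminus\bfo$, form $\rep\cup\rep_{0}$ (where $\rep_{0}$ is the repair of $\bfo$ from condition~2), note this is a repair of $\db$ and so satisfies $q$ via some $\theta$; condition~2 forces $\theta(q)\cap\rep_{0}=\emptyset$, hence $\theta(q)\subseteq\rep$. The $\impliedby$-direction the paper dismisses in one word as ``trivial'' --- and it is, by restriction and monotonicity, exactly as you eventually observe.

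Two small points worth cleaning up. First, the inclusion you wrote at the end, $\db\setminus\bfo\subseteq\rep'$, is backwards; you mean $\theta(q)\subseteq\rep'\cup\rep$ and $\theta(q)\cap\rep=\emptyset$ give $\theta(q)\subseteq\rep'$. Second, the passage from $\theta(q_{0})\cap\rep=\emptyset$ to ``$\theta(q)$ avoids all of $\rep$'' silently uses condition~1: every fact in $\bfo$ (hence in $\rep$) has genre in $q_{0}$, so any atom of $q$ that $\theta$ sends into $\rep$ must already lie in $q_{0}$. The paper's proof makes the same jump without comment, so you are in good company, but it is worth one sentence. Your initial attempt at the ``only if'' direction contained a genuine error (you assumed what you were proving when writing ``$\rep'$ satisfies $q$, so does $\rep^\ast$''), but you caught and fixed it.
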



\begin{lemma}\label{lem:together}
Let $q$ be a query in $\sjfbcq$, and let $q_{0}\subseteq q$.
Let $\db$ be a database.
Let $\bfo$ be a garbage set for $q_{0}$ in $\db$.
Then, 
every garbage set for $q_{0}$ in $\db\setminus\bfo$ is empty
if and only if
$\bfo$ is the maximal garbage set for $q_{0}$ in $\db$.
\end{lemma}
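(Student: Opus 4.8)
The plan is to derive both implications from a single \emph{glueing} claim: \emph{if $\bfo$ is a garbage set for $q_{0}$ in $\db$ and $\bfo'$ is a garbage set for $q_{0}$ in $\db\setminus\bfo$, then $\bfo\cup\bfo'$ is a garbage set for $q_{0}$ in $\db$.} Condition~1 of Definition~\ref{def:garbage} for $\bfo\cup\bfo'$ is immediate: every fact in $\bfo\cup\bfo'$ has its $q$-genre in $q_{0}$, and since $\bfo$ is a union of $\db$-blocks, for $A\in\bfo'$ we have $\theblock{A}{\db}=\theblock{A}{\db\setminus\bfo}\subseteq\bfo'$; in particular $\bfo\cup\bfo'$ is a union of $\db$-blocks, so if $\rep$ and $\rep'$ are the repairs witnessing that $\bfo$ and $\bfo'$ are garbage sets (the latter relative to $\db\setminus\bfo$), then $\rep\cup\rep'$ is a repair of $\bfo\cup\bfo'$. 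For condition~2, fix a valuation $\theta$ over $\queryvars{q}$ with $\theta(q)\subseteq\formula{\db\setminus(\bfo\cup\bfo')}\cup\rep\cup\rep'$. Since $\rep'\subseteq\bfo'\subseteq\db\setminus\bfo$, this already gives $\theta(q)\subseteq\formula{\db\setminus\bfo}\cup\rep$, so the witness property of $\bfo$ yields $\theta(q_{0})\cap\rep=\emptyset$. Because every fact of $\rep\subseteq\bfo$ has its genre in $q_{0}$ and $q$ is self-join-free, any fact of $\theta(q)$ lying in $\rep$ would in fact lie in $\theta(q_{0})\cap\rep$; hence $\theta(q)\cap\rep=\emptyset$ and therefore $\theta(q)\subseteq\formula{\formula{\db\setminus\bfo}\setminus\bfo'}\cup\rep'$. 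The witness property of $\bfo'$ relative to $\db\setminus\bfo$ then gives $\theta(q_{0})\cap\rep'=\emptyset$, and altogether $\theta(q_{0})\cap(\rep\cup\rep')=\emptyset$, proving the claim.

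For the direction ``$\bfo$ maximal $\Rightarrow$ every garbage set for $q_{0}$ in $\db\setminus\bfo$ is empty'': if $\bfo'$ is such a garbage set, the glueing claim makes $\bfo\cup\bfo'$ a garbage set for $q_{0}$ in $\db$, so maximality forces $\bfo\cup\bfo'=\bfo$, i.e.\ $\bfo'\subseteq\bfo$; combined with $\bfo'\subseteq\db\setminus\bfo$ this gives $\bfo'=\emptyset$. (Here ``maximal'' is read as ``the maximum garbage set'', which exists by Lemma~\ref{lem:closedunderunion}.)

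For the converse, let $\bfo^{*}$ be the maximum garbage set for $q_{0}$ in $\db$; by definition $\bfo\subseteq\bfo^{*}$, and it suffices to show $\bfo^{*}\setminus\bfo$ is a garbage set for $q_{0}$ in $\db\setminus\bfo$, for then the hypothesis forces $\bfo^{*}\setminus\bfo=\emptyset$, i.e.\ $\bfo=\bfo^{*}$. Condition~1 holds because $\bfo$ and $\bfo^{*}$ are unions of $\db$-blocks, so each block of $\db\setminus\bfo$ meeting $\bfo^{*}\setminus\bfo$ is contained in it. For condition~2, take the repair $\rep^{*}$ of $\bfo^{*}$ witnessing that $\bfo^{*}$ is a garbage set, and put $\rep_{1}\defeq\rep^{*}\setminus\bfo$, which is a repair of $\bfo^{*}\setminus\bfo$. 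Since $\bfo\subseteq\bfo^{*}\subseteq\db$, one has $\formula{\db\setminus\bfo}\setminus\formula{\bfo^{*}\setminus\bfo}=\db\setminus\bfo^{*}$, so any $\theta$ with $\theta(q)\subseteq\formula{\db\setminus\bfo^{*}}\cup\rep_{1}$ also satisfies $\theta(q)\subseteq\formula{\db\setminus\bfo^{*}}\cup\rep^{*}$, whence $\theta(q_{0})\cap\rep^{*}=\emptyset$ and a fortiori $\theta(q_{0})\cap\rep_{1}=\emptyset$.

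The routine parts are the bookkeeping with blocks (both $\bfo$ and $\bfo^{*}$ being unions of $\db$-blocks makes all set-difference manipulations clean) and the repair-restriction arguments. The one step that needs care — and which I expect to be the main obstacle — is upgrading $\theta(q_{0})\cap\rep=\emptyset$ to $\theta(q)\cap\rep=\emptyset$ in the glueing claim, so that $\theta(q)$ can be ``peeled'' down to $\formula{\formula{\db\setminus\bfo}\setminus\bfo'}\cup\rep'$ and the witness property of $\bfo'$ becomes applicable; this is exactly where self-join-freeness of $q$ together with condition~1 of Definition~\ref{def:garbage} (facts in a garbage set have their genre inside $q_{0}$) is used.
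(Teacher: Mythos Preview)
Your proof is correct and follows essentially the same route as the paper. Your ``glueing claim'' is exactly the paper's Lemma~\ref{lem:chainunion}, and your converse argument (showing $\bfo^{*}\setminus\bfo$ is a garbage set in $\db\setminus\bfo$) is the specialization of the paper's Lemma~\ref{lem:substraction} used in Corollary~\ref{cor:substraction}; the paper then derives Lemma~\ref{lem:together} immediately from the two corollaries. One minor remark: the step you flag as needing care---upgrading $\theta(q_{0})\cap\rep=\emptyset$ to $\theta(q)\cap\rep=\emptyset$ via self-join-freeness and condition~1---is left implicit in the paper (its proofs of Lemmas~\ref{lem:substraction} and~\ref{lem:chainunion} simply restate the witness property with $\theta(q)$ in place of $\theta(q_{0})$), so your explicit justification is if anything an improvement.
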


\section{Garbage Sets for \mCycles}\label{sec:scrub}

In this section, we bring together notions of the two preceding sections.
We focus on queries $q$ in $\sjfbcq$ whose \mgraph has a cycle $C$.
From here on, if $C$ is an elementary cycle in the \mgraph of some query $q$ in $\sjfbcq$, then the subset of $q$ that contains all (and only) the atoms of $C$, is also denoted by $C$.

Section~\ref{sec:garbageofmcycle} shows a procedural characterization of the maximal garbage set for $C$. 
Section~\ref{sec:garbageindatalog} shows that the maximal garbage set for $C$ can be computed in symmetric stratified Datalog.
Finally, Section~\ref{sec:eliminationofmcycle} shows a reduction, expressible in symmetric stratified Datalog, that replaces $C$ with a single atom. 

\subsection{Characterizing Garbage Sets for \mCycles}\label{sec:garbageofmcycle}

We define how a given \mcycle $C$ of length~$k$ can be instantiated by cycles in the $\mhook$-graph, called embeddings, whose lengths are multiples of~$k$.

\begin{definition}\label{def:cmhook}
Let $q$ be a query in $\sjfbcq$.
Let $\db$ be a database.
Let $C$ be an elementary directed cycle in the \mgraph of $q$.
The cycle $C$ naturally induces a subgraph of the $\mhook$-graph, as follows:
the vertex set of the subgraph contains all (and only) the facts $A$ of $\db$ such that $\qatom{A}{q}$ is an atom in $C$; there is a directed edge from $A$ to $B$, denoted $A\cmhook{C}B$, if $A\mhook B$ and the cycle $C$ contains a directed edge from $\qatom{A}{q}$ to $\qatom{B}{q}$.

Let $k$ be the length of $C$.
Obviously, the length of every $\cmhook{C}$-cycle must be a multiple of~$k$. 
Let $n$ be a positive integer.
An {\em $n$-embedding of $C$ in $\db$\/} (or simply {\em embedding\/} if the value $n$ is not important) is an elementary $\cmhook{C}$-cycle of length $nk$ containing no two distinct key-equal facts. 
A $1$-embedding of $C$ in $\db$ is said to be {\em relevant\/} if there exists a valuation $\theta$ over $\queryvars{q}$ such that $\theta(q)\subseteq\db$ and $\theta(q)$ contains every fact of the $1$-embedding; otherwise the  $1$-embedding is said to be {\em irrelevant\/}.
\end{definition}

Let $C$ and $q$ be as in Definition~\ref{def:cmhook}, 
and let $\db$ be a database.
There exists an intimate relationship between garbage sets for $C$ in $\db$ and different sorts of embeddings. 
\begin{itemize}
\item
Let $A\in\db$ such that $\qatom{A}{q}$ belongs to $C$.
If $A$ belongs to some relevant $1$-embedding of $C$ in $\db$,
then $A$ will have an outgoing edge in the $\cmhook{C}$-graph.
If $A$ does not belong to some relevant $1$-embedding of $C$ in $\db$,
then $A$ will have no outgoing edge in the $\cmhook{C}$-graph,
and $\theblock{A}{\db}$ is obviously a garbage set for $C$ in $\db$.
\item
Every irrelevant $1$-embedding of $C$ in $\db$ gives rise to a garbage set.
To illustrate this case, let $C=\{R(\underline{x},y,z)$, $S(\underline{y},x,z)\}$.
Assume that $R(\underline{a},b,1)\cmhook{C} S(\underline{b},a,2)\cmhook{C} R(\underline{a},b,1)$ is a $1$-embedding of $C$ in $\db$.
This $1$-embedding is irrelevant, because $1\neq 2$.
It can be easily seen that $R(\underline{a},b,\blockfiller)\cup S(\underline{b},a,\blockfiller)$ is a garbage set for $q$ in $\db$.
\item
Every $n$-embedding of $C$ in $\db$ with $n\geq 2$ gives rise to a garbage set.
This was illustrated in Section~\ref{sec:gt} by means of the outermost cycle of length~$6$ in Fig.~\ref{fig:gthook}, which is a $2$-embedding of $\{R(\underline{x},y)$, $S(\underline{y},z)$,  $T(\underline{z},x)\}$. 
\end{itemize}
These observations lead to the following lemma which provides a procedural characterization of the maximal garbage set for $C$ in a given database.

\begin{lemma}\label{lem:scrubalgo}
Let $q$ be a query in $\sjfbcq$.
Let $C=F_{0}\markov F_{1}\markov\dotsm\markov F_{k-1}\markov F_{0}$ be an elementary cycle of length~$k$ $(k\geq 2$) in the \mgraph of~$q$.
Let $\db$ be a database.
Let $\bfo$ be a minimal (with respect to $\subseteq$) subset of $\db$ satisfying the following conditions:
\begin{enumerate}
\item\label{it:scrubnotdangling}
the set $\bfo$ contains every fact $A$ of $\db$ with $\qatom{A}{q}\in\{F_{0},\dots,F_{k-1}\}$ such that 
$A$ has zero outdegree in the $\cmhook{C}$-graph;
\item\label{it:scruboneembedding}
the set $\bfo$ contains every fact that belongs to some irrelevant $1$-embedding of $C$ in $\db$;
\item\label{it:scrubmultiple}
the set $\bfo$ contains every fact that belongs to some $n$-embedding of $C$ in $\db$ with $n\geq 2$;
\item\label{it:recursive}\emph{Recursive condition:}
if $\bfo$ contains some fact of a relevant $1$-embedding of $C$ in $\db$, 
then $\bfo$ contains every fact of that $1$-embedding; and
\item\label{it:keyequalclosure}\emph{Closure under ``is key-equal to'':}
if $\bfo$ contains some fact $A$, then $\bfo$ includes $\theblock{A}{\db}$. 
\end{enumerate}
Then, $\bfo$ is the maximal garbage set for $C$ in $\db$.
\end{lemma}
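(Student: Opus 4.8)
The plan is to prove that the $\bfo$ described by the five conditions is (a) a garbage set for $C$ in $\db$, and (b) the maximal one. For (a), the hard part is condition~2 of Definition~\ref{def:garbage}: I must exhibit a repair $\rep$ of $\bfo$ such that no valuation $\theta$ with $\theta(q)\subseteq(\db\setminus\bfo)\cup\rep$ makes $\theta(C)\cap\rep\neq\emptyset$. The natural candidate for $\rep$ is built by taking, for each block included in $\bfo$, the fact on a ``witnessing'' long cycle or irrelevant cycle forced in by conditions~\ref{it:scruboneembedding}--\ref{it:scrubmultiple} (and, for blocks forced in only by condition~\ref{it:scrubnotdangling}, an arbitrary fact, since those facts have no outgoing $\cmhook{C}$-edge at all). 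The key structural observation — exactly the one illustrated in Section~\ref{sec:gt} — is that a fact $A$ on an $n$-embedding with $n\geq 2$, or on an irrelevant $1$-embedding, has all of its outgoing $\cmhook{C}$-edges landing inside $\bfo$ (to be argued via Lemma~\ref{lem:hockey}(2): once $A$'s unique ``$C$-successor block'' is pinned down, it is the block of the next fact on the witnessing cycle, which lies in $\bfo$). Consequently, in $(\db\setminus\bfo)\cup\rep$, no fact of $\rep$ whose genre is in $C$ can be the source of a $\cmhook{C}$-edge leaving $\rep$; so if some $\theta(C)\subseteq(\db\setminus\bfo)\cup\rep$ used a fact of $\rep$, the whole cycle $\theta(C)$ would have to stay within $\rep$, and it would be an elementary $\cmhook{C}$-cycle in $\rep$ of length a multiple of $k$ with no two key-equal facts (as $\rep$ is a repair) — i.e., an embedding. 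One then argues this embedding is itself either a relevant $1$-embedding entirely inside $\bfo$ (ruled out directly: a relevant $1$-embedding inside $\bfo$ means $\theta(q)\subseteq\db$ with all its $C$-facts in $\rep$, which is exactly what we are trying to forbid — here I would need the recursive condition~\ref{it:recursive} together with a careful choice of $\rep$ that never contains a full relevant $1$-embedding unless that whole $1$-embedding is forced, and even then $\rep$ picks at most one fact per block, so I must check the $1$-embedding cannot be reconstructed inside one repair) or an $n$-embedding/irrelevant $1$-embedding, contradicting maximality of... wait, no — contradicting the fact that $\rep$ is a repair combined with conditions 2 and 3, since such an embedding's facts are in $\bfo$ but $\rep$ is consistent. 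I expect the delicate bookkeeping to be: showing $\rep$ can be chosen so that it contains no relevant $1$-embedding of $C$ at all. That should follow because a relevant $1$-embedding lives inside a single $\theta(q)\subseteq\db$, hence its $k$ facts lie in $k$ distinct blocks, and $\rep$ contains exactly one fact per block; so $\rep$ contains the $1$-embedding only if, for each of those blocks, the representative we chose is the $1$-embedding's fact — which we can avoid unless forced, and if forced by conditions 2/3 through another embedding sharing those blocks, condition~\ref{it:recursive} pulls the entire relevant $1$-embedding into $\bfo$, and then I re-pick $\rep$'s representatives on those blocks to break the $1$-embedding (possible as long as each such block has $\geq 2$ facts, which holds since it also contains a fact of a distinct embedding; the single-fact case needs a separate tiny argument, or is handled because a singleton block forced into $\bfo$ only via condition~\ref{it:scrubnotdangling}/\ref{it:keyequalclosure}).

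For part (b), maximality: let $\bfo'$ be any garbage set for $C$ in $\db$ with witnessing repair $\rep'$. I want $\bfo'\subseteq\bfo$, i.e., every fact that conditions~\ref{it:scrubnotdangling}--\ref{it:keyequalclosure} would force in is already in every... no, the other direction: I want that $\bfo$ as defined (the minimal set closed under 1--5) is contained in every "reasonable" set and contains every garbage set. Actually what the lemma asserts is that this particular $\bfo$ equals the maximum garbage set, so I must show both $\bfo$ is a garbage set (part (a)) and $\bfo\supseteq\bfo'$ for every garbage set $\bfo'$. For the latter, I take $A\in\bfo'$ and induct on the construction of $\bfo$ — but that's backwards. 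The cleaner route: show directly that if $\bfo'$ is any garbage set, then $\bfo'$ satisfies a relaxed version of conditions 1--5 forcing $\bfo'\subseteq\bfo$ is false too. The right statement: I should show that every garbage set $\bfo'$ is $\subseteq \bfo$ by showing $\bfo$ satisfies conditions 1–5 (done: that's the definition of $\bfo$) hence is a garbage set (part (a)), and that any garbage set's facts must appear among those forced by 1–5. For that last implication: if $A\in\bfo'$ with genre in $C$, consider the witnessing repair $\rep'$ of $\bfo'$; the relevant-$1$-embedding, irrelevant-$1$-embedding, and long-cycle structure around $A$ in $\db$ must, by the garbage-set condition applied to $\rep'$, force $A$ to be "non-dangling into $\db\setminus\bfo'$", and a short combinatorial argument (again via Lemma~\ref{lem:hockey}) shows $A$ then lies on some $n$-embedding with $n\geq2$, or on an irrelevant $1$-embedding, or has zero outdegree — i.e., $A$ is caught by one of conditions~\ref{it:scrubnotdangling}--\ref{it:scrubmultiple}, hence $A\in\bfo$; closure conditions~\ref{it:recursive},\ref{it:keyequalclosure} for $\bfo'$ then mirror those for $\bfo$.

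I expect the main obstacle to be the interaction between the recursive condition~\ref{it:recursive} and the choice of the witnessing repair $\rep$ in part (a): I must simultaneously ensure $\rep$ is consistent, that it is a repair \emph{of} $\bfo$ (so it must pick one fact from every block of $\bfo$), and that it contains no full relevant $1$-embedding — while blocks that are "contaminated" by condition~\ref{it:recursive} must still admit a safe representative. I would handle this by first proving a small lemma: if $\bfo$ is closed under conditions~\ref{it:scruboneembedding},\ref{it:scrubmultiple},\ref{it:recursive},\ref{it:keyequalclosure}, then for every block $\block\subseteq\bfo$ there is a fact $A_\block\in\block$ that does \emph{not} lie on any relevant $1$-embedding all of whose $k$ blocks are contained in $\bfo$ — unless $\block$ has a single fact, in which case $\block$ entered $\bfo$ only through conditions~\ref{it:scrubnotdangling}/\ref{it:keyequalclosure} (a fact of outdegree $0$ cannot lie on any $1$-embedding at all, as $1$-embeddings are cycles, so it is harmless), and set $\rep\defeq\{A_\block : \block\subseteq\bfo\}$. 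With that, the argument of part (a) goes through: any $\theta(C)\subseteq(\db\setminus\bfo)\cup\rep$ meeting $\rep$ stays inside $\rep$ by the no-outgoing-edge observation, hence is an embedding inside $\rep$; it cannot be an $n$-embedding ($n\geq2$), an irrelevant $1$-embedding (both would violate $\rep$ being a subset of a repair, since those facts are in $\bfo$ but split across $\geq 2$ of the blocks of... actually they'd be consistent — rather: such an embedding would reuse... hmm, an $n$-embedding with $n\geq 2$ has $nk$ distinct facts in $\geq 2$ blocks each used once, so it could in principle sit inside a repair; the real exclusion is that its facts lie in $\bfo$, hence are among $\{A_\block\}$, and by the small lemma $A_\block$ avoids relevant $1$-embeddings — but an $n$-embedding decomposes, via the tripartite-like structure, into pieces that are... this is where I'd invoke the classification of embeddings used in proving Lemma~\ref{lem:longcycle}); so the surviving case is that $\theta(C)$ is a relevant $1$-embedding inside $\rep$, directly contradicting the small lemma. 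I believe the technical heart is exactly this small lemma about choosing block representatives, and a clean inductive proof of it — peeling blocks off $\bfo$ in the reverse order in which conditions~\ref{it:scruboneembedding}--\ref{it:recursive} inserted them — is what the full proof will need to develop.
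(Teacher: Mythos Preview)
Your proposal has two genuine gaps, one in each direction.

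\textbf{Part (a): the ``small lemma'' is false.} You propose to build the witnessing repair $\rep$ block-by-block, choosing for each block $\block\subseteq\bfo$ a fact $A_\block$ that lies on no relevant $1$-embedding whose $k$ blocks are all in $\bfo$. This fails already for $q=C=\{R(\underline{x},y),S(\underline{y},x)\}$ on the database with blocks $R(\underline{a},\ast)=\{R(a,1),R(a,2)\}$, $R(\underline{b},\ast)=\{R(b,1),R(b,2)\}$, $S(\underline{1},\ast)=\{S(1,a),S(1,b)\}$, $S(\underline{2},\ast)=\{S(2,a),S(2,b)\}$. Here every fact lies on a $2$-embedding (e.g.\ $R(a,1)\cmhook{C}S(1,b)\cmhook{C}R(b,2)\cmhook{C}S(2,a)\cmhook{C}R(a,1)$), so $\bfo$ is the whole database; yet every fact also lies on a relevant $1$-embedding with both blocks in $\bfo$. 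So no block admits a representative with your property, and a local choice cannot work. (A good $\rep$ does exist---take the four facts of one $2$-embedding---but finding it requires a global argument.) The paper avoids this by building $\bfo$ \emph{incrementally}: first let $\bfo_0$ be the minimal set closed under conditions~\ref{it:scrubnotdangling},\ref{it:scruboneembedding},\ref{it:scrubmultiple},\ref{it:keyequalclosure} (which is a garbage set by Lemma~\ref{lem:closedunderunion}, since each of those conditions individually yields a garbage set with an explicit repair, namely the facts of the witnessing embedding), then extend $\bfo_0\subsetneq\bfo_1\subsetneq\cdots\subsetneq\bfo_{m+1}=\bfo$ by applying condition~\ref{it:recursive} one relevant $1$-embedding at a time, showing inductively that each $\bfo_h$ is a garbage set by \emph{extending} the previous witnessing repair with the new facts. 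The induction step uses Lemma~\ref{lem:hockey} exactly as you sketch, but the point is that the repair is inherited and enlarged, not chosen fresh at the end.

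\textbf{Part (b): the containment argument is in the wrong form.} You propose to show $\bfo'\subseteq\bfo$ for every garbage set $\bfo'$ by arguing that each $A\in\bfo'$ is ``caught by one of conditions~\ref{it:scrubnotdangling}--\ref{it:scrubmultiple}.'' That is not true: a fact may enter $\bfo'$ only through block-closure or only because it shares a relevant $1$-embedding with a fact caught by \ref{it:scrubnotdangling}--\ref{it:scrubmultiple}, and your sketch gives no mechanism for that. The paper instead proves maximality via Lemma~\ref{lem:together}: it shows that every garbage set for $C$ in $\db\setminus\bfo$ is empty. The argument is an infinite-path construction: assuming a nonempty garbage set $\bfo'$ in $\db\setminus\bfo$ with witnessing repair $\rep'$, start at any $A\in\rep'$ and follow outgoing $\cmhook{C}$-edges inside $(\db\setminus\bfo\setminus\bfo')\cup\rep'$, switching to the $\rep'$-representative whenever you hit $\bfo'\setminus\rep'$; the resulting infinite walk has a first repeated block within $k$ steps of some fact of $\rep'$, yielding an embedding of $C$ in $\db\setminus\bfo$, which must be a relevant $1$-embedding meeting $\rep'$---contradicting the garbage-set property of $\bfo'$. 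This is the missing idea in your maximality direction.
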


\begin{corollary}\label{cor:scrubcomponent}
Let $C$ be an elementary cycle in the \mgraph of a query $q$ in $\sjfbcq$.
Let $\isc$ be a strong component in the $\cmhook{C}$-graph of a database $\db$.
If some fact of $\isc$ belongs to the maximal garbage set for $C$ in $\db$,
then every fact of $\isc$ belongs to the maximal garbage set for $C$ in~$\db$.
\end{corollary}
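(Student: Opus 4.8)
The plan is to prove the contrapositive-style statement directly: assuming some fact of a strong component $\isc$ of the $\cmhook{C}$-graph lies in the maximal garbage set $\bfo$ for $C$ in $\db$, I will argue that every fact of $\isc$ lies in $\bfo$ as well. The natural route is to use the procedural characterization of $\bfo$ supplied by Lemma~\ref{lem:scrubalgo}, namely that $\bfo$ is the minimal set closed under conditions~(\ref{it:scrubnotdangling})--(\ref{it:keyequalclosure}). So it suffices to show that the set $\bfo' \defeq \bfo \cup \{A \mid A \text{ is in the same strong component of the } \cmhook{C}\text{-graph as some fact of } \bfo\}$ still satisfies all five conditions of Lemma~\ref{lem:scrubalgo}; by minimality of $\bfo$ this forces $\bfo' = \bfo$, which is exactly the claim. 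Conditions~(\ref{it:scrubnotdangling}), (\ref{it:scruboneembedding}), (\ref{it:scrubmultiple}) are ``seeding'' conditions that only add facts, so they are trivially still met by the larger set $\bfo'$. What needs checking is that $\bfo'$ remains closed under the recursive condition~(\ref{it:recursive}) and the key-equal closure condition~(\ref{it:keyequalclosure}).

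For condition~(\ref{it:keyequalclosure}): if $A \in \bfo'$, then either $A \in \bfo$ (and $\theblock{A}{\db} \subseteq \bfo \subseteq \bfo'$ since $\bfo$ satisfies~(\ref{it:keyequalclosure})), or $A$ is in the same $\cmhook{C}$-strong-component as some $A_0 \in \bfo$. In the latter case I need every key-equal sibling $A'$ of $A$ to be in $\bfo'$. The key observation is that key-equal facts $A, A'$ have the same incoming and outgoing $\cmhook{C}$-neighborhoods up to key-equality: by Lemma~\ref{lem:hockey}(1), $B \mhook A$ implies $B \mhook A'$, and one checks the $\cmhook{C}$ restriction is preserved because $A$ and $A'$ have the same genre; similarly for outgoing edges. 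Hence $A'$ is in the same strong component as $A$, which is the same as that of $A_0 \in \bfo$, so $A' \in \bfo'$. For condition~(\ref{it:recursive}): suppose $\bfo'$ contains some fact $A$ of a relevant $1$-embedding $E$ of $C$; I must show $\bfo' \supseteq E$. If $A \in \bfo$, then already $\bfo \supseteq E$ by~(\ref{it:recursive}) for $\bfo$, so done. Otherwise $A$ lies in the strong component of some $A_0 \in \bfo$. Now a relevant $1$-embedding $E$ is an elementary $\cmhook{C}$-cycle of length $k$; being a directed cycle, all its facts lie in one strong component of the $\cmhook{C}$-graph. Therefore every fact of $E$ lies in the same strong component as $A$, hence as $A_0$, hence is in $\bfo'$. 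This closes condition~(\ref{it:recursive}).

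Having verified all five conditions, minimality of $\bfo$ in Lemma~\ref{lem:scrubalgo} yields $\bfo' \subseteq \bfo$, whence $\bfo' = \bfo$; so the strong-component saturation adds nothing, which says precisely that $\bfo$ is already a union of $\cmhook{C}$-strong-components, i.e., if one fact of $\isc$ is in $\bfo$ then all of $\isc$ is. The main obstacle I anticipate is the bookkeeping in condition~(\ref{it:keyequalclosure}): one must be careful that the ``same strong component'' relation interacts correctly with key-equality, i.e., that replacing a fact by a key-equal one does not shrink or enlarge its $\cmhook{C}$-connectivity. This is where Lemma~\ref{lem:hockey} does the real work — in particular part~(1) guarantees that if $A$ receives a $\cmhook{C}$-edge then so does every key-equal fact, and a symmetric argument on the definition of $\mhook$ (via Definition~\ref{def:mhook}, using that outgoing edges depend only on $\theta(F)$ through its full atom, not its block) handles outgoing edges. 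Once that lemma is invoked cleanly, the rest is a routine closure argument.
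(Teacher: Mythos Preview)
Your argument has a genuine logical gap in the use of minimality. Lemma~\ref{lem:scrubalgo} says that the maximal garbage set $\bfo$ is the \emph{minimal} set satisfying conditions~(\ref{it:scrubnotdangling})--(\ref{it:keyequalclosure}). You construct a superset $\bfo'\supseteq\bfo$, verify that $\bfo'$ also satisfies the five conditions, and then write ``by minimality of $\bfo$ this forces $\bfo'=\bfo$''. But that inference goes the wrong way: minimality of $\bfo$ tells you that $\bfo$ is contained in every set satisfying the conditions, not that every such set is contained in $\bfo$. Indeed, the set of \emph{all} facts $A$ with $\qatom{A}{q}\in C$ trivially satisfies all five conditions (they are all of the form ``contains such-and-such''), yet is generally strictly larger than $\bfo$. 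So showing that $\bfo'$ satisfies the closure conditions gives you nothing toward $\bfo'\subseteq\bfo$.

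There is also a secondary issue in your treatment of condition~(\ref{it:keyequalclosure}): key-equal facts share incoming $\cmhook{C}$-edges (Lemma~\ref{lem:hockey}(1)), but they need \emph{not} share outgoing edges, since $A\cmhook{C}B$ requires $A=\theta(F)$ for some $\theta$ with $\theta(q)\subseteq\db$, a condition a key-equal sibling $A'$ may simply fail. So the claim that $A$ and $A'$ lie in the same strong component does not follow from the ``symmetric argument'' you sketch.

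The paper's proof is short and direct, and it avoids both problems. It establishes the one-step backward propagation property: if $A\cmhook{C}B$ and $B\in\bfo$, then $A\in\bfo$. The reason is that the edge $A\cmhook{C}B$ comes with a valuation $\theta$ such that $\theta(q)\subseteq\db$, $A=\theta(F_{0})$, and $B\keyequal\theta(F_{1})$; closure under key-equality (condition~\ref{it:keyequalclosure}) puts $\theta(F_{1})$ in $\bfo$, and since $\theta(C)$ is a relevant $1$-embedding containing $\theta(F_{1})$, the recursive condition~(\ref{it:recursive}) then puts $A=\theta(F_{0})$ in $\bfo$. Once this is in hand, strong connectivity gives a directed $\cmhook{C}$-path from any $A\in\isc$ to the known fact $B\in\isc\cap\bfo$, and repeated backward propagation along that path pulls $A$ into $\bfo$.
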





\subsection{Computing Garbage Sets for \mCycles}\label{sec:garbageindatalog}

In this section, we translate  Lemma~\ref{lem:scrubalgo} into a Datalog program that computes, in deterministic logspace, the maximal garbage set for an \mcycle $C$. 
The main computational challenge lies in condition~\ref{it:scrubmultiple} of Lemma~\ref{lem:scrubalgo}, which adds to the maximal garbage set all facts belonging to some $n$-embedding with $n\geq 2$, where the value of $n$ is not upper bounded.
Such $n$-embeddings can obviously be computed in nondeterministic logspace by using directed reachability in the $\cmhook{C}$-graph. 
This section shows a trick that allows doing the computation by using only \emph{undirected} reachability, which, by the use of Reingold's algorithm~\cite{DBLP:journals/jacm/Reingold08}, will lead to an algorithm that runs in deterministic logspace. 

By Corollary~\ref{cor:scrubcomponent}, instead of searching for $n$-embeddings, $n\geq 2$, it suffices to search for strong components of the $\cmhook{C}$-graph containing such $n$-embeddings. 
These strong components can be recognized by a first-order reduction to the following problem, called $\problem{LONGCYCLE}(k)$, which is in logspace by Lemma~\ref{lem:longcycle}.


\begin{definition}\label{def:longcycle}
A directed graph is {\em $k$-partite\/} if its vertices can be partitioned into $k$ sets $V_{0}$, $V_{1}$, \dots, $V_{k-1}$ such that for every directed edge $(u,v)$ in $E$, if $u\in V_{i}$, then $v\in V_{\formula{i+1}\mod k}$. 
For every positive integer $k$, $\problem{LONGCYCLE}(k)$ is the following problem.
\begin{description}
\item[Problem] $\problem{LONGCYCLE}(k)$
\item[Instance]
A connected $k$-partite directed graph $G=(V,E)$ such that every edge of $E$ belongs to a directed cycle of length~$k$.
\item[Question]
Does $G$ have an elementary directed cycle of length at least~$2k$?
\end{description}
\end{definition}

\begin{lemma}\label{lem:longcycle}
$\problem{LONGCYCLE}(k)$ is in $\logspace$ for every positive integer $k$.
\end{lemma}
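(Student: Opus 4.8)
The goal is to show that $\problem{LONGCYCLE}(k)$ — detecting an elementary cycle of length at least $2k$ in a connected $k$-partite digraph in which every edge lies on a directed $k$-cycle — is solvable in deterministic logspace. The structural observation driving the algorithm is that the ``every edge lies on a $k$-cycle'' hypothesis rigidly constrains the local shape of the graph: if $u \in V_i$ and $(u,v),(u,v') \in E$ with $v,v' \in V_{i+1}$, then $v$ and $v'$ are connected by directed paths to each other going around the $k$-cycles through them, so the graph is, locally, a union of $k$-cycles glued along vertices. The plan is to turn the search for a long directed cycle into a reachability question on an \emph{undirected} auxiliary graph, since undirected reachability is in $\logspace$ by Reingold's theorem~\cite{DBLP:journals/jacm/Reingold08}.

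First I would fix a vertex $r \in V_0$ (the graph is connected, and WLOG nonempty), and consider the set of $k$-cycles. Because every edge lies on a $k$-cycle and the graph is $k$-partite, following edges forward from any vertex $u \in V_i$ lands, after $k$ steps, back in $V_i$; the key claim is that ``$u$ reaches $w$ in exactly $k$ forward steps'' can only happen when $u$ and $w$ lie on a common $k$-cycle, and in a controlled way. I would then build an undirected graph $H$ whose vertices are (a suitable bounded-size encoding of) the $k$-cycles of $G$, with an edge between two $k$-cycles when they share at least one vertex. The central lemma to prove is: \emph{$G$ has an elementary directed cycle of length $\geq 2k$ if and only if $H$ is not a forest} (equivalently, $H$ contains an undirected cycle, equivalently two distinct $k$-cycles of $G$ lie in a common undirected cycle in $H$). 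The forward direction: a directed cycle of length $nk$ with $n \geq 2$ must traverse at least two distinct $k$-cycles and close up, producing a cycle in $H$. The backward direction: from an undirected cycle in $H$ one splices together the shared-vertex $k$-cycles to build a \emph{directed} closed walk of length a multiple of $k$ that is not a single $k$-cycle, and then one extracts an elementary directed subcycle from it whose length is still $\geq 2k$ — here the $k$-partiteness guarantees every directed closed walk has length divisible by $k$, so an elementary subcycle of length $\geq k$ that is not one of the original $k$-cycles automatically has length $\geq 2k$.

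The logspace bound then follows in two pieces. Enumerating the $k$-cycles through a given vertex, and testing vertex-sharing of two $k$-cycles, are first-order (for fixed $k$) and hence in $\logspace$; so $H$ is logspace-computable from $G$. Testing whether an undirected graph is a forest is equivalent to checking that the number of edges is less than the number of vertices in each connected component, which reduces to undirected $s$-$t$-connectivity and is in $\logspace$ by Reingold. Composing logspace reductions stays in $\logspace$, giving the result. I would also need to handle the degenerate bookkeeping: vertices of $G$ possibly lying on many $k$-cycles, and the encoding of a $k$-cycle as a tuple of $k$ vertices (fixed arity, fine for logspace).

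The main obstacle I anticipate is the backward direction of the central lemma — showing that an undirected cycle in $H$ genuinely yields an \emph{elementary} directed cycle of length $\geq 2k$ in $G$, rather than merely a closed walk that might collapse to a single $k$-cycle or degenerate entirely. This requires exploiting the hypothesis that \emph{every} edge lies on a $k$-cycle more carefully: when two $k$-cycles $C_1, C_2$ share a vertex $v$, one must argue that one can route through $v$ from $C_1$ into $C_2$ consistently with the forward orientation (this is where $k$-partiteness pins down ``where on $C_2$'' the vertex $v$ sits relative to $v$ on $C_1$), and then that after all splicings the resulting closed walk, though possibly self-intersecting, contains an elementary directed subcycle that is not contained in any single original $k$-cycle. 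A clean way to organize this is to pick, among all directed closed walks of length $\geq 2k$ witnessed by the non-forest structure, one of \emph{minimum} length, and show minimality forces it to be elementary; its length is a positive multiple of $k$ and exceeds $k$, hence is $\geq 2k$.
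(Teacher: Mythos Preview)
Your auxiliary graph $H$ of $k$-cycles (with an edge when two $k$-cycles share a vertex) is exactly the graph $\po{G}$ the paper uses, so the overall strategy is on target. But your central lemma --- ``$G$ has an elementary directed cycle of length $\geq 2k$ iff $H$ is not a forest'' --- is false in the backward direction, and the failure is exactly the obstacle you flagged but did not resolve.

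Here is a concrete counterexample. Fix $k\geq 2$ and take three $k$-cycles $C_{1},C_{2},C_{3}$ that all pass through a single common vertex $a\in V_{0}$ and are otherwise vertex-disjoint. The resulting $G$ is connected, $k$-partite, and every edge lies on a $k$-cycle. In $H$ the three vertices $C_{1},C_{2},C_{3}$ are pairwise adjacent (they all share $a$), so $H$ contains a triangle and is not a forest. Yet every elementary directed cycle in $G$ has length exactly $k$: any directed cycle must visit $a$, and an elementary cycle visits it only once, after which it is forced to complete a single $k$-cycle. Your proposed fix (take a minimum-length closed walk of length $\geq 2k$ and argue minimality forces elementarity) does not help: the walk $a\to\cdots\to a\to\cdots\to a$ of length $2k$ that goes around $C_{1}$ then $C_{2}$ is already minimum-length, and it is not elementary.

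The paper's characterization is sharper and avoids this trap: $G$ has an elementary cycle of length $\geq 2k$ iff either (i) $G$ has an elementary cycle of length $nk$ for some $2\leq n\leq 2k-3$ (a first-order check), or (ii) $\po{G}$ has a \emph{chordless} cycle of length $\geq 2k$. Condition~(ii) is what rules out the counterexample above (the triangle in $\po{G}$ has length $3<2k$), and it is the chordlessness together with the length bound that makes the splicing argument go through. Detecting a chordless cycle of length $\geq 2k$ in $\po{G}$ is then reduced to undirected reachability: guess a chordless path of length $2k-1$ in $\po{G}$ and test whether its endpoints are connected in $\po{G}$ minus the (finitely many) interior vertices and their neighbors. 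That is the missing idea.
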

\begin{proof}\emph{(Sketch)}
Let $G=(V,E)$ be an instance of $\problem{LONGCYCLE}(k)$.
A cycle of length~$k$ in $G$ is called a {\em $k$-cycle\/}.
Let $\po{G}$ be the undirected graph whose vertices are the $k$-cycles of $G$;
there is an undirected edge between two vertices if their $k$-cycles have an element in common.
The full proof in Appendix~\ref{app:scrub} shows that $G$ has an elementary directed cycle of length $\geq 2k$ if and only if one of the following conditions is satisfied:
\begin{itemize}
\item
for some $n$ such that $2\leq n\leq 2k-3$,
$G$ has an elementary directed cycle of length $nk$; or
\item
$\po{G}$ has a chordless cycle (i.e., a cycle without cycle chord) of length $\geq 2k$.
\end{itemize}
The first condition can be tested in $\FO$;
the second condition can be reduced to an undirected connectivity problem, which is in logspace~\cite{DBLP:journals/jacm/Reingold08}.
\end{proof}

To use Lemma~\ref{lem:longcycle}, we take a detour via the quotient graph of the $\cmhook{C}$-graph relative to the equivalence relation ``is key-equal to.''

\begin{definition}\label{def:quotient}
Let $q$ be a query in $\sjfbcq$.
Let $\db$ be a database.
Let $C$ be an elementary directed cycle of length $k\geq 2$ in the \mgraph of $q$.
The {\em block-quotient graph\/} is the quotient graph of the $\cmhook{C}$-graph of $\db$ with respect to the equivalence relation~$\keyequal$. 
\footnote{The quotient graph of a directed graph $G=(V,E)$ with respect to an equivalence relation~$\equiv$ on $V$ is a directed graph whose vertices are the equivalence classes of~$\equiv$; there is a directed edge from class $A$ to class $B$ if $E$ has a directed edge from some vertex in $A$ to some vertex in $B$. }
\end{definition}

The block-quotient graph of a database can obviously be constructed in $\FO$.
The strong components of the $\cmhook{C}$-graph that contain some $n$-embedding of $C$, $n\geq 2$, can then be recognized in logspace by executing the algorithm for $\problem{LONGCYCLE}(k)$ on the block-quotient graph.
Moreover, an inspection of the proof of Lemma~\ref{lem:longcycle} reveals that $\problem{LONGCYCLE}(k)$ can be expressed in symmetric stratified Datalog, which is the crux in the proof of the following lemma.

\begin{lemma}\label{lem:scrub}
Let $q$ be a query in $\sjfbcq$.
Let $C$ be an elementary cycle of length~$k$ $(k\geq 2$) in the \mgraph of~$q$.
There exists a program in $\ssdatalog$ that takes a database $\db$ as input and returns, as output, the maximal garbage set for $C$ in $\db$. 
\end{lemma}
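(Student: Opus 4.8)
The plan is to assemble the Datalog program for the maximal garbage set by following the procedural characterization of Lemma~\ref{lem:scrubalgo} condition by condition, using the block-quotient graph as the bridge to Lemma~\ref{lem:longcycle}. First I would set up the easy infrastructure in low strata: the $\cmhook{C}$-graph itself (a fact $A\cmhook{C}B$ holds when $A\mhook B$ and $C$ has an edge from $\qatom{A}{q}$ to $\qatom{B}{q}$; by Definition~\ref{def:mhook} this is an existential over a valuation $\theta$ with $\theta(q)\subseteq\db$, hence first-order, hence a single non-recursive rule), the key-equality relation $\keyequal$ on facts (first-order), and the block-quotient graph from Definition~\ref{def:quotient}, which the text already notes is $\FO$-constructible. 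I would also encode, as a non-recursive rule, the predicate distinguishing relevant from irrelevant $1$-embeddings: a $1$-embedding is a $\cmhook{C}$-cycle of length exactly $k$ with no two distinct key-equal facts, and its relevance is the existence of a witnessing $\theta$ with $\theta(q)\subseteq\db$ covering all its facts — again a bounded existential, so first-order.

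Next I would handle conditions~\ref{it:scrubnotdangling}, \ref{it:scruboneembedding}, \ref{it:scrubmultiple}: the zero-outdegree facts are $\FO$; the irrelevant-$1$-embedding facts are $\FO$ via the predicate above; and the $n$-embedding ($n\ge 2$) facts are obtained, per Corollary~\ref{cor:scrubcomponent}, by marking entire strong components of the $\cmhook{C}$-graph that contain such an embedding. This is where Lemma~\ref{lem:longcycle} enters: I would run the $\problem{LONGCYCLE}(k)$ test on the block-quotient graph (which is connected-per-component, $k$-partite since the $\mgraph$ cycle $C$ imposes the cyclic partition, and has the property that every edge lies on a $k$-cycle after restricting to the relevant part). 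The inspection remark after Definition~\ref{def:quotient} tells us $\problem{LONGCYCLE}(k)$ is expressible in symmetric stratified Datalog; concretely, its logspace algorithm uses (i)~an $\FO$-testable disjunct over cycle lengths $nk$ for $2\le n\le 2k-3$, and (ii)~an undirected-connectivity / chordless-cycle test on $\po{G}$, and undirected reachability is the canonical symmetric-Datalog recursion (its recursive rule $\predicate{Trans}(x,z)\leftarrow\predicate{Trans}(x,y),\predicate{Link}(y,z)$ has the symmetric rule $\predicate{Trans}(x,y)\leftarrow\predicate{Trans}(x,z),\predicate{Link}(y,z)$ also in the program since $\predicate{Link}$ is symmetric). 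Having marked whole strong components, I would then pull these marks back from block-quotient vertices to the underlying facts.

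Then I would realize the recursive condition~\ref{it:recursive} and the closure condition~\ref{it:keyequalclosure}. Closure under $\keyequal$ is immediate: whenever a fact is marked garbage, mark its whole block; since $\keyequal$ is an $\FO$ predicate and blocks have bounded size in the sense that this is a single join, this is a non-recursive rule added above. The recursive condition is the genuinely recursive part beyond $\problem{LONGCYCLE}$: a fact of a relevant $1$-embedding becomes garbage as soon as one fact of that $1$-embedding is garbage, which propagates along a graph on facts whose edges are ``co-membership in a common relevant $1$-embedding.'' That co-membership graph is $\FO$-definable (bounded existential), the propagation is transitive-closure-style, and the rule $\predicate{Garbage}(A)\leftarrow\predicate{Garbage}(B),\predicate{SameRlvant1Emb}(A,B)$ is linear with a symmetric partner $\predicate{Garbage}(B)\leftarrow\predicate{Garbage}(A),\predicate{SameRlvant1Emb}(A,B)$, and $\predicate{SameRlvant1Emb}$ is a symmetric EDB-ish predicate of a lower stratum — so this recursion is symmetric. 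I would interleave this propagation with the block-closure (each propagation step feeding the closure and vice versa), which is fine since both are monotone and can be placed in a single symmetric stratum or a short cascade of strata. Finally I would argue correctness: the program computes exactly the minimal $\bfo$ satisfying the five conditions of Lemma~\ref{lem:scrubalgo} (seed sets from \ref{it:scrubnotdangling}--\ref{it:scrubmultiple}, least fixpoint under \ref{it:recursive} and \ref{it:keyequalclosure}), which by that lemma is the maximal garbage set for $C$ in $\db$.

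The main obstacle I expect is condition~\ref{it:scrubmultiple}: getting the unbounded-length $n$-embeddings without directed reachability. The resolution is exactly the detour already laid out — pass to the block-quotient graph, invoke Lemma~\ref{lem:longcycle}, and rely on the claim (from the proof sketch of that lemma) that its logspace algorithm is in fact a symmetric stratified Datalog program built on undirected connectivity of the auxiliary graph $\po{G}$ of $k$-cycles plus an $\FO$ shell. A secondary subtlety is verifying that the block-quotient graph genuinely satisfies the input promise of $\problem{LONGCYCLE}(k)$ (connectedness per component, $k$-partiteness, every edge on a $k$-cycle) — this needs a short argument using Lemma~\ref{lem:hockey} (which guarantees that $\mhook$-edges respect blocks, so the quotient is well-behaved) and the relevance machinery to discard edges not on any $k$-cycle before running the test. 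Everything else is routine $\FO$ definability and bookkeeping of strata.
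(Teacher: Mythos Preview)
Your proposal is correct and follows essentially the same route as the paper: implement the five conditions of Lemma~\ref{lem:scrubalgo} stratum by stratum, with conditions~\ref{it:scrubnotdangling} and~\ref{it:scruboneembedding} in $\FO$, condition~\ref{it:scrubmultiple} via the block-quotient graph and the symmetric-Datalog realization of $\problem{LONGCYCLE}(k)$ from Lemma~\ref{lem:longcycle}, and conditions~\ref{it:recursive} and~\ref{it:keyequalclosure} via a symmetric linear recursion over the ``same relevant $1$-embedding'' relation. One clarification: you phrase condition~\ref{it:scrubmultiple} as ``marking entire strong components,'' but you do not (and need not) compute strong components directly---the $\polylong$-style predicate marks only those $k$-cycles that lie on a long chordless cycle in $\po{G}$ (or on a short $n$-embedding found in $\FO$), and it is the subsequent symmetric propagation of conditions~\ref{it:recursive} and~\ref{it:keyequalclosure} that spreads the mark over the whole component, exactly as Corollary~\ref{cor:scrubcomponent} guarantees; the paper does the same. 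Also, your worry about discarding block-quotient edges not on any $k$-cycle is unnecessary: by Definition~\ref{def:mhook} every $\cmhook{C}$-edge is witnessed by a valuation $\theta$ with $\theta(q)\subseteq\db$, and the image of $\theta(C)$ in the block-quotient graph is already an elementary $k$-cycle through that edge.
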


\subsection{Elimination of \mCycles}\label{sec:eliminationofmcycle}

Given a database $\db$,
the Datalog program of Lemma~\ref{lem:scrub} allows us to compute the maximal garbage set $\bfo$ for $C$ in $\db$.
The $\cmhook{C}$-graph of $\db'\defeq\db\setminus\bfo$ will be a set of strong components, all initial, each of which is a collection of relevant $1$-embeddings of $C$ in $\db'$. 
The following Lemma~\ref{lem:toT} introduces a reduction that encodes this $\cmhook{C}$-graph by means of a fresh atom $T(\underline{u},\vec{w})$, where $\sequencevars{\vec{w}}=\queryvars{C}$ and $u$ is a fresh variable. 
Whenever $\theta(q)\subseteq\db'$ for some valuation $\theta$ over $\queryvars{q}$, the reduction will add to the database a fact $T(\underline{\cid},\theta(\vec{w}))$ where $\cid$ is an identifier for the strong component (in the $\cmhook{C}$-graph) that contains $\theta(C)$.
The construction is illustrated by Fig.~\ref{fig:toT}.
The following lemma captures this reduction and states that it \emph{(i)}~is expressible in symmetric stratified Datalog, and \emph{(ii)}~does not result in an increase of computational complexity.

\begin{figure}[t]\centering
\begin{tabular}{ll}
\raisebox{-16ex}{
\includegraphics[scale=0.8]{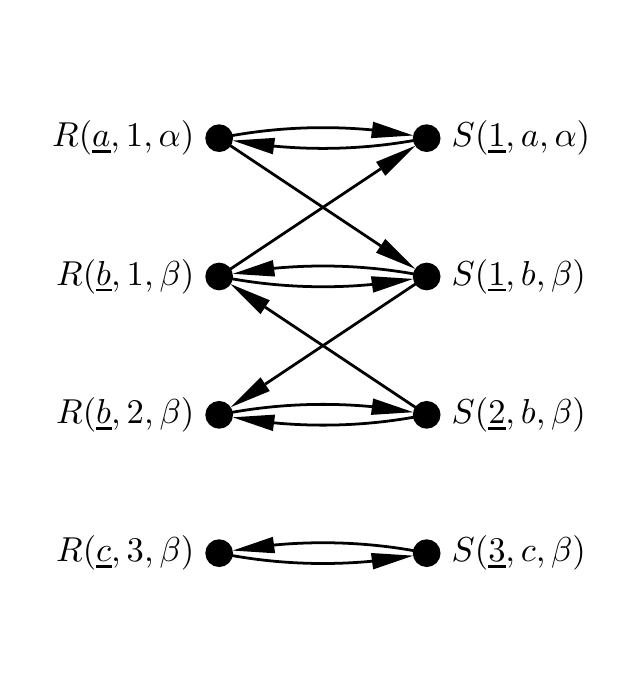}}
&
\begin{small}
$
\setlength{\arraycolsep}{1ex}
\begin{array}{lll}
\begin{array}{c|c;{1pt/1pt}*{3}{c}}
T & \underline{u} & x & y & z\bigstrut\\\cline{2-5}
  & a & a & 1 & \alpha\bigstrut\\
  & a & b & 1 & \beta\bigstrut\\
  & a & b & 2 & \beta\bigstrut\\\cdashline{2-5}
  & c & c & 3 & \beta\bigstrut
\end{array}
&
\begin{array}{c|cc}
N_{1}^{\mc} & \underline{x} & u\bigstrut\\\cline{2-3}
& a & a\\
& b & a\\
& c & c
\end{array}
&
\begin{array}{c|cc}
N_{2}^{\mc} & \underline{y} & u\bigstrut\\\cline{2-3}
& 1 & a\\
& 2 & a\\
& 3 & c
\end{array}
\end{array}
$
\end{small}
\end{tabular}
\vspace{-2ex}
\caption{
\emph{Left:}
Two strong components in  the $\cmhook{C}$-graph of a database for an \mcycle $R(\underline{x},y,z)\markov S(\underline{y},x,z)\markov R(\underline{x},y,z)$.
The maximal garbage set is empty.
\emph{Right:} Encoding of the relevant $1$-embeddings in each strong component.
The $u$-values $a$ and $c$ are used to identify the strong components, and are chosen as the smallest $x$-values in each strong component.
}\label{fig:toT}
\end{figure}

\begin{lemma}\label{lem:toT}
Let $q$ be a query in $\sjfbcq$.
Let $C=F_{0}\markov F_{1}\markov\dotsm\markov F_{k-1}\markov F_{0}$ with $k\geq 2$ be an elementary cycle in the \mgraph of $q$.
Let~$u$ be a variable such that $u\not\in\queryvars{q}$.
Let~$T$ be an atom with a fresh relation name such that
$\keyvars{T}=\{u\}$ and $\atomvars{T}=\queryvars{C}\cup\{u\}$.
Let~$p$ be a set containing, for every $i\in\{1,\dots,k\}$, an atom $N_{i}$ of mode~$\mc$ with a fresh relation name such that $\keyvars{N_{i}}=\keyvars{F_{i}}$ and $\atomvars{N_{i}}=\keyvars{F_{i}}\cup\{u\}$.
Then, 
\begin{enumerate}
\item
there exists a reduction from $\cqa{q}$ to $\cqa{\lrformula{q\setminus C}\cup\{T\}\cup p}$ that is expressible in $\ssdatalogmin$; and
\item
if the attack graph of $q$ contains no strong cycle
and some initial strong component of the attack graph contains every atom of $\{F_{0},F_{1},\dots,F_{k-1}\}$, 
then the attack graph of $\formula{q\setminus C}\cup\{T\}\cup p$ contains no strong cycle either.
\end{enumerate}
\end{lemma}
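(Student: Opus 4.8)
The plan is to build the reduction concretely and then verify its two properties. For part~(1), given an input database $\db$ to $\cqa{q}$, first run the $\ssdatalog$ program of Lemma~\ref{lem:scrub} to compute the maximal garbage set $\bfo$ for $C$ in $\db$; by Lemma~\ref{lem:nomenestomen}, $\db$ and $\db'\defeq\db\setminus\bfo$ agree on $\cqa{q}$. By Lemma~\ref{lem:together} and the procedural characterization in Lemma~\ref{lem:scrubalgo}, every fact of $\db'$ whose genre lies in $C$ belongs to a relevant $1$-embedding, and---because conditions~\ref{it:scruboneembedding} and~\ref{it:scrubmultiple} are now vacuous on $\db'$---the $\cmhook{C}$-graph of $\db'$ decomposes into strong components, each an initial collection of relevant $1$-embeddings, with no two key-equal facts in distinct components. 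Next, using the block-quotient graph (constructible in $\FO$) and undirected connectivity (Reingold), identify these strong components in $\ssdatalog$; then use a query of the form~(\ref{eq:groupby}) to pick, deterministically, a representative value---say the $\leq$-smallest $\keyvars{F_{1}}$-value occurring in the component---as the component identifier $\cid$. The new database $\db^{\star}$ keeps all facts of $\db'$ whose genre is \emph{not} in $C$, and, for every valuation $\theta$ with $\theta(q)\subseteq\db'$, adds the fact $T(\underline{\cid},\theta(\vec{w}))$ and the facts $N_{i}(\underline{\theta(\keyvars{F_{i}})},\cid)$ for $i\in\{1,\dots,k\}$. The $N_{i}$ are consistent because within one strong component the map from $\keyvars{F_{i}}$-values to $\cid$ is single-valued, and across components the key-equal closure guarantees no clash; $T$ may legitimately be inconsistent, reflecting the several relevant $1$-embeddings in one component. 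All of this is $\FO$-definable on top of the $\ssdatalog$-computed garbage set and the $\ssdatalog$-computed component identifiers, so the whole reduction is in $\ssdatalogmin$. Correctness---that $\db$'s answer to $\cqa{q}$ equals $\db^{\star}$'s answer to $\cqa{(q\setminus C)\cup\{T\}\cup p}$---follows by exhibiting a bijection between repairs of $\db'$ and repairs of $\db^{\star}$: choosing one fact per $C$-block in $\db'$ corresponds to choosing one $T$-fact per $\cid$-block (the $N_{i}$ being consistent contribute nothing to the choice), and a valuation witnessing $q$ in a repair of $\db'$ lifts to a valuation witnessing $(q\setminus C)\cup\{T\}\cup p$ in the corresponding repair of $\db^{\star}$ and back, using that $T$ records exactly $\queryvars{C}$ and the $N_{i}$ record the primary keys $\keyvars{F_{i}}$ together with the shared $\cid$.

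For part~(2), assume the attack graph of $q$ has no strong cycle and some initial strong component $\isc$ of it contains every atom of $C=\{F_0,\dots,F_{k-1}\}$. I would argue that the attack graph of $q'\defeq(q\setminus C)\cup\{T\}\cup p$ still has no strong cycle. The key observations: collapsing the atoms of $C$ to the single atom $T$ with $\keyvars{T}=\{u\}$ and $\atomvars{T}=\queryvars{C}\cup\{u\}$, together with the consistent atoms $N_i$ that functionally determine $u$ from each $\keyvars{F_i}$, does not create new attacking power. Because $\isc$ is an \emph{initial} strong component, no atom outside $C$ attacks into $C$ in $q$; after the reduction, the variable $u$ is fresh and is determined (via the $N_i^{\mc}$) by each $\keyvars{F_i}$, so $u$ cannot be attacked, and $T$'s key $\{u\}$ behaves inside $\FD{\catoms{q'}}$ like the ``tied together'' keys $\keyvars{F_0},\dots,\keyvars{F_{k-1}}$. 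Concretely I would show $\FD{q'}\models\fd{u}{\keyvars{F_i}}$ and $\fd{\keyvars{F_i}}{u}$ for each $i$ (from the $N_i$) and relate $\keycl{T}{q'}$ to the closures $\keycl{F_i}{q}$, then translate every witness for an attack $A\attacks{q'}B$ with $A,B\in q\setminus C$ into a witness in $q$ possibly routing through $C$, and vice versa, so that a strong cycle in $q'$ would yield a strong cycle in $q$. Weakness/strength of attacks is controlled by whether $\FD{q'}\models\fd{\keyvars{A}}{\keyvars{B}}$, and since the $N_i^{\mc}$ only \emph{add} functional dependencies (never remove any), an attack that was weak in $q$ cannot become strong in $q'$; one must check the converse direction---a new strong attack in $q'$---is impossible, which is where the initiality of $\isc$ and the saturation-style bookkeeping of $\FD{\catoms{\cdot}}$ do the work.

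The main obstacle I anticipate is precisely this attack-graph bookkeeping in part~(2): carefully tracking how $\keycl{\cdot}{\cdot}$ changes when $C$ is contracted to $T$ and the $N_i^{\mc}$ are added, and ruling out a spuriously created strong cycle through $T$. The delicate point is that $\atomvars{T}$ exposes \emph{all} of $\queryvars{C}$ at once (not just the keys $\keyvars{F_i}$), which could in principle let $T$ attack, or be attacked on, variables that no single $F_i$ could reach; one must use that $u$ being consistently determined by each $\keyvars{F_i}$ pins these variables down, together with the fact that $\isc$ was initial so the ``outside'' of $C$ never attacked into $C$. A secondary, more routine obstacle is verifying that the component-identifier computation and the generation of the $N_i$- and $T$-facts genuinely stay within $\ssdatalog$ (resp.\ $\ssdatalogmin$, once~(\ref{eq:groupby}) is invoked for the $\min$): this hinges on Lemma~\ref{lem:scrub} and on the block-quotient detour already established, so I expect it to go through smoothly. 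Everything else---consistency of the $N_i$, the repair bijection, the lifting of query-witnessing valuations---is a direct unwinding of the definitions of garbage set, embedding, and the new schema.
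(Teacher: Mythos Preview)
Your construction in part~(1) matches the paper's, but your correctness argument does not: there is \emph{no bijection} between repairs of $\db'$ and repairs of $\db^{\star}$. A single strong component of the $\cmhook{C}$-graph yields one $T$-block (so a repair of $\db^{\star}$ makes one choice there), but it typically contains many $F_{i}$-blocks, and a repair of $\db'$ makes an independent choice in each; the two counts simply differ (Fig.~\ref{fig:toT} already exhibits a component with three $T$-facts but four $C$-side repairs). What the paper does instead is a two-way simulation: for each repair on one side it builds a repair on the other side whose answer to the relevant query is no worse, explicitly noting that the correspondence is not invertible. Concretely, from a repair $\rep_{T}$ of the $T$-relation one takes $\rep_{C}$ to consist exactly of the relevant $1$-embeddings encoded by $\rep_{T}$; conversely, from a repair $\rep_{C}$ one picks, for each component, any one of the relevant $1$-embeddings that $\rep_{C}$ actually contains (every repair of the de-garbaged database contains at least one such embedding per component, but possibly several). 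Your ``lifting of valuations'' is then correct, but only after replacing the bijection by this asymmetric back-and-forth.

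Your plan for part~(2) rests on two claims that are false as stated. First, ``$\isc$ initial, hence no atom outside $C$ attacks into $C$'' conflates $C$ with $\isc$: the hypothesis is only that $C\subseteq\isc$, and atoms in $\isc\setminus C$ may well attack atoms of $C$ in $q$ (and hence attack $T$ in $q'$). Second, ``$u$ is determined by each $\keyvars{F_{i}}$ via $N_{i}^{\mc}$, so $u$ cannot be attacked'' is a non sequitur; $T$ can be attacked in $q'\cup p$. The paper's argument is different: it first observes $\keycl{H}{q}\subseteq\keycl{H}{q'\cup p}$ for every $H\in q\setminus C$ (because $\FD{q'\cup p}\models\fd{\keyvars{F_{i}}}{\atomvars{F_{i}}}$), and then proves the key property that if $H\attacks{q'\cup p}T$ then $H\in\isc$---not that no such $H$ exists. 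The remainder is a case analysis on a hypothetical length-$2$ strong cycle $H\attacks{q'\cup p}J\attacks{q'\cup p}H$: if neither endpoint attacks $T$, the witnesses avoid $T$ and the cycle already existed (weakly) in $q$; if one endpoint attacks $T$, then both endpoints lie in $\isc$ (or one of them is $T$ itself), and weakness follows from $\FD{q}\models\fd{\keyvars{F}}{\keyvars{G}}$ for all $F,G\in\isc$ together with the $N_{i}^{\mc}$-mediated equivalences $\fd{\keyvars{F_{i}}}{u}$ and $\fd{u}{\keyvars{F_{i}}}$. Your observation that the added consistent atoms can only \emph{add} functional dependencies (so weak attacks stay weak) is correct and is used, but it is not the crux; the crux is confining any attacker of $T$ to $\isc$, which requires initiality of $\isc$, not of $C$.
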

\begin{proof} \emph{(Crux)}
The full proof is in Appendix~\ref{app:scrub}.
The crux in the proof of the first item is the deterministic choice of $u$-values for $T$-blocks.
In Fig.~\ref{fig:toT}, for example, the $T$-block encoding the top strong component uses $u=a$, and the $T$-block encoding the bottom strong component uses $u=c$.
These $u$-values are the smallest $x$-values in the strong components, which can be obtained by the query~(\ref{eq:groupby}) introduced in Section~\ref{sec:preliminaries}.
\end{proof}

The proof of the main theorem, Theorem~\ref{the:effectivedichotomy}, is now fairly straightforward.


\begin{proof}[Proof of Theorem~\ref{the:effectivedichotomy}]
Let $q$ be a query in $\sjfbcq$.
We can assume that $q$ is saturated; if not, we first apply the reduction of Lemma~\ref{lem:corsaturate}.
The first item follows from~\cite[Theorem~3.2]{DBLP:journals/tods/KoutrisW17}.
In the remainder of the proof, we treat the case that the attack graph of $q$ contains no strong cycle.
The proof runs by induction on the number of atoms in $q$ that are of mode~$\mi$.
The desired result is obvious if $q$ contains no atom of mode~$\mi$.
Assume next that $q$ contains an atom of mode~$\mi$.
We distinguish two cases.

\myparagraph{Case that the attack graph contains an unattacked atom of mode~$\mi$}
If the attack graph of $q$ contains an unattacked atom of mode~$\mi$, say $R(\underline{\vec{x}},\vec{y})$, then it is known (see, e.g., \cite[Lemma~4.4]{DBLP:journals/tods/KoutrisW17}) that $q$ is true in every repair only if there exists a valuation $\theta$ over $\sequencevars{\vec{x}}$ such that $\theta(q)$ is true in every repair.
Obviously, if $q$ contains an atom $R(\underline{\vec{a}},\vec{y})$, where $\vec{a}$ contains no variables, then $q$ is true in every repair only if the input database contains a fact $R(\underline{\vec{a}},\vec{b})$ such that for every $A\in R(\underline{\vec{a}},\blockfiller)$,
there exists a valuation $\theta$ over $\sequencevars{\vec{y}}$ such that $R(\underline{\vec{a}},\theta(\vec{y}))=A$ and $\theta(q')$ is true in every repair, where $q'=q\setminus\{R(\underline{\vec{x}},\vec{y})\}$.
All this is expressible in first-order logic, and the induction hypothesis applies to $\theta(q')$.

\myparagraph{Case that all atoms of mode $\mi$ are attacked}
Then, every initial strong component of the attack graph contains at least two atoms.
By Lemma~\ref{lem:happy},
the \mgraph of $q$ has a cycle $C$ all of whose atoms belong to one and the same initial strong component of the attack graph of $q$.
By Lemma~\ref{lem:toT}, there exists a reduction, expressible in $\ssdatalogmin$, from $\cqa{q}$ to $\cqa{\formula{q\setminus C}\cup\{T\}\cup p}$ such that the attack graph of $\formula{q\setminus C}\cup\{T\}\cup p$ contains no strong cycle.
Since the number of atoms of mode~$\mi$ in $\formula{q\setminus C}\cup\{T\}\cup p$ is strictly less than in $q$ (because $C$ is replaced with $T$ and all atoms in $p$ have mode~$\mc$), by the induction hypothesis, $\cqa{\formula{q\setminus C}\cup\{T\}\cup p}$ is expressible in $\ssdatalogmin$.
It follows that $\cqa{q}$ is expressible in $\ssdatalogmin$.
\end{proof}

\section{Joins on Primary Keys}\label{sec:keyjoin}

It is common that the join condition in a join of two tables expresses a foreign-to-primary key match, i.e., the columns (called the foreign key) of one table  reference the primary key of another table.
In our setting, we have primary keys but no foreign keys.
Nevertheless, foreign keys can often be inferred from the query.
For example, in the following query, the variable $d$ in $\rmov$ references the primary key of $\rdir$:
$$
\{\rmov(\underline{m},t,\constant{1963},d), 
\rdir(\underline{d},\constant{Hitchcock},b)\}.
$$
Given relation schemas 
$\rmov(\underline{\att{M\#}},\att{Title},\att{Year}, \att{Director})$ and
$\rdir(\underline{\att{D\#}}, \att{Name}, \att{BirthYear})$,
this query asks whether there exists a movie released in 1963 and directed by Hitchcock.

The {\em key-join property} that we define below captures this common type of join. Informally, a query has the key-join property if whenever two atoms have a variable in common, then their set of shared variables is either equal to the set of primary-key variables of one of the atoms, or contains all primary-key variables of both atoms.

\begin{definition}\label{def:keyjoin}
We say that a query $q$ in $\sjfbcq$ has the {\em key-join property\/} if 
for all $F,G\in q$, either $\atomvars{F}\cap\atomvars{G}\in\{\emptyset,\keyvars{F},\keyvars{G}\}$ or $\atomvars{F}\cap\atomvars{G}\supseteq\keyvars{F}\cup\keyvars{G}$.
\end{definition}

Theorem~\ref{the:warmup} shows that if a query $q$ in $\sjfbcq$ has the key-join property,
then $\cqa{q}$ falls on the logspace side of the dichotomy of Theorem~\ref{the:effectivedichotomy}.

\begin{theorem}\label{the:warmup}
For every query $q$ in $\sjfbcq$ that has the key-join property, $\cqa{q}$ is expressible in $\ssdatalogmin$ (and is thus in $\logspace$).
\end{theorem}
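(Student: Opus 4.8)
The plan is to prove Theorem~\ref{the:warmup} via Theorem~\ref{the:effectivedichotomy}: it suffices to show that if $q\in\sjfbcq$ has the key-join property, then the attack graph of $q$ contains no strong cycle. In fact I would aim for the stronger statement that the attack graph of $q$ has no cycle at all, or — more cautiously — that every cycle in the attack graph consists of weak attacks. Recall that an attack $F\attacks{q}G$ is weak iff $\FD{q}\models\fd{\keyvars{F}}{\keyvars{G}}$. So the goal reduces to: whenever $F\attacks{q}G$ in a key-join query, we have $\FD{q}\models\fd{\keyvars{F}}{\keyvars{G}}$, which kills every strong attack outright and hence every strong cycle.

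First I would unpack what an attack means under the key-join property. If $F\attacks{q}G$, there is a witness $F_{0}\step{x_1}F_1\step{x_2}\dotsm\step{x_\ell}F_\ell$ with $F_0=F$, $F_\ell=G$, and each $x_i\in(\atomvars{F_{i-1}}\cap\atomvars{F_i})\setminus\keycl{F}{q}$. The crucial point is that $x_i$ lies in the intersection $\atomvars{F_{i-1}}\cap\atomvars{F_i}$, which by the key-join property is one of $\keyvars{F_{i-1}}$, $\keyvars{F_i}$, or a superset of $\keyvars{F_{i-1}}\cup\keyvars{F_i}$. In the last two cases $\keyvars{F_i}\subseteq\atomvars{F_{i-1}}$; in the first case $x_i\in\keyvars{F_{i-1}}$, but then $x_i\in\keycl{F}{q}$ would fail only if $x_i\notin\keycl{F}{q}$ — I need to track carefully which variables can escape the key-closure. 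The idea is that a variable $x$ not in $\keycl{F}{q}$ behaves like a ``free'' variable that $\keyvars{F}$ does not functionally determine even with the consistent atoms; the key-join structure is rigid enough that such a variable, appearing in a shared position, forces the primary key of one of the two atoms to sit inside the other atom's variables, which is exactly what generates a functional dependency $\fd{\keyvars{F_{i-1}}}{\keyvars{F_i}}$ along the witness. Chaining these gives $\FD{q}\models\fd{\keyvars{F}}{\keyvars{G}}$.

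The key steps, in order: (1) state and prove a structural lemma that for any two atoms $F,G$ with $\atomvars{F}\cap\atomvars{G}\neq\emptyset$ in a key-join query, either $\keyvars{G}\subseteq\atomvars{F}$ or $\keyvars{F}\subseteq\atomvars{G}$, and moreover if the intersection contains a variable outside $\keyvars{F}$ then $\keyvars{F}\subseteq\atomvars{G}$ (symmetrically for $G$); (2) use this to show that each step $F_{i-1}\step{x_i}F_i$ of an attack witness forces $\keyvars{F_i}\subseteq\atomvars{F_{i-1}}$ — here the hypothesis $x_i\notin\keycl{F}{q}$ should rule out the degenerate case $\atomvars{F_{i-1}}\cap\atomvars{F_i}=\keyvars{F_{i-1}}$ with $\keyvars{F_i}\not\subseteq\atomvars{F_{i-1}}$, because otherwise $x_i\in\keyvars{F_{i-1}}\subseteq\keycl{F_{i-1}}{q}$ and a closure-monotonicity argument would put $x_i\in\keycl{F}{q}$; (3) conclude $\FD{q}\models\fd{\keyvars{F_{i-1}}}{\keyvars{F_i}}$ for each $i$, hence $\FD{q}\models\fd{\keyvars{F}}{\keyvars{G}}$ by transitivity, so the attack is weak; (4) since every attack is weak, no strong cycle exists, and Theorem~\ref{the:effectivedichotomy} yields expressibility of $\cqa{q}$ in $\ssdatalogmin$.

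The main obstacle I anticipate is step (2): correctly handling the interaction between the key-closure $\keycl{F}{q}$ (which is computed relative to $\FD{q\setminus\{F\}}\cup\FD{\catoms{q}}$, not $\FD{q}$) and the key-join combinatorics. In particular I must be careful that the witness uses $\keycl{F}{q}$ for the fixed starting atom $F$, not $\keycl{F_{i-1}}{q}$, so the escape-variable argument has to propagate: I would argue by induction along the witness that $\keyvars{F_i}\subseteq\keycl{F}{q}$ fails implies something usable, or alternatively that if $x_i\notin\keycl{F}{q}$ then $\keyvars{F_{i-1}}\not\subseteq\keycl{F}{q}$ is impossible unless $\keyvars{F_i}\subseteq\atomvars{F_{i-1}}$ — essentially tracking how ``determination by $\keyvars{F}$'' flows through the chain of atoms. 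A secondary subtlety is whether we also need the query to be saturated or need to first apply Lemma~\ref{lem:corsaturate}; I would check whether the key-join property is preserved under saturation, and if the cleanest route is to reduce to the saturated case first, do that before running the argument, noting that adding consistent atoms $N^{\mc}(\underline{Z},w)$ only adds functional dependencies and does not create strong attacks.
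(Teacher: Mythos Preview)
Your overall strategy is correct and matches the paper's: show that in a key-join query every attack $F\attacks{q}G$ is weak (i.e., $\FD{q}\models\fd{\keyvars{F}}{\keyvars{G}}$), then invoke Theorem~\ref{the:effectivedichotomy}. The key lemma you aim for---that each step of an attack witness satisfies $\keyvars{F_i}\subseteq\atomvars{F_{i-1}}$---is exactly the paper's Lemma~\ref{lem:warmup}.

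However, your worry about step~(2) is well-founded, and your proposed fix via ``closure-monotonicity'' does not work. Even after establishing $\keyvars{F_j}\subseteq\atomvars{F_{j-1}}$ for all $j\le i$, you cannot conclude $\keyvars{F_i}\subseteq\keycl{F}{q}$: the chain of functional dependencies you would need begins with $\fd{\keyvars{F}}{\atomvars{F}}$, which comes from $F$ itself and is therefore excluded from $\FD{q\setminus\{F\}}$. So from $x_{i+1}\in\keyvars{F_i}$ you cannot derive $x_{i+1}\in\keycl{F}{q}$, and the degenerate case $\atomvars{F_i}\cap\atomvars{F_{i+1}}=\keyvars{F_i}$ is not ruled out this way.

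The paper's fix is purely combinatorial: take a \emph{shortest} witness for $F\attacks{q}G$. Then consecutive intersections $\atomvars{F_{i-1}}\cap\atomvars{F_{i}}$ and $\atomvars{F_{i}}\cap\atomvars{F_{i+1}}$ are $\subseteq$-incomparable (otherwise the witness could be shortened). The induction step now goes through without any appeal to $\keycl{F}{q}$: if $\keyvars{F_i}\subseteq\atomvars{F_{i-1}}$ holds by induction, then $\keyvars{F_i}\subseteq\atomvars{F_{i-1}}\cap\atomvars{F_i}$; if in addition $\atomvars{F_i}\cap\atomvars{F_{i+1}}=\keyvars{F_i}$, then $\atomvars{F_i}\cap\atomvars{F_{i+1}}\subseteq\atomvars{F_{i-1}}\cap\atomvars{F_i}$, contradicting incomparability. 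Only the base case uses $x_1\notin\keycl{F}{q}$ (to get $x_1\notin\keyvars{F_0}$).

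Saturation is not needed for this argument; you can drop that concern entirely.
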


It is worth noting that many of the queries covered by Theorem~\ref{the:warmup} have an acyclic attack graph as well, and thus even have a consistent first-order rewriting.  

\section{Conclusion}\label{sec:conclusion}

The main result of this paper is a theorem stating that for every query $q$ in $\sjfbcq$ (i.e., the class of self-join-free Boolean conjunctive queries),  $\cqa{q}$ is $\coNP$-complete or expressible in symmetric stratified Datalog (and thus in $\logspace$).
Since there exist queries $q\in\sjfbcq$ such that $\cqa{q}$ is $\logspace$-complete,
the logspace upper bound in Theorem~\ref{the:dichotomy} is tight.
The theorem thus culminates a long line of research that started with the ICDT~2005 paper of Fuxman and Miller~\cite{FUXMAN2005}.


An intriguing open problem is to extend these complexity results to Boolean conjunctive queries with self-joins and to $\ucq$. Progress in the latter problem may deepen our understanding of relationships between CQA and CSP, which were first discovered in~\cite{DBLP:conf/lics/Fontaine13}.


\newcommand{\etalchar}[1]{$^{#1}$}

\newpage
\appendix
 
\section{Overview of Different Graphs and Notations}\label{sec:notation}

 
 \noindent
\begin{tabular}{*{4}{|l}|}\hline
Graph & Vertices & Edge     & Short Description\bigstrut\\
      &          & Notation &                  \\\hline\hline
\smallcell{attack graph} & \smallcell{query atoms\bigstrut} & $F\attacks{q}G$ & \cell{See Section~\ref{sec:preliminaries}}\bigstrut\\\hline
\mgraph & \smallcell{query atoms\bigstrut} & $F\markov G$ & \cell{Definition~\ref{def:mbis}}\\\hline
$\mhook$-graph & \smallcell{database facts} & $A\mhook B$ & \cell{Definition~\ref{def:mhook}, data-level intantiation of the \mgraph}\bigstrut\\\hline
$\cmhook{C}$-graph & \smallcell{database facts} & $A\cmhook{C} B$ & \cell{Definition~\ref{def:cmhook}, subgraph of the $\mhook$-graph induced by an \mcycle $C$}\bigstrut\\\hline
\smallcell{block-quotient graph} & \smallcell{database blocks} & $(\block,\block')$ & \cell{Definition~\ref{def:quotient}, quotient graph of the $\cmhook{C}$-graph relative to the equivalence relation ``is key-equal to''}\bigstrut\\\hline
\end{tabular}

\medskip

{\@ifundefined{bigstrutjot}{\newdimen\bigstrutjot}{}\bigstrutjot=1.5pt
\noindent
\begin{tabular}{|c|l|}\hline
Notation & Meaning\bigstrut\\\hline\hline
$\keyvars{F}$ & the set of all variables occurring in the primary key of atom $F$\bigstrut\\
$\atomvars{F}$ & the set of all variables occurring in atom $F$\bigstrut\\
$\queryvars{q}$ & the set of all variables occurring in query $q$\bigstrut\\
$\keyequal$ & the equivalence relation ``is key-equal to'', e.g., $R(\underline{a},1)\keyequal R(\underline{a},2)$\bigstrut\\
$\repairs{\db}$ & the set of all repairs of a database $\db$\bigstrut\\
$\theblock{A}{\db}$ & the set of all facts in $\db$ that are key-equal to the fact $A$\bigstrut\\
$R(\underline{\vec{a}},\blockfiller)$ & the set of all database facts of the form $R(\underline{\vec{a}},\vec{b})$, for some $\vec{b}$\bigstrut\\
$\sjfbcq$ & the class of self-join-free Boolean conjunctive queries\bigstrut\\
$\ucq$ & the class of unions of conjunctive queries\bigstrut\\
$R^{\mc}$ & a relation name of mode~$\mc$, which must be interpreted by a consistent relation\bigstrut\\
$\catoms{q}$ & the set of all atoms of query $q$ having a relation name of mode~$\mc$\bigstrut\\
$\FD{q}$ & the set containing $\fd{\keyvars{F}}{\atomvars{F}}$ for every $F\in q$\bigstrut\\
$\keycl{F}{q}$ & the closure of $\keyvars{F}$ with respect to the FDs in $\FD{q\setminus\{F\}}\cup\FD{\catoms{q}}$\bigstrut\\
$\qatom{A}{q}$ & the atom of $q$ with the same relation name as the fact $A$\bigstrut\\
$V(G)$ & the vertex set of a graph $G$\bigstrut\\
$E(G)$ & the edge set of a graph $G$\bigstrut\\
$\uplus$ & a set union that happens to be disjoint\bigstrut\\
\hline
\end{tabular}}
 
\section{Proofs of Section~\ref{sec:glimpse}}

We use the following helping lemma.

\begin{lemma}\label{lem:saturate}
Let $q$ be a query in $\sjfbcq$.
Let $\fd{Z}{w}$ be a functional dependency that is internal to $q$.
Let $\vec{z}$ be a sequence of distinct variables such that $\sequencevars{\vec{z}}=Z$.
Let $q'=q\cup\{N^{\mc}(\underline{\vec{z}},w)\}$ where $N$ is a fresh relation name of mode~$\mc$. 
Then, 
\begin{enumerate}
\item
there exists a first-order reduction from $\cqa{q}$ to $\cqa{q'}$; and
\item
if the attack graph of $q$ contains no strong cycle,
then the attack graph of $\cqa{q'}$ contains no strong cycle.
\end{enumerate} 
\end{lemma}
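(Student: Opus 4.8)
The plan is to prove Lemma~\ref{lem:saturate} by establishing each of its two items separately, with the first-order reduction being the heart of the argument.

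\textbf{The reduction (Item 1).} First I would observe that since $\fd{Z}{w}$ is internal to $q$, there is a sequential proof $F_1,\dots,F_\ell$ for $\FD{q}\models\fd{Z}{w}$ in which no atom $F_i$ attacks any variable in $Z\cup\{w\}$, and moreover $Z\subseteq\atomvars{F}$ for some $F\in q$. The reduction takes an input database $\db$ for $\cqa{q}$ and must produce a database $\db'$ for $\cqa{q'}$; the natural choice is to populate the fresh consistent relation $N^{\mc}$ by following the sequential proof. Concretely, for each valuation that witnesses a $\vec{z}$-value appearing in the ``core'' of $\db$ via $F$, we chase through $F_1,\dots,F_\ell$ to compute the corresponding $w$-value; since $F_\ell$ (or some $F_k$) contains $w$, this yields a fact $N^{\mc}(\underline{\vec{a}},b)$. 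The crucial point is that this is well-defined (each $\vec{z}$-value gets at most one $w$-value in $\db'$): this is where the ``no atom attacks a variable in $Z\cup\{w\}$'' hypothesis is used, via a result from~\cite{DBLP:journals/tods/KoutrisW17} (essentially, unattacked variables are functionally determined across all repairs that satisfy the relevant atoms). I would then argue correctness: every repair of $\db$ satisfies $q$ iff every repair of $\db'$ satisfies $q'$. For the ``forward'' direction, a repair $\rep'$ of $\db'$ restricts to a repair $\rep$ of $\db$; any valuation $\theta$ with $\theta(q)\subseteq\rep$ extends to $\theta(q')\subseteq\rep'$ because the $N^{\mc}$-fact for $\theta(\vec{z})$ must record exactly $\theta(w)$ (again using that $w$ is not attacked along the proof, so the value computed by the chase agrees with any value consistent with $\rep$). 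The ``backward'' direction is the easier containment. Finally I would check the reduction is first-order expressible: it only needs to detect, for bounded-length sequential-proof patterns, the induced $(\vec{z},w)$ pairs, which is a bounded conjunctive computation.

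\textbf{Preservation of ``no strong cycle'' (Item 2).} For the second item I would compare the attack graph of $q'$ with that of $q$. The new atom $N^{\mc}(\underline{\vec{z}},w)$ has mode~$\mc$ and keyvars $Z$, atomvars $Z\cup\{w\}$. Adding it to $\FD{\catoms{q}}$ (which governs the key closures $\keycl{\cdot}{}$) can only \emph{enlarge} key closures $\keycl{F}{q'}\supseteq\keycl{F}{q}$ for the old atoms, hence can only \emph{remove} attack edges among old atoms, never add them. So no new strong cycle is created among the old atoms. The remaining worry is cycles through the new atom $N^{\mc}$; but since $N^{\mc}$ is consistent and $\fd{Z}{w}$ follows already from $\FD{\catoms{q'}}$, any attack into or out of $N^{\mc}$ is weak (indeed $\keycl{N^{\mc}}{q'}\supseteq Z\cup\{w\}=\atomvars{N^{\mc}}$, so $N^{\mc}$ attacks nothing, and any incoming attack $G\attacks{q'}N^{\mc}$ requires $\FD{q'}\models\fd{\keyvars{G}}{Z}$, making it weak). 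Hence if $q$ has no strong cycle, neither does $q'$. (I would double-check whether a strong cycle could arise that merely passes through $N^{\mc}$ as an intermediate vertex of a witness path rather than as an endpoint — but since $N^{\mc}$ has no outgoing attack, it cannot be an internal vertex of a cycle.)

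\textbf{Main obstacle.} The hard part will be the correctness proof of the reduction, specifically showing that the fresh relation $N^{\mc}$ is both well-defined on $\db'$ and ``honest'' with respect to \emph{every} repair of the original $\db$ — i.e., that whenever $\theta(q)\subseteq\rep$ for a repair $\rep$ of $\db$, the $w$-value we precomputed and stored in $N^{\mc}$ coincides with $\theta(w)$. This is exactly the place where both conditions in the definition of ``internal'' (the attack-free sequential proof, and $Z$ being contained in a single atom's variables so that $\vec{z}$-values are actually materialized in the database) are essential, and it will require invoking the machinery from~\cite{DBLP:journals/tods/KoutrisW17} relating unattacked variables to functional determinacy across repairs. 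The first-order expressibility and the attack-graph preservation are comparatively routine once the right lemmas are cited.
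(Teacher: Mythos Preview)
Your proposal has genuine gaps in both items.

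\textbf{Item 1.} Your claim that $N^{\mc}$ is well-defined on the input database $\db$ is false as stated. The functional dependency $\fd{Z}{w}$ is implied by $\FD{q}$, but $\db$ is \emph{inconsistent}: two valuations $\beta_1,\beta_2$ over $\queryvars{q}$ with $\beta_1(q),\beta_2(q)\subseteq\db$ can perfectly well satisfy $\beta_1[Z]=\beta_2[Z]$ yet $\beta_1(w)\neq\beta_2(w)$. The results you cite from~\cite{DBLP:journals/tods/KoutrisW17} concern what happens \emph{within a fixed repair}, not across the inconsistent database, so they do not give you uniqueness of the $w$-value for a given $\vec{z}$-value in $\db$. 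The paper's proof deals with this by a preliminary \emph{purification} step: whenever such $\beta_1,\beta_2$ exist, it deletes the $F$-block(s) containing $\beta_1(F)$ and $\beta_2(F)$ (where $F$ is the atom with $Z\subseteq\atomvars{F}$), and proves---this is the real work, using the attack-free sequential proof and an induction that swaps facts one atom at a time---that purification preserves the answer to $\cqa{q}$. Only after purification is $N^{\mc}$ guaranteed to be a consistent relation. Your proposal skips this entirely.

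\textbf{Item 2.} You conflate two different roles of $N^{\mc}$: being a \emph{vertex} of the attack graph (an endpoint of an attack edge) versus appearing as an \emph{intermediate atom in a witness sequence} for an attack $F\attacks{q'}H$ between two old atoms $F,H\in q$. Your observation that $N^{\mc}$ has no outgoing attack correctly rules out cycles through $N^{\mc}$ as a vertex, but it says nothing about the second role. Concretely, a witness $\cdots F'\step{u'}N^{\mc}(\underline{\vec{z}},w)\step{u''}F''\cdots$ may create an attack edge $F\attacks{q'}H$ that did not exist in $q$, even though key closures only grew. Your assertion that ``adding $N^{\mc}$ \dots\ can only remove attack edges among old atoms'' is therefore not justified. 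The paper closes this gap by showing explicitly that any occurrence of $N^{\mc}$ in a witness can be replaced: if $u',u''\in Z$ one substitutes the atom $G$ with $Z\subseteq\atomvars{G}$; if one of $u',u''$ equals $w$, one threads the witness through a suitable subsequence of the sequential-proof atoms $J_{i_0},\dots,J_{i_m}$, using that none of them contains $F$ (since $F$ attacks a variable in $Z\cup\{w\}$).
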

\begin{proof}
\myparagraph{Proof of the first item}
By the second condition in Definition~\ref{def:saturated}, we can assume an atom $F\in q$ such that $Z\subseteq\atomvars{F}$.
Let $F_{1},F_{2},\dots,F_{\ell}$ be a sequential proof for $\FD{q}\models\fd{Z}{w}$
such that for every $i\in\{1,\dots,\ell\}$, for every $u\in Z\cup\{w\}$, $F_{i}\nattacks{q}u$.
It can be easily seen that for every $i\in\{0,\dots,\ell-1\}$, we have
\begin{equation}\label{eq:fdimpl}
\FD{\{F_{j}\}_{j=1}^{i}}\models\fd{Z}{\keyvars{F_{i+1}}}.
\end{equation}

Let $\db$ be a database that is the input to $\cqa{q}$.
We repeat the following ``purification" step:
If for two valuations over $\queryvars{q}$, denoted $\beta_{1}$ and $\beta_{2}$,  
we have $\beta_{1}(q),\beta_{2}(q)\subseteq\db$ and $\{\beta_{1},\beta_{2}\}\not\models\fd{Z}{w}$,
then we remove both the $F$-block containing $\beta_{1}(F)$ and
the $F$-block containing $\beta_{2}(F)$.
Note that $\beta_{1}(F)$ and $\beta_{2}(F)$ may be key-equal, and hence belong to the same $F$-block.

Assume that we apply this step on $\db'$ and obtain $\db''$.
We show that some repair of $\db'$ falsifies $q$ if and only if some repair of $\db''$ falsifies $q$.
The $\implies$-direction trivially holds true.
For the $\impliedby$-direction, let $\rep''$ be a repair of $\db''$ that falsifies $q$.
Assume, toward a contradiction, that every repair of $\db'$ satisfies $q$.
For every repair $\rep$,
define $\rifi{\rep}$ as the set of valuations over $Z\cup\{w\}$ containing $\theta$ if $\rep\models\theta(q)$. 
Let 
\[
\rep'=
\left\{
\begin{array}{ll}
\rep''\cup\{\beta_{j}(F)\} \mbox{\ for some $j\in\{1,2\}$} & \mbox{if $\beta_{1}(F)$ and $\beta_{2}(F)$ are key-equal}\\
 \rep''\cup\{\beta_{1}(F),\beta_{2}(F)\} & \mbox{otherwise}
\end{array}
\right.
\]
Note that if $\beta_{1}(F)$ and $\beta_{2}(F)$ are key-equal,
then we can choose either 
$\rep'=\rep''\cup\{\beta_{1}(F)\}$ or 
$\rep'=\rep''\cup\{\beta_{2}(F)\}$;
the actual choice does not matter.
Obviously, $\rep'$ is a repair of $\db'$.
Since we assumed that every repair of $\db'$ satisfies $q$,
we can assume a valuation $\alpha$ over $\queryvars{q}$ such that $\alpha(q)\subseteq\rep'$.
Since $\alpha(q)\nsubseteq\rep''$ (because $\rep''\not\models q$), it must be the case that for some $j\in\{1,2\}$, 
$\alpha(F)=\beta_{j}(F)$.
From $\sequencevars{\vec{z}}=Z\subseteq\atomvars{F}$,
it follows that $\alpha(\vec{z})=\beta_{j}(\vec{z})$.
From $\beta_{1}(\vec{z})=\beta_{2}(\vec{z})$,
it follows $\alpha(\vec{z})=\beta_{1}(\vec{z})$ and $\alpha(\vec{z})=\beta_{2}(\vec{z})$.
Since $\beta_{1}(w)\neq\beta_{2}(w)$, either $\alpha(w)\neq\beta_{1}(w)$ or $\alpha(w)\neq\beta_{2}(w)$ (or both).
Therefore, we can assume $b\in\{1,2\}$ such that $\alpha(w)\neq\beta_{b}(w)$. 
It will be the case that $\rifi{\rep'}=\{\alpha[Z\cup\{w\}]\}$.\footnote{Here, $\alpha[Z\cup\{w\}]$ is the restriction of $\alpha$ to $Z\cup\{w\}$;
this restriction is the identity on variables not in $Z\cup\{w\}$.}
Indeed, since $\alpha$ is an arbitrary valuation over $\queryvars{q}$ such that $\alpha(q)\subseteq\rep'$, it follows that for all valuations $\alpha_{1},\alpha_{2}$ over $\queryvars{q}$, if $\alpha_{1}(q),\alpha_{2}(q)\subseteq\rep'$,
then $\alpha_{1}(\vec{z})=\alpha_{2}(\vec{z})$ and thus, by~\cite[Lemma~4.3]{DBLP:journals/tods/Wijsen12}) and using that $\FD{q}\models\fd{Z}{w}$, 
we have  $\alpha_{1}(w)=\alpha_{2}(w)$.

We show that for all $i\in\{0,1,\dots,\ell\}$, there exists a pair
$(\rep'^{i},\alpha^{i})$
such that
\begin{enumerate}
\item\label{it:pone}
$\rep'^{i}$ is a repair of $\db'$;
\item \label{it:ptwo}
$\alpha^{i}$ is a valuation over $\queryvars{q}$ such that $\alpha^{i}(q)\subseteq\rep'^{i}$; 
\item\label{it:pthree}
$\alpha^{i}(\{F_{j}\}_{j=1}^{i})=\beta_{b}(\{F_{j}\}_{j=1}^{i})$ 
and 
$\alpha^{i}(\vec{z})=\beta_{b}(\vec{z})$ 
(thus $\alpha^{i}(\vec{z})=\alpha(\vec{z})$); 
\item\label{it:pfour}
$\alpha^{i}(w)=\alpha(w)$; and
\item\label{it:pfive}
$\rifi{\rep'^{i}}=\{\alpha[Z\cup\{w\}]\}$.
\end{enumerate}
The third condition entails that $\{\alpha^{i},\beta_{b}\}\models\FD{\{F_{j}\}_{j=1}^{i}}$.
From Equation~(\ref{eq:fdimpl}), it follows 
$\{\alpha^{i},\beta_{b}\}\models\fd{Z}{\keyvars{F_{i+1}}}$.
Then, from $\alpha^{i}(\vec{z})=\beta_{b}(\vec{z})$,
it follows that $\alpha^{i}$ and $\beta_{b}$ agree on all variables of $\keyvars{F_{i+1}}$.

The proof runs by induction on increasing $i$.
For the basis of the induction, $i=0$,
the desired result holds by choosing $\rep'^{0}=\rep'$ and $\alpha^{0}=\alpha$.

For the induction step, $i\rightarrow i+1$,
the induction hypothesis is that the desired pair $(\rep'^{i},\alpha^{i})$ exists for some $i\in\{0,1,\dots,\ell-1\}$.
Since $\alpha^{i}$ and $\beta_{b}$ agree on all variables of $\keyvars{F_{i+1}}$,
we have that $\alpha^{i}(F_{i+1})$ and $\beta_{b}(F_{i+1})$ are key-equal.
From $\beta_{b}(q)\subseteq\db'$,
it follows that $\beta_{b}(F_{i+1})\in\db'$.
Let $\rep'^{i+1}=\lrformula{\rep'^{i}\setminus\{\alpha^{i}(F_{i+1})\}}\cup\{\beta_{b}(F_{i+1})\}$, which is obviously a repair of $\db'$.
Since $F_{i+1}\nattacks{q}u$ for all $u\in Z\cup\{w\}$, $\rifi{\rep'^{i+1}}\subseteq\rifi{\rep'^{i}}$ by~\cite[Lemma~B.1]{DBLP:journals/tods/KoutrisW17}.
Since we assumed that every repair of $\db'$ satisfies $q$,
we have that $\rifi{\rep'^{i+1}}\neq\emptyset$, 
and thus $\rifi{\rep'^{i+1}}=\{\alpha[Z\cup\{w\}]\}$.
Hence, there exists a valuation $\alpha^{i+1}$ over $\queryvars{q}$ such that $\alpha^{i+1}(q)\subseteq\rep'^{i+1}$ and $\alpha^{i+1}[Z\cup\{w\}]=\alpha[Z\cup\{w\}]$, that is,
$\alpha^{i+1}(\vec{z})=\alpha(\vec{z})$ and $\alpha^{i+1}(w)=\alpha(w)$.
Since $\alpha(\vec{z})=\beta_{b}(\vec{z})$, we have $\alpha^{i+1}(\vec{z})=\beta_{b}(\vec{z})$.
We have thus shown that the pair $(\rep'^{i+1},\alpha^{i+1})$ satisfies items~\ref{it:pone}, \ref{it:ptwo}, \ref{it:pfour}, and~\ref{it:pfive} in the above five-item list;
we also have shown the second conjunct of item~\ref{it:pthree}.
In the next paragraph, we show that 
$\alpha^{i+1}(\{F_{j}\}_{j=1}^{i+1})=\beta_{b}(\{F_{j}\}_{j=1}^{i+1})$, i.e.,
the first conjunct of item~\ref{it:pthree}.

By the induction hypothesis, 
$\alpha^{i}(\{F_{j}\}_{j=1}^{i})=\beta_{b}(\{F_{j}\}_{j=1}^{i})$
and
$\alpha^{i}(q)\subseteq\rep'^{i}$,
which implies 
$\beta_{b}(\{F_{j}\}_{j=1}^{i})\subseteq\rep'^{i}$.
Since $\rep'^{i}$ and $\rep'^{i+1}$ include the same set of $F_{j}$-facts for every $j\in\{1,\dots,i\}$, we have $\beta_{b}(\{F_{j}\}_{j=1}^{i})\subseteq\rep'^{i+1}$.
Since $\beta_{b}(F_{i+1})\in\rep'^{i+1}$ by construction,
we obtain $\beta_{b}(\{F_{j}\}_{j=1}^{i+1})\subseteq\rep'^{i+1}$.
Since also $\alpha^{i+1}(\{F_{j}\}_{j=1}^{i+1})\subseteq\rep'^{i+1}$ 
(because $\alpha^{i+1}(q)\subseteq\rep'^{i+1}$),
it is correct to conclude that $\{\beta_{b},\alpha^{i+1}\}\models\FD{\{F_{j}\}_{j=1}^{i+1}}$.
We are now ready to show that $\alpha^{i+1}(F_{j})=\beta_{b}(F_{j})$ for all $j\in\{1,\dots,i+1\}$.
To this extent, pick any $k\in\{1,\dots,i+1\}$.
We have $\FD{\{F_{j}\}_{j=1}^{k-1}}\models\fd{Z}{\keyvars{F_{k}}}$ by Equation~(\ref{eq:fdimpl}).
Since $\{\beta_{b},\alpha^{i+1}\}\models\FD{\{F_{j}\}_{j=1}^{k-1}}$,
we have $\{\beta_{b},\alpha^{i+1}\}\models\fd{Z}{\keyvars{F_{k}}}$.
Then, from $\alpha^{i+1}(\vec{z})=\beta_{b}(\vec{z})$ (the second conjunct of item~\ref{it:pthree}),
it follows that $\alpha^{i+1}$ and $\beta_{b}$ agree on all variables of $\keyvars{F_{k}}$.
Since $\alpha^{i+1}(F_{k}),\beta_{b}(F_{k})\in\rep'^{i+1}$,
it must be the case that $\alpha^{i+1}(F_{k})=\beta_{b}(F_{k})$.
This concludes the induction step.

For the pair $(\rep'^{\ell},\alpha^{\ell})$,
we have that $\alpha^{\ell}(\{F_{j}\}_{j=1}^{\ell})=\beta_{b}(\{F_{j}\}_{j=1}^{\ell})$, and thus, since $w$ occurs in some $F_{j}$, $\alpha^{\ell}(w)=\beta_{b}(w)$.
Since also $\alpha^{\ell}(w)=\alpha(w)$,
we obtain $\alpha(w)=\beta_{b}(w)$, a contradiction.
We conclude by contradiction that some repair of $\db'$ falsifies $q$.

We repeat the ``purification" step until it can no longer be applied.
Let the final database be $\widehat{\db}$.
By the above reasoning,
we have that  every repair of $\widehat{\db}$ satisfies $q$ if and only if every repair of $\db$ satisfies $q$.
Let $\sep$ be the smallest set of $N$-facts containing $N(\underline{\beta(\vec{z})},\beta(w))$ for every valuation $\beta$ over $\queryvars{q}$ such that $\beta(q)\subseteq\db$.
We show that $\sep$ is consistent.
To this extent,
let $\beta_{1},\beta_{2}$ be valuations over $\queryvars{q}$ such that $\beta_{1}(q),\beta_{2}(q)\subseteq\db$ and $\beta_{1}(\vec{z})=\beta_{2}(\vec{z})$.
If $\beta_{1}(w)\neq\beta_{2}(w)$, then a purification step can remove the block containing $\beta_{1}(F)$,
contradicting our assumption that no purification step is applicable on $\widehat{\db}$.
We conclude by contradiction that $\beta_{1}(w)=\beta_{2}(w)$.

Since $N$ has mode~$\mc$ and $\sep$ is consistent,
we have that $\widehat{\db}\cup\sep$ is a legal database.
It can now be easily seen that every repair of $\db$ satisfies $q$ if and only if every repair of $\widehat{\db}\cup\sep$ satisfies $q'=q\cup\{N^{\mc}(\underline{\vec{z}},w)\}$.


It remains to be argued that the reduction is in $\FO$, i.e.,
that the result of the repeated ``purification" step can be obtained by a single first-order query.
Let $\queryvars{q}=\{x_{1},\dots,x_{n}\}$.
Let $q^{*}(x_{1},\dots,x_{n})$ denote the quantifier-free part of the Boolean query $q$.
For every $i\in\{1,\dots,n\}$, let $x_{i}'$ be a fresh variable.
Let $\vec{u}$ be a sequence of distinct variables such that $\sequencevars{\vec{u}}=\atomvars{F}$.
The following query finds all $F$-facts whose blocks can be removed:
$$\left \{\vec{u}\mid\exists^{*}\lrformula{q^{*}(x_{1},\dots,x_{n})\land q^{*}(x_{1}',\dots,x_{n}')\land\lrformula{\bigwedge_{z\in Z}z=z'}\land w\neq w'}\right\},$$
where the existential quantification ranges over all variables not in $\vec{u}$.
The $F$-facts that are to be preserved are not key-equal to a fact in the preceding query and can obviously be computed in $\FO$.
This concludes the proof of the first item.

\medskip
\myparagraph{Proof of the second item}
Assume that the attack graph of $q$ contains no strong cycle.  We will show that the attack graph of $q'$ contains no strong cycle either.
By the second item in Definition~\ref{def:saturated}, we can assume an atom $G\in q$ such that $Z\subseteq\atomvars{G}$.
It is sufficient to show that for every $F,H\in q$,
if there exists a witness for $F\attacks{q'}H$,
then there exists a witness for $F\attacks{q'}H$ that does not contain $N^{\mc}(\underline{\vec{z}},w)$.
To this extent,
assume that a witness for $F\attacks{q'}H$ contains 
\begin{equation}\label{eq:uprime}
\dotsm F'\step{u'}N^{\mc}(\underline{\vec{z}},w)\step{u''}F''\dotsm,
\end{equation}
where $u'$ and $u''$ are distinct variables.
We can assume without loss of generality that this is the only occurrence of $N^{\mc}(\underline{\vec{z}},w)$ in the witness.
In this case, we have $F\attacks{q}u'$.
If $u',u''\in Z$, then we can replace $N^{\mc}(\underline{\vec{z}},w)$ with $G$.
So the only nontrivial case is where either $u'=w$ or $u''=w$ (but not both).
Then, it must be the case that $\FD{q'\setminus\{F\}}\not\models\fd{\keyvars{F}}{w}$, thus also
\begin{equation}\label{eq:nm}
\FD{q\setminus\{F\}}\not\models\fd{\keyvars{F}}{w}.
\end{equation}

Since $\fd{Z}{w}$ is internal to $q$, there exists a sequential proof for $\FD{q}\models\fd{Z}{w}$ such that no atom in the proof attacks a variable in $Z \cup \{w\}$. Let $J_{1},J_{2},\dots,J_{\ell}$ be a shortest such proof.
Because $F\attacks{q}u'$ and $u' \in Z \cup \{w\}$, it must be that $F\not\in\{J_{1},\dots,J_{\ell}\}$.
We can assume that $w$ occurs at a non-primary-key position in $J_{\ell}$.
Because of~(\ref{eq:nm}), we can assume the existence of a variable $v\in\keyvars{J_{\ell}}$ such that 
$\FD{q\setminus\{F\}}\not\models\fd{\keyvars{F}}{v}$.
If $v\not\in Z$, then there exists $k<\ell$ such that $v$ occurs at a non-primary-key position in $J_{k}$.
Again, we can assume a variable $v'\in\keyvars{J_{k}}$  such that 
$\FD{q\setminus\{F\}}\not\models\fd{\keyvars{F}}{v'}$.
By repeating the same reasoning,
there exists a sequence
$$
\step{z_{i_{0}}}J_{i_{0}}
\step{z_{i_{1}}}J_{i_{1}}
\step{z_{i_{2}}}
\dots
\step{z_{i_{m}}}J_{i_{m}}
\step{w}
$$
where $1\leq i_{0}<i_{1}<\dotsm<i_{m}=\ell$
such that 
\begin{itemize}
\item
$z_{i_{0}}\in Z$;
\item
for all $j\in\{0,\dots,m\}$,
$\FD{q\setminus\{F\}}\not\models\fd{\keyvars{F}}{z_{i_{j}}}$; and
\item
for all $j\in\{1,\dots,m\}$,
$z_{i_{j}}\in\atomvars{J_{i_{j-1}}}\cap\atomvars{J_{i_{j}}}$.
In particular, $z_{i_{j}}\in\keyvars{J_{i_{j}}}$.
\end{itemize}
We can assume $G\in q$ such that $Z\subseteq\atomvars{G}$.
Let $u\in\{u',u''\}$ such that $u\neq w$.
Thus, $\{u,w\}=\{u',u''\}$.
It can now be easily seen that a witness for $F\attacks{q'}H$ can be obtained by
replacing $N^{\mc}(\underline{\vec{z}},w)$ in~(\ref{eq:uprime}) with the following sequence or its reverse:
$$
\step{u}G
\step{z_{i_{0}}}J_{i_{0}}
\step{z_{i_{1}}}J_{i_{1}}
\step{z_{i_{2}}}
\dots
\step{z_{i_{m}}}J_{i_{m}}
\step{w}
$$

This concludes the proof of Lemma~\ref{lem:saturate}.
\end{proof}

The proof of Lemma~\ref{lem:corsaturate} is now straightforward.

\begin{proof}[Proof of Lemma~\ref{lem:corsaturate}]
Repeated application of Lemma~\ref{lem:saturate}.
\end{proof}

\section{Proofs of Section~\ref{sec:mgraph}}

We will use the following helping lemma.

\begin{lemma}\label{lem:extend}
Let $q$ be a query in $\sjfbcq$ such that $q$ is saturated and the attack graph of $q$ contains no strong cycle.
Let $\isc$ be an initial strong component in the attack graph of $q$ with $\card{\isc}\geq 2$. 
For every atom $F \in \isc$, there exists an atom $H \in \isc$ such that $F \markov H$.
\end{lemma}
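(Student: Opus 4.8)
The plan is: given $F\in\isc$, to find an atom $H\in\isc$ such that $\keyvars{H}$ is entirely determined by $\atomvars{F}$ \emph{using only the functional dependencies of the consistent atoms}, i.e.\ $\FD{\catoms{q}}\models\fd{\atomvars{F}}{\keyvars{H}}$, which is exactly $F\markov H$. First I would exploit the cyclic structure of $\isc$: since $\isc$ is strongly connected and $\card{\isc}\geq 2$, the atom $F$ attacks some $H_{0}\in\isc$, and as $F$ and $H_{0}$ lie on a common cycle inside $\isc$, which has no strong cycle, the attack $F\attacks{q}H_{0}$ is weak, i.e.\ $\FD{q}\models\fd{\keyvars{F}}{\keyvars{H_{0}}}$. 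The only dependency in $\FD{q}$ that is missing from $\FD{q\setminus\{F\}}$ is $\fd{\keyvars{F}}{\atomvars{F}}$, and it is vacuous once $\atomvars{F}$ is available, so equivalently $\FD{q\setminus\{F\}}\models\fd{\atomvars{F}}{\keyvars{H_{0}}}$.

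The subtle point --- and the step I expect to be the real obstacle --- is that this weak attack is \emph{not} enough on its own: $\FD{q\setminus\{F\}}$ may use \emph{all} atoms, whereas $F\markov H$ needs a derivation using only consistent atoms. (Concretely, in Figure~\ref{fig:pruning} the attack $R\attacks{\qpruning}U$ is weak, yet $R\not\markov U$.) To get around this I would let $Y$ be the closure of $\atomvars{F}$ under the dependencies $\FD{q\setminus\isc}$, i.e.\ under the atoms lying \emph{outside} $\isc$, and distinguish two cases. If $\keyvars{H_{0}}\subseteq Y$, set $H:=H_{0}$. Otherwise, since the closure of $\atomvars{F}$ under $\FD{q\setminus\{F\}}$ already contains $\keyvars{H_{0}}$ and hence strictly contains $Y$, some atom of $q\setminus\{F\}$ is applicable to $Y$; it cannot be an atom of $q\setminus\isc$ (that would contradict closedness of $Y$), so it is an atom $F'\in\isc$ with $\keyvars{F'}\subseteq Y$, and we set $H:=F'$. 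Either way we obtain $H\in\isc$ together with a sequential proof $\Pi$ for $\fd{\atomvars{F}}{\keyvars{H}}$ all of whose atoms lie in $q\setminus\isc$.

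It then remains to show that $\fd{\atomvars{F}}{v}$ is internal to $q$ (Definition~\ref{def:saturated}) for every $v\in\keyvars{H}$. Condition~2 holds with the atom $F$ as witness, since $\atomvars{F}\subseteq\atomvars{F}$. For condition~1 I would use (a prefix of) $\Pi$ as the sequential proof and argue that no atom of $\Pi$ attacks a variable of $\atomvars{F}\cup\keyvars{H}$: any atom $G$ of $\Pi$ satisfies $G\notin\isc$ (so $G\neq F$ and $G\neq H$); if $G$ attacked some $u\in\atomvars{F}$, then, since $u\in\atomvars{F}$, the witness of that attack could be redirected into $F$, yielding $G\attacks{q}F$, which forces $G\in\isc$ because $\isc$ is an \emph{initial} strong component containing $F$ --- a contradiction; similarly, if $G$ attacked some $u\in\keyvars{H}\subseteq\atomvars{H}$, redirecting the witness into $H$ would give $G\attacks{q}H$ and again $G\in\isc$, a contradiction. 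Hence each $\fd{\atomvars{F}}{v}$ is internal, and since $q$ is saturated, $\FD{\catoms{q}}\models\fd{\atomvars{F}}{v}$ for every $v\in\keyvars{H}$, so $\FD{\catoms{q}}\models\fd{\atomvars{F}}{\keyvars{H}}$, i.e.\ $F\markov H$ with $H\in\isc$, as required. The only genuinely delicate part is the choice of $H$ above; the internality argument is just the remark that, by initiality of $\isc$, no atom outside $\isc$ can attack a variable of $\atomvars{F}$ or of $\keyvars{H}$, and the conclusion is a direct appeal to saturation. (Lemma~\ref{lem:happy} then follows at once: iterating $\markov$-successors inside the finite set $\isc$ must eventually revisit an atom, yielding an \mcycle contained in $\isc$.)
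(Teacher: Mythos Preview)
Your proposal is correct and follows essentially the same route as the paper's proof. The paper takes a sequential proof $H_{1},\dots,H_{\ell}$ for $\FD{q\setminus\{F\}}\models\fd{\atomvars{F}}{\keyvars{G}}$ (with $H_{\ell}=G$), lets $j$ be the least index with $H_{j}\in\isc$, and sets $H:=H_{j}$; your closure-based case split (``is $\keyvars{H_{0}}$ already in the closure $Y$ of $\atomvars{F}$ under $\FD{q\setminus\isc}$?'') is the same idea phrased differently, and your internality-and-saturation argument via initiality of $\isc$ matches the paper's exactly.
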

\begin{proof}
Assume $F \in \isc$.
Since $F$ belongs in an initial strong component with at least two atoms, there exists $G \in \isc$ such that
$F\attacks{q}G$ and the attack is weak. Thus, $\FD{q}\models\fd{\keyvars{F}}{\keyvars{G}}$.
It follows that $\FD{q\setminus\{F\}}\models\fd{\atomvars{F}}{\keyvars{G}}$.
Let $\sigma = H_1, H_2, \dots, H_\ell$ be a sequential proof for $\FD{q\setminus\{F\}}\models\fd{\atomvars{F}}{\keyvars{G}}$, and thus $F \notin \{H_1, \dots, H_\ell\}$. We can assume without loss of generality that $H_\ell = G$.

Let $j$ be the smallest index in $\{1, \dots, \ell\}$ such that $H_{j} \in \isc$. 
Since $H_\ell \in \isc$, such an index always exists.
Then, $\sigma = H_1, H_2, \dots, H_{j-1}$ is a sequential proof for $\FD{q\setminus\{F\}}\models\fd{\atomvars{F}}{\keyvars{H_j}}$ (observe that this proof may be empty). By our choice of $j$, for every $i\in\{1,\dots,j-1\}$,
we have $H_i \notin \isc$, and hence $H_i$ cannot attack $F$ or $H_j$ (since $\isc$ is an initial strong component).
It follows that no atom in $\sigma$ attacks a variable in $\atomvars{F}\cup\keyvars{H_{j}}$.
Since $q$ is saturated, this implies that $\FD{\catoms{q}}\models\fd{\atomvars{F}}{\keyvars{H_j}}$, and so
$F \markov H_j$.
\end{proof}

The proof of Lemma~\ref{lem:happy} can now be given.

\begin{proof}[Proof of Lemma~\ref{lem:happy}]
Starting from some atom $F_0 \in \isc$, by applying repeatedly Lemma~\ref{lem:extend}, we can create an infinite sequence 
$F_0 \markov F_1 \markov F_2 \markov \dotsm$ such that for every $i\geq 1$, $F_i \in \isc$ and $F_i \neq F_{i+1}$. Since the atoms in $\isc$ are finitely many, there will exist some $i,j$ such that $i<j$ and $F_{i}=F_{j+1}$.
It follows that the \mgraph of $q$ contains a cycle all of whose atoms belong to $\isc$.
\end{proof}

\begin{proof}[Proof of Lemma~\ref{lem:hockey}]
The first item is trivial.
For the second item, assume $A\mhook B$, $A\mhook B'$, and $\qatom{B}{q}=\qatom{B'}{q}$.
We can assume $F,G\in q$ such that $F\markov G$, $\qatom{A}{q}=F$, and $\qatom{B}{q}=G$.
Then, there exist valuations $\theta_{1},\theta_{2}$ over $\queryvars{q}$ such that
$A\in\theta_{1}(q)\subseteq\db$, $A\in\theta_{2}(q)\subseteq\db$,
$B\keyequal\theta_{1}(G)$, and $B'\keyequal\theta_{2}(G)$.
Since $\theta_{1}[\atomvars{F}]=\theta_{2}[\atomvars{F}]$
and $\FD{\catoms{q}}\models\fd{\atomvars{F}}{\keyvars{G}}$ (because $F\markov G$),
it follows $\theta_{1}[\keyvars{G}]=\theta_{2}[\keyvars{G}]$,
hence $B$ and $B'$ must be key-equal.
\end{proof}

\section{Proofs of Section~\ref{sec:garbageset}}

\subsection{Proofs of Lemmas~\ref{lem:closedunderunion} and~\ref{lem:nomenestomen}}

\begin{proof}[Proof of Lemma~\ref{lem:closedunderunion}]
Let $\bfo_{1}$ and $\bfo_{2}$ be garbage sets for $q_{0}$ in $\db$.
For every $i\in\{1,2\}$, we can assume a repair $\rep_{i}$ of $\bfo_{i}$ such that
\begin{quote}
\emph{Garbage Condition:}
for every valuation $\theta$ over $\queryvars{q}$ such that $\theta(q)\subseteq\formula{\db\setminus\bfo_{i}}\cup\rep_{i}$, we have $\theta(q_{0})\cap\rep_{i}=\emptyset$.
\end{quote}
Let $\bfo_{2}^{-}  =  \bfo_{2}\setminus\bfo_{1}$ and $\rep_{2}^{-}  =  \rep_{2}\setminus\bfo_{1}$.
Then, $\rep_{1}\uplus\rep_{2}^{-}$ is a repair of $\bfo_{1}\uplus\bfo_{2}^{-}$,
where the use of $\uplus$ (instead of $\cup$) indicates that the operands of the union are disjoint.
Let $\theta$ be an arbitrary valuation over $\queryvars{q}$ such that
$$\theta(q)\subseteq\lrformula{\db\setminus\lrformula{\bfo_{1}\uplus\bfo_{2}^{-}}}\cup\formula{\rep_{1}\uplus\rep_{2}^{-}}.$$
Then, $\theta(q)\subseteq\formula{\db\setminus\bfo_{1}}\cup\rep_{1}$.
Consequently, by the \emph{Garbage Condition} for $i=1$,
$\theta(q_{0})\cap\rep_{1}=\emptyset$, and thus $\theta(q_{0})\cap\bfo_{1}=\emptyset$.
It follows $\theta(q)\subseteq\lrformula{\db\setminus\lrformula{\bfo_{1}\cup\bfo_{2}}}\cup\rep_{2}^{-}$,
hence $\theta(q)\subseteq\lrformula{\db\setminus\bfo_{2}}\cup\rep_{2}^{-}$.
Consequently, by the \emph{Garbage Condition} for $i=2$,
$\theta(q_{0})\cap\rep_{2}^{-}=\emptyset$.

It follows that $\bfo_{1}\uplus\bfo_{2}^{-}$=$\bfo_{1}\cup\bfo_{2}$ is a garbage set for $q_{0}$ in $\db$. 
\end{proof}

\medskip

\begin{proof}[Proof of Lemma~\ref{lem:nomenestomen}]
The $\impliedby$-direction is trivial.
For the $\implies$-direction, 
assume that every repair of $\db$ satisfies~$q$.
We can assume a repair $\rep_{0}$ of $\bfo$ such that for every valuation $\theta$ over $\queryvars{q}$, if $\theta(q)\subseteq\formula{\db\setminus\bfo}\cup\rep_{0}$, then $\theta(q_{0})\cap\rep_{0}=\emptyset$.
Let $\rep$ be an arbitrary repair of $\db\setminus\bfo$.
It suffices to show $\rep\models q$.
Since $\rep\cup\rep_{0}$ is a repair of $\db$,
we can assume a valuation $\theta$ over $\queryvars{q}$ such that $\theta(q)\subseteq\rep\cup\rep_{0}$.
Since $\theta(q)\subseteq\formula{\db\setminus\bfo}\cup\rep_{0}$ is obvious,
it follows $\theta(q)\cap\rep_{0}=\emptyset$.
Consequently, $\theta(q)\subseteq\rep$, hence $\rep\models q$.
This concludes the proof.
\end{proof}

\subsection{Proof of Lemma~\ref{lem:together}}

We will use two helping lemmas.

\begin{lemma}\label{lem:substraction}
Let $q$ be a query in $\sjfbcq$, and let $q_{0}\subseteq q$.
Let $\bfo$ be a garbage set for $q_{0}$ in $\db$.
If $\bfp$ is the union of one or more blocks of $\bfo$,
then $\bfo\setminus\bfp$ is a garbage set for $q_{0}$ in $\db\setminus\bfp$.
\end{lemma}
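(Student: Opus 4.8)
The statement to prove is: if $\bfo$ is a garbage set for $q_{0}$ in $\db$, and $\bfp$ is a union of one or more blocks of $\bfo$, then $\bfo\setminus\bfp$ is a garbage set for $q_{0}$ in $\db\setminus\bfp$. The plan is to verify the two conditions of Definition~\ref{def:garbage} directly for the candidate garbage set $\bfo'\defeq\bfo\setminus\bfp$ relative to the database $\db'\defeq\db\setminus\bfp$.

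First I would observe that $\bfp$, being a union of blocks of $\bfo$, is in particular a union of blocks of $\db$; hence every block of $\db'=\db\setminus\bfp$ is either entirely contained in $\bfo'$ or disjoint from it (it inherits this from $\bfo$ being block-closed in $\db$), and of course every fact $A\in\bfo'$ satisfies $\qatom{A}{q}\in q_{0}$ since $\bfo'\subseteq\bfo$. This gives the first condition of Definition~\ref{def:garbage}. For the second condition, let $\rep$ be a repair of $\bfo$ witnessing that $\bfo$ is a garbage set for $q_{0}$ in $\db$, i.e., for every valuation $\theta$ over $\queryvars{q}$, if $\theta(q)\subseteq(\db\setminus\bfo)\cup\rep$ then $\theta(q_{0})\cap\rep=\emptyset$. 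I would take $\rep'\defeq\rep\setminus\bfp$ as the witnessing repair of $\bfo'$; since $\bfp$ is a union of full blocks and $\rep$ contains exactly one fact per block of $\bfo$, the set $\rep'$ is indeed a repair of $\bfo\setminus\bfp=\bfo'$. The key computation is then: if $\theta$ is a valuation with $\theta(q)\subseteq(\db'\setminus\bfo')\cup\rep'$, note that $\db'\setminus\bfo'=(\db\setminus\bfp)\setminus(\bfo\setminus\bfp)=\db\setminus\bfo$ (because $\bfp\subseteq\bfo$), and $\rep'\subseteq\rep$, so $\theta(q)\subseteq(\db\setminus\bfo)\cup\rep$; by the garbage condition for $\bfo$, $\theta(q_{0})\cap\rep=\emptyset$, whence $\theta(q_{0})\cap\rep'=\emptyset$ as well. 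This establishes the second condition, and therefore $\bfo'$ is a garbage set for $q_{0}$ in $\db'$.

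There is essentially no hard obstacle here; the only point requiring care is the set-algebra identity $\db'\setminus\bfo'=\db\setminus\bfo$, which uses $\bfp\subseteq\bfo\subseteq\db$, together with the verification that $\rep\setminus\bfp$ is genuinely a repair of $\bfo\setminus\bfp$ — this is exactly where the hypothesis ``$\bfp$ is a union of \emph{blocks}'' (rather than an arbitrary subset) is needed, since otherwise removing $\bfp$ from $\rep$ could leave a block of $\bfo\setminus\bfp$ unrepresented or could fail to be maximal. I would state these two observations explicitly and then the result follows immediately from Definition~\ref{def:garbage}.
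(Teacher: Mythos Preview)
Your proposal is correct and follows essentially the same approach as the paper: take the witnessing repair $\rep$ for $\bfo$, set $\rep'=\rep\setminus\bfp$, and use the set inclusion (you observe it is in fact an equality) $(\db\setminus\bfp)\setminus(\bfo\setminus\bfp)\subseteq\db\setminus\bfo$ together with $\rep'\subseteq\rep$ to reduce to the garbage condition for $\bfo$. You are slightly more explicit than the paper in verifying condition~1 of Definition~\ref{def:garbage} and in justifying why $\rep\setminus\bfp$ is a repair of $\bfo\setminus\bfp$, but the core argument is identical.
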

\begin{proof}
Let $\bfp$ be the union of one or more blocks of $\bfo$.
We can assume a repair $\rep$ of $\bfo$ such that for every valuation $\theta$ over $\queryvars{q}$, if $\theta(q)\subseteq\formula{\db\setminus\bfo}\cup\rep$, then $\theta(q)\cap\rep=\emptyset$.
Let $\sep=\rep\setminus\bfp$.
Obviously, $\sep$ is a repair of $\bfo\setminus\bfp$.

Let $\theta$ be a valuation over $\queryvars{q}$ such that $\theta(q)\subseteq\lrformula{\lrformula{\db\setminus\bfp}\setminus\lrformula{\bfo\setminus\bfp}}\cup\sep$.
It suffices to show $\theta(q)\cap\sep=\emptyset$.
Since $\lrformula{\db\setminus\bfp}\setminus\lrformula{\bfo\setminus\bfp}\subseteq\db\setminus\bfo$ and $\sep\subseteq\rep$, it follows $\theta(q)\subseteq\formula{\db\setminus\bfo}\cup\rep$, hence $\theta(q)\cap\rep=\emptyset$.
It follows $\theta(q)\cap\sep=\emptyset$.
\end{proof}


\begin{corollary}\label{cor:substraction}
Let $q$ be a query in $\sjfbcq$, and let $q_{0}\subseteq q$.
Let $\bfo$ be a garbage set for $q_{0}$ in $\db$.
If every garbage set for $q_{0}$ in $\db\setminus\bfo$ is empty,
then $\bfo$ is the maximal garbage set for $q_{0}$ in $\db$.
\end{corollary}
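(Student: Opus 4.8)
The plan is to deduce Corollary~\ref{cor:substraction} from the helping Lemma~\ref{lem:substraction}, together with the fact that a unique maximal garbage set exists (by Lemma~\ref{lem:closedunderunion}). Let $\bfo^{\max}$ denote the maximal garbage set for $q_{0}$ in $\db$. Since $\bfo$ is itself a garbage set for $q_{0}$ in $\db$, maximality gives $\bfo\subseteq\bfo^{\max}$, so it remains only to establish the reverse inclusion $\bfo^{\max}\subseteq\bfo$. I would do this by showing that $\bfo^{\max}\setminus\bfo$ is a garbage set for $q_{0}$ in $\db\setminus\bfo$ and then invoking the hypothesis.

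To see that $\bfo^{\max}\setminus\bfo$ is a garbage set for $q_{0}$ in $\db\setminus\bfo$, I would distinguish two cases. If $\bfo\neq\emptyset$, then by condition~1 of Definition~\ref{def:garbage} both $\bfo$ and $\bfo^{\max}$ are closed under ``is key-equal to'' in $\db$, hence are unions of blocks of $\db$; combined with $\bfo\subseteq\bfo^{\max}$, this makes $\bfo$ the union of one or more blocks of $\bfo^{\max}$. Lemma~\ref{lem:substraction}, applied with $\bfo^{\max}$ as the garbage set and $\bfp\defeq\bfo$, then yields precisely that $\bfo^{\max}\setminus\bfo$ is a garbage set for $q_{0}$ in $\db\setminus\bfo$. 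If instead $\bfo=\emptyset$, then $\db\setminus\bfo=\db$ and $\bfo^{\max}\setminus\bfo=\bfo^{\max}$, so the claim is just the defining property of $\bfo^{\max}$.

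In either case, the hypothesis that every garbage set for $q_{0}$ in $\db\setminus\bfo$ is empty forces $\bfo^{\max}\setminus\bfo=\emptyset$, that is, $\bfo^{\max}\subseteq\bfo$; with $\bfo\subseteq\bfo^{\max}$ this gives $\bfo=\bfo^{\max}$, so $\bfo$ is the maximal garbage set for $q_{0}$ in $\db$, as required. I do not anticipate any real obstacle here: the argument is short bookkeeping layered on top of Lemma~\ref{lem:substraction}, and the only point needing a moment's care is the harmless split on whether $\bfo$ is empty, which is forced only because Lemma~\ref{lem:substraction} is stated for a nonempty union of blocks. (This corollary is one of the two implications of Lemma~\ref{lem:together}.)
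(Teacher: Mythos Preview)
Your proposal is correct and follows essentially the same route as the paper: both arguments invoke Lemma~\ref{lem:substraction} (with $\bfp=\bfo$ inside the maximal garbage set) to conclude that $\bfo^{\max}\setminus\bfo$ is a garbage set for $q_{0}$ in $\db\setminus\bfo$, and then use the hypothesis. The paper phrases it by contraposition and is terser about the $\bfo=\emptyset$ edge case, but the substance is identical.
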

\begin{proof}
Proof by contraposition.
Assume that $\bfo$ is not the maximal garbage set for $q_{0}$ in $\db$.
Let $\bfo_{0}$ be the maximal garbage set for $q_{0}$ in $\db$.
By Lemma~\ref{lem:substraction}, $\bfo_{0}\setminus\bfo$ is a nonempty garbage set for $q_{0}$ in $\db\setminus\bfo$. 
\end{proof}

\begin{lemma}\label{lem:chainunion}
Let $q$ be a query in $\sjfbcq$, and let $q_{0}\subseteq q$.
Let $\db$ be a database.
If $\bfo$ is a garbage set for $q_{0}$ in $\db$,
and $\bfp$ is a garbage set for $q_{0}$ in $\db\setminus\bfo$,
then $\bfo\cup\bfp$ is a garbage set for $q_{0}$ in $\db$.
\end{lemma}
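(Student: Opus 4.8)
The plan is to unfold the definition of garbage set (Definition~\ref{def:garbage}) for both hypotheses and then to exhibit an explicit repair of $\bfo\cup\bfp$ witnessing that $\bfo\cup\bfp$ is a garbage set for $q_{0}$ in $\db$. First I would use the hypothesis that $\bfo$ is a garbage set for $q_{0}$ in $\db$ to fix a repair $\rep$ of $\bfo$ such that for every valuation $\theta$ over $\queryvars{q}$, if $\theta(q)\subseteq(\db\setminus\bfo)\cup\rep$, then $\theta(q_{0})\cap\rep=\emptyset$. Next I would use the hypothesis that $\bfp$ is a garbage set for $q_{0}$ in $\db\setminus\bfo$ to fix a repair $\sep$ of $\bfp$ such that for every valuation $\theta$ over $\queryvars{q}$, if $\theta(q)\subseteq((\db\setminus\bfo)\setminus\bfp)\cup\sep$, then $\theta(q_{0})\cap\sep=\emptyset$. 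Note that by condition~1 of Definition~\ref{def:garbage}, both $\bfo$ and $\bfp$ are unions of whole blocks and consist only of facts whose genre lies in $q_{0}$; moreover $\bfp\subseteq\db\setminus\bfo$, so $\bfo$ and $\bfp$ are disjoint, and $\bfo\cup\bfp$ is again a union of whole blocks of $\db$ with genres in $q_{0}$, so condition~1 of Definition~\ref{def:garbage} holds for $\bfo\cup\bfp$.

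For condition~2, the natural candidate repair of $\bfo\cup\bfp$ is $\rep\uplus\sep$ (disjoint union, since $\bfo$ and $\bfp$ are disjoint and $\rep\subseteq\bfo$, $\sep\subseteq\bfp$); it is a repair of $\bfo\cup\bfp$ because it picks exactly one fact from each block of $\bfo$ and of $\bfp$. Now let $\theta$ be any valuation over $\queryvars{q}$ with $\theta(q)\subseteq(\db\setminus(\bfo\cup\bfp))\cup(\rep\uplus\sep)$; I must show $\theta(q_{0})\cap(\rep\uplus\sep)=\emptyset$. The key step is to observe that $(\db\setminus(\bfo\cup\bfp))\cup(\rep\uplus\sep)\subseteq(\db\setminus\bfo)\cup\rep\cup\sep\subseteq(\db\setminus\bfo)\cup\rep$, using that $\sep\subseteq\bfp\subseteq\db\setminus\bfo$. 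Hence $\theta(q)\subseteq(\db\setminus\bfo)\cup\rep$, so by the property of $\rep$ we get $\theta(q_{0})\cap\rep=\emptyset$, and therefore (since $\rep\subseteq\bfo$ and the facts of $\theta(q_{0})$ avoiding $\rep$ in particular means $\theta(q_{0})\cap\bfo=\emptyset$ when $\theta(q)\subseteq(\db\setminus\bfo)\cup\rep$ — more carefully, $\theta(q)\cap\bfo\subseteq\theta(q)\cap((\db\setminus\bfo)\cup\rep)\cap\bfo=\theta(q)\cap\rep$, so $\theta(q)\cap\bfo=\theta(q)\cap\rep$, giving $\theta(q_{0})\cap\bfo=\emptyset$).

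It remains to handle $\theta(q_{0})\cap\sep$. From $\theta(q_{0})\cap\bfo=\emptyset$ and $\bfp\subseteq\db\setminus\bfo$, together with $\theta(q)\subseteq(\db\setminus(\bfo\cup\bfp))\cup(\rep\uplus\sep)$, I would peel off the $\bfo$-part: since $\theta(q)$ meets neither $\bfo\setminus\rep$ (it lies in $(\db\setminus\bfo)\cup\rep\cup\sep$, and $\sep,\rep$ are disjoint from $\bfo\setminus\rep$... actually $\rep\subseteq\bfo$, so I should argue $\theta(q)\cap\bfo=\theta(q)\cap\rep=\emptyset$ on the $q_0$-atoms only). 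The clean way: I claim $\theta(q)\subseteq((\db\setminus\bfo)\setminus\bfp)\cup\sep$. Indeed every fact of $\theta(q)$ is in $(\db\setminus(\bfo\cup\bfp))\cup\rep\cup\sep$; the facts in $(\db\setminus(\bfo\cup\bfp))$ lie in $((\db\setminus\bfo)\setminus\bfp)$; the facts in $\sep$ are accounted for; and there are no facts of $\theta(q)$ in $\rep$ \emph{with genre in $q_0$} — but $\rep\subseteq\bfo$ consists entirely of facts with genre in $q_0$, and we showed $\theta(q_0)\cap\rep=\emptyset$, so $\theta(q)\cap\rep=\emptyset$ altogether. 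Hence $\theta(q)\subseteq((\db\setminus\bfo)\setminus\bfp)\cup\sep$, and the property of $\sep$ gives $\theta(q_{0})\cap\sep=\emptyset$. Combining, $\theta(q_{0})\cap(\rep\uplus\sep)=\emptyset$, as required. The main obstacle, and the only point needing care, is the bookkeeping in these set inclusions — in particular correctly using that $\rep$ and $\sep$ are subsets of $\bfo$ and $\bfp$ respectively and that $\bfp\subseteq\db\setminus\bfo$, so that a valuation satisfying $q$ inside $(\db\setminus(\bfo\cup\bfp))\cup(\rep\uplus\sep)$ automatically satisfies $q$ inside the smaller universe relevant to each hypothesis; everything else is a direct application of Definition~\ref{def:garbage}.
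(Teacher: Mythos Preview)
Your argument is correct and follows essentially the same route as the paper's proof: fix witness repairs $\rep$ and $\sep$, take $\rep\uplus\sep$ as the repair of $\bfo\cup\bfp$, use the inclusion $(\db\setminus(\bfo\cup\bfp))\cup(\rep\uplus\sep)\subseteq(\db\setminus\bfo)\cup\rep$ to kill the $\rep$-part, then peel off $\rep$ and apply the hypothesis for $\sep$. The paper is terser in that it silently writes $\theta(q)\cap\rep=\emptyset$ rather than $\theta(q_{0})\cap\rep=\emptyset$; your extra paragraph justifying the passage from one to the other (via the fact that every fact of $\rep\subseteq\bfo$ has genre in $q_{0}$, so $\theta(q)\cap\rep=\theta(q_{0})\cap\rep$) is exactly the bookkeeping that makes this step rigorous.
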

\begin{proof}
Assume the hypothesis holds.
Note that $\bfo\cap\bfp=\emptyset$.
We can assume a repair $\rep$ of $\bfo$ such that for every valuation $\theta$ over $\queryvars{q}$, if $\theta(q)\subseteq\formula{\db\setminus\bfo}\cup\rep$, then $\theta(q)\cap\rep=\emptyset$.
Likewise, we can assume a repair $\sep$ of $\bfp$ such that for every valuation $\theta$ over $\queryvars{q}$, if $\theta(q)\subseteq\lrformula{\lrformula{\db\setminus\bfo}\setminus\bfp}\cup\sep$, then $\theta(q)\cap\sep=\emptyset$.
Obviously, $\rep\cup\sep$ is a repair of $\bfo\cup\bfp$.

Let $\theta$ be a valuation over $\queryvars{q}$ such that $\theta(q)\subseteq\lrformula{\db\setminus\lrformula{\bfo\cup\bfp}}\cup\formula{\rep\cup\sep}$.
From the set inclusion
$\lrformula{\db\setminus\lrformula{\bfo\cup\bfp}}\cup\formula{\rep\cup\sep}
\subseteq
\formula{\db\setminus\bfo}\cup\rep$,
it follows
$\theta(q)\subseteq\formula{\db\setminus\bfo}\cup\rep$,
hence $\theta(q)\cap\rep=\emptyset$.
Then,
$\theta(q)\subseteq\lrformula{\db\setminus\lrformula{\bfo\cup\bfp}}\cup\sep
=
\lrformula{\lrformula{\db\setminus\bfo}\setminus\bfp}\cup\sep$,
hence $\theta(q)\cap\sep=\emptyset$.
It follows $\theta(q)\cap\formula{\rep\cup\sep}=\emptyset$.
\end{proof}

\begin{corollary}\label{cor:chainunion}
Let $q$ be a query in $\sjfbcq$, and let $q_{0}\subseteq q$.
Let $\db$ be a database, and let $\bfo$ be the maximal garbage set for $q_{0}$ in $\db$.
Then, every garbage set for $q_{0}$ in $\db\setminus\bfo$ is empty.  
\end{corollary}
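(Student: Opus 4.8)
The plan is to argue by contradiction, leaning on the composition lemma for garbage sets, Lemma~\ref{lem:chainunion}. Suppose the claim fails, so that there is a \emph{nonempty} garbage set $\bfp$ for $q_{0}$ in $\db\setminus\bfo$. Since $\bfp\subseteq\db\setminus\bfo$, the sets $\bfo$ and $\bfp$ are disjoint, and hence $\bfo\cup\bfp$ is a subset of $\db$ that \emph{strictly} contains $\bfo$.

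First I would apply Lemma~\ref{lem:chainunion} to the pair $(\bfo,\bfp)$: since $\bfo$ is a garbage set for $q_{0}$ in $\db$ and $\bfp$ is a garbage set for $q_{0}$ in $\db\setminus\bfo$, the union $\bfo\cup\bfp$ is again a garbage set for $q_{0}$ in $\db$. Combined with $\bfo\subsetneq\bfo\cup\bfp$, this contradicts the fact that $\bfo$ is \emph{the} maximum (with respect to $\subseteq$) garbage set for $q_{0}$ in $\db$, whose existence and uniqueness were established via Lemma~\ref{lem:closedunderunion}. Hence no nonempty garbage set for $q_{0}$ in $\db\setminus\bfo$ can exist, which is exactly the desired conclusion.

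There is essentially no technical obstacle here: the statement is just the converse direction of Corollary~\ref{cor:substraction}, and all the real content is packaged in Lemma~\ref{lem:chainunion} (which in turn is where the actual work happens, namely exhibiting a single common repair of $\bfo\cup\bfp$ that witnesses the garbage condition for the union). The only point that requires a moment's care is the disjointness of $\bfo$ and $\bfp$, so that the union is a \emph{strict} superset of $\bfo$ and the appeal to maximality goes through; but this is immediate from $\bfp\subseteq\db\setminus\bfo$.
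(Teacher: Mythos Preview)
Your proof is correct and follows exactly the same route as the paper, which simply says ``Immediate from Lemma~\ref{lem:chainunion}''; you have merely spelled out the one-line argument (nonempty $\bfp$ plus Lemma~\ref{lem:chainunion} yields a garbage set strictly larger than $\bfo$, contradicting maximality). The observation about disjointness of $\bfo$ and $\bfp$ is the right detail to flag.
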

\begin{proof}
Immediate from Lemma~\ref{lem:chainunion}.
\end{proof}

The proof of Lemma~\ref{lem:together} can now be given.

\begin{proof}[Proof of Lemma~\ref{lem:together}]
Immediate from Corollaries~\ref{cor:substraction} and~\ref{cor:chainunion}.
\end{proof}

\section{Proofs of Section~\ref{sec:scrub}}
\label{app:scrub}

\begin{figure}\centering
\includegraphics[scale=0.8]{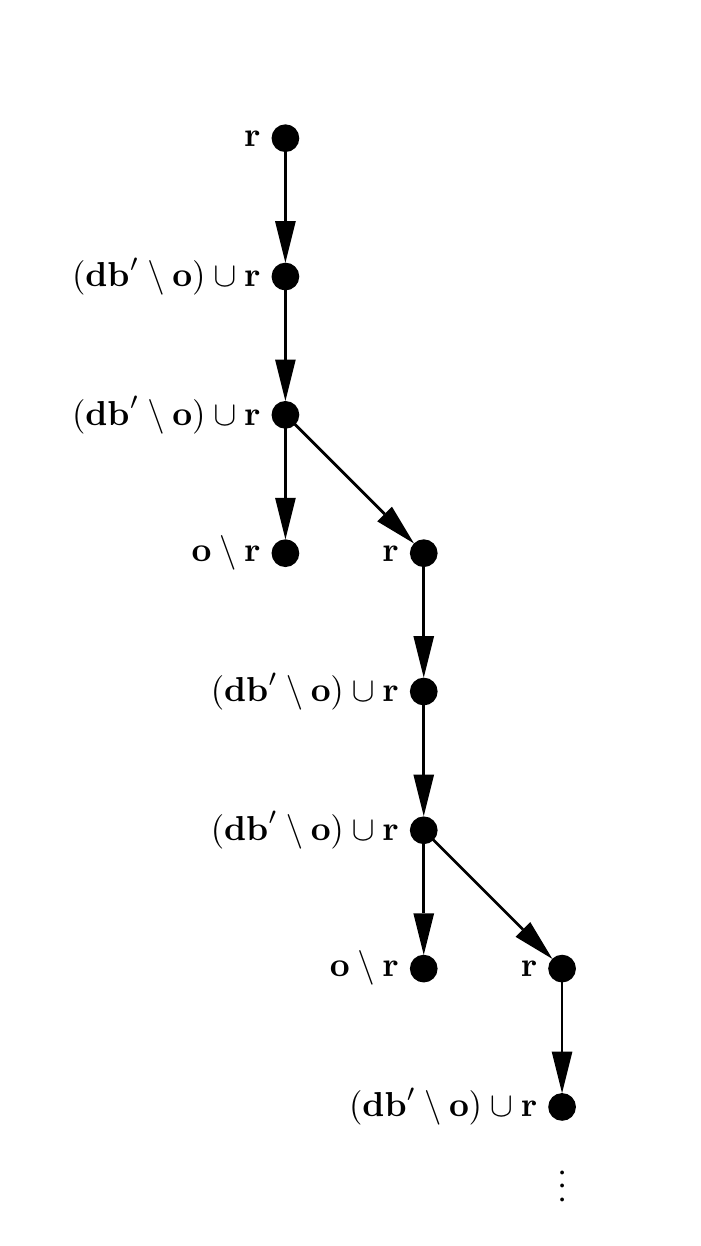}
\caption{Illustration of the $\cmhook{C}$-graph in the proof of Lemma~\ref{lem:scrubalgo}.
Every vertex is a fact, and the vertex labels indicate the set to which each vertex belongs. Vertices on the same horizontal line are key-equal.
}\label{fig:maximal}
\end{figure}

\medskip

\begin{proof}[Proof of Lemma~\ref{lem:scrubalgo}]
We will write $\oplus$ for addition modulo~$k$.
We first consider garbage sets respecting the first three conditions.
\begin{itemize}
\item
Let $A$ be a fact of $\db$ such that $\qatom{A}{q}\in\{F_{0},\dots,F_{k-1}\}$ and $A$ has zero outdegree in the $\cmhook{C}$-graph.
Then, there exists no valuation $\theta$ over $\queryvars{q}$ such that $A\in\theta(q)\subseteq\db$.
It is obvious that $\theblock{A}{\db}$ is a garbage set for $C$ in $\db$.
\item
Let $A_{0}\cmhook{C}A_{1}\cmhook{C}\dotsm\cmhook{C}A_{k-1}\cmhook{C}A_{0}$ be an irrelevant $1$-embedding of $C$ in $\db$.
Assume without loss of generality that for every $i\in\{0,\dots,k-1\}$,
$\qatom{A_{i}}{q}=F_{i}$.
Let $\bfo=\bigcup_{i=0}^{k-1}\theblock{A_{i}}{\db}$.
Let $\rep=\{A_{0},\dots,A_{k-1}\}$, which is obviously a repair of $\bfo$.
We show that $\bfo$ is a garbage set for $C$ in $\db$.
Assume, toward a contradiction, the existence of a valuation $\theta$ over $\queryvars{q}$ such that for some $i\in\{0,\dots,k-1\}$, $A_{i}\in\theta(q)\subseteq\formula{\db\setminus\bfo}\cup\rep$.
Then, $\theta(F_{i})\cmhook{C}\theta(F_{i\oplus 1})$.
Since $\theta(F_{i})=A_{i}$, we have  $A_{i}\cmhook{C}\theta(F_{i\oplus 1})$. 
From $A_{i}\cmhook{C}\theta(F_{i\oplus 1})$ and $A_{i}\cmhook{C}A_{i\oplus 1}$, it follows $\theta(F_{i\oplus 1})\keyequal A_{i\oplus 1}$ by Lemma~\ref{lem:hockey}.
Since $\theta(F_{i\oplus 1})\in\formula{\db\setminus\bfo}\cup\rep$,
it follows  $\theta(F_{i\oplus 1})=A_{i\oplus 1}$.
By repeated application of the same reasoning,
for every $j\in\{0,\dots,k-1\}$,  $\theta(F_{j})=A_{j}$.
But then $A_{0}\cmhook{C}A_{1}\cmhook{C}\dotsm\cmhook{C}A_{k-1}\cmhook{C}A_{0}$ is a relevant $1$-embedding of $C$ in $\db$, a contradiction.
\item
Let $\rep$ be a set containing all (and only) the facts of some $n$-embedding of $C$ in $\db$ with $n\geq 2$.
Let $\bfo=\bigcup_{A\in\rep}\theblock{A}{\db}$.
It can be shown that $\bfo$ is a garbage set for $C$ in $\db$;
the argumentation is analogous to the reasoning in the previous paragraph.
\end{itemize}
Let $\bfo_{0}$ be the minimal subset of $\db$ that satisfies all conditions in the statement of the lemma except the recursive condition~\ref{it:recursive}.
By Lemma~\ref{lem:closedunderunion} and our reasoning in the previous items, it follows that $\bfo_{0}$ is a garbage set for $C$ in $\db$. 

Note that the first three conditions do not recursively depend on $\bfo_{0}$.
Starting with $\bfo_{0}$, construct a maximal sequence
$$\bfo_{0},\mu_{0},\bfo_{1},\mu_{1},\bfo_{2},\mu_{2},\dots,\bfo_{m},\mu_{m},\bfo_{m+1}$$
such that $\bfo_{0}\subsetneq \bfo_{1}\subsetneq \bfo_{2}\subsetneq\dotsm\subsetneq \bfo_{m+1}$
and for every $h\in\{1,\dots,m\}$, 
\begin{enumerate}
\item
$\mu_{h}$ is a valuation over $\queryvars{q}$ such that $\mu_{h}(q)\subseteq\db$ and $\mu_{h}(q)\cap \bfo_{h}\neq\emptyset$.
Thus, $\mu(F_{0})\cmhook{C}\mu(F_{1})\cmhook{C}\dotsm\cmhook{C}\mu(F_{k-1})$ is a relevant $1$-embedding of $C$ in $\db$; and
\item
$\bfo_{h+1}=\bfo_{h}\cup\lrformula{\bigcup_{i=0}^{k-1}\theblock{\mu_{h}(F_{i})}{\db}}$.
\end{enumerate}
It is clear that the final set $\bfo_{m+1}$ is a minimal set satisfying all conditions in the statement of the lemma.
We show by induction on increasing $h$ that for all $h\in\{0,1,\dots,m\}$, $\bfo_{h}$ is a garbage set for $C$ in $\db$.
We have already showed that $\bfo_{0}$ is a garbage set for $C$ in $\db$.
For the induction step, $h\rightarrow h+1$, the induction hypothesis is that $\bfo_{h}$ is a garbage set for $C$ in $\db$.
Then, there exists a repair $\rep$ of $\bfo_{h}$ such that for every valuation $\theta$ over $\queryvars{q}$, if $\theta(q)\subseteq\formula{\db\setminus\bfo_{h}}\cup\rep$, then $\theta(q)\cap\rep=\emptyset$.
For every $i\in\{0,\dots,k-1\}$, define $A_{i}\defeq\mu_{h}(F_{i})$.
Let $\sep=\{A_{0},\dots,A_{k-1}\}\setminus\bfo_{h}$.
We have $\bfo_{h+1}=\bfo_{h}\uplus\lrformula{\bigcup_{A_{j}\in\sep}\theblock{A_{j}}{\db}}$.
Let $\rep'=\rep\uplus\sep$.
Obviously, $\rep'$ is a repair of $\bfo_{h+1}$.
Here, we use $\uplus$, instead of $\cup$, to make clear that the operands of the union are disjoint.
Assume, toward a contradiction, the existence of a valuation $\theta$ over $\queryvars{q}$ such that $\theta(q)\subseteq\formula{\db\setminus\bfo_{h+1}}\cup\rep'$ and $\theta(q)\cap\rep'\neq\emptyset$.
Since $\formula{\db\setminus\bfo_{h+1}}\cup\rep'\subseteq\formula{\db\setminus\bfo_{h}}\cup\rep$, it follows $\theta(q)\subseteq\formula{\db\setminus\bfo_{h}}\cup\rep$, hence $\theta(q)\cap\rep=\emptyset$ by our initial hypothesis.
It must be the case that $\theta(q)\cap\sep\neq\emptyset$.
We can assume $i\in\{0,\dots,k-1\}$ such that $A_{i}\in\theta(q)\cap\sep$.
We have $\theta(F_{i})\cmhook{C}\theta(F_{i\oplus 1})$.
Since $\theta(F_{i})=A_{i}$, we have  $A_{i}\cmhook{C}\theta(F_{i\oplus 1})$. 
From $A_{i}\cmhook{C}\theta(F_{i\oplus 1})$ and $A_{i}\cmhook{C}A_{i\oplus 1}$, it follows $\theta(F_{i\oplus 1})\keyequal A_{i\oplus 1}$ by Lemma~\ref{lem:hockey}.
Since $\theta(F_{i\oplus 1})\in\formula{\db\setminus\bfo_{h+1}}\cup\rep'$,
it follows that either $\theta(F_{i\oplus 1})=A_{i\oplus 1}\in\sep$ (this happens if $A_{i\oplus 1}\not\in\bfo_{h}$) or $\theta(F_{i\oplus 1})\in\rep$.
Thus, either $A_{i\oplus 1}\in\theta(q)\cap\sep$ or $\theta(F_{i\oplus 1})\in\rep$.
If $A_{i\oplus 1}\in\theta(q)\cap\sep$, then, by the same reasoning, either $A_{i\oplus 2}\in\theta(q)\cap\sep$ or $\theta(F_{i\oplus 2})\in\rep$.
By repeating the same reasoning, we obtain that for all $j\in\{0,\dots,k-1\}$, either $A_{j}\in\theta(q)\cap\sep$ or $\theta(F_{j})\in\rep$.
Since $\mu_{h}(q)\cap\bfo_{h}\neq\emptyset$ by our construction, we can assume the existence of $\ell\in\{0,\dots,k-1\}$ such that $A_{\ell}\in\bfo_{h}$, hence $A_{\ell}\not\in\sep$.
Since $A_{\ell}\not\in\theta(q)\cap\sep$,
it follows $\theta(F_{\ell})\in\rep$, contradicting that $\theta(q)\cap\rep=\emptyset$.
This concludes the induction step.
It is correct to conclude that $\bfo_{m+1}$ is a garbage set for $C$ in $\db$.

Let $\db'=\db\setminus\bfo_{m+1}$.
We show that the garbage set for $C$ in $\db'$ is empty.
Assume, toward a contradiction, that $\bfo$ is a nonempty garbage set for $C$ in $\db'$.
We can assume a repair $\rep$ of $\bfo$ such that for every valuation $\theta$ over $\queryvars{q}$, if $\theta(q)\subseteq\formula{\db'\setminus\bfo}\cup\rep$, then $\theta(q)\cap\rep=\emptyset$.

We show that for any $A\in\rep$,
the $\cmhook{C}$-graph contains an infinite path that starts from $A$ such that any
vertex on the path belongs to $\formula{\db'\setminus\bfo}\cup\rep$ and any (contiguous) subpath of length~$k$ contains some fact from $\rep$. 
To this extent, let $A$ be a fact of $\rep$.
By our construction, there exists a valuation $\mu$ over $\queryvars{q}$ such that $A\in\mu(q)\subseteq\db'$ (otherwise $A$ would belong to $\bfo_{m+1}$).
Hence, $\mu(F_{0})\cmhook{C}\mu(F_{1})\cmhook{C}\dotsm\cmhook{C}\mu(F_{k-1})\cmhook{C}\mu(F_{0})$ is a relevant $1$-embedding of $C$ in $\db'$ that contains $A$.
Then, for some $i\in\{0,\dots,k-1\}$, it must be the case that $\mu(F_{i})\not\in\formula{\db'\setminus\bfo}\cup\rep$ (or else $\mu(q)\subseteq\formula{\db'\setminus\bfo}\cup\rep$ and $\mu(q)\cap\rep\neq\emptyset$, a contradiction).
Thus, the $\cmhook{C}$-graph contains a shortest path $\pi$ of length $<k$ from $A$ to some fact $B\in\bfo\setminus\rep$.
Then, there exists $B'\in\rep$ such that $B'\keyequal B$ and the $\cmhook{C}$-graph contains a path of length $<k$ from $A$ to $B'$.
This path is obtained by substituting $B'$ for $B$ in $\pi$.
Since  $B'\in\rep$, we can continue the path by applying the same reasoning as for $A$.
The path is illustrated by Fig.~\ref{fig:maximal}.
Since the directed path is infinite, it has a shortest finite subpath of length $\geq k$ whose first vertex is key-equal to its last vertex. 
Let $D$ be the last but one vertex on this subpath.
Since the $\cmhook{C}$-graph contains a directed edge from $D$ to the first vertex of the subpath, it contains a cycle of some length $nk$ with $n\geq 1$.
Since this cycle is obviously an $n$-embedding of $C$ in $\db'=\db\setminus\bfo_{m+1}$, it must be a relevant $1$-embedding of $C$ in $\db'$ which, moreover, contains some fact of~$\rep$.
Thus, there exists a valuation $\mu$ over $\queryvars{q}$ such that $\mu(q)\subseteq\formula{\db'\setminus\bfo}\cup\rep$ and $\mu(q)\cap\rep\neq\emptyset$, a contradiction.

Since the garbage set for $\db\setminus\bfo_{m+1}$ is empty,
it follows by Lemma~\ref{lem:together} that $\bfo_{m+1}$ is the maximal garbage set for $C$ in $\db$.  
This concludes the proof.
\end{proof}

\medskip

\begin{proof}[Proof of Corollary~\ref{cor:scrubcomponent}]
We first show the following property: if $A\cmhook{C}B$ and $B$ belongs to the maximal garbage set for $C$ in $\db$, then $A$ also belongs to the maximal garbage set for $C$ in $\db$. 
To this extent, assume $A\cmhook{C}B$ such that $B$ belongs to the maximal garbage set for $C$ in $\db$.
We can assume an edge $F_{0}\markov F_{1}$ in $C$ and a valuation $\theta$ over $\queryvars{q}$ such that $\theta(q)\subseteq\db$, $A=\theta(F_{0})$, and $B\keyequal\theta(F_{1})$.
Since $B$ belongs to the maximal garbage set for $C$ in $\db$,
we have that $\theta(F_{1})$ belongs to the maximal garbage set by Definition~\ref{def:garbage}.
From $\theta(C)\subseteq\db$, it follows that $A\cmhook{C}\theta(F_{1})$ is an edge of a relevant $1$-embedding of $C$ in $\db$.
It follows from the recursive condition~\ref{it:recursive} in Lemma~\ref{lem:scrubalgo} that $A$ belongs to  the maximal garbage set for $C$ in $\db$.

The proof of the lemma can now be given.
Assume that $B\in\isc$ belongs to the maximal garbage set for $C$ in $\db$.
Let $A\in\isc$.
Since $\isc$ is a strong component,
there exists a path $A_{0}\cmhook{C}A_{1}\cmhook{C}\dotsm\cmhook{C}A_{\ell}$ such that $A_{0}=A$ and $A_{\ell}=B$.
By repeating the property of the previous paragraph,
we find that the maximal garbage set for $C$ in $\db$ contains $A_{\ell}$, $A_{\ell-1}$, \dots, $A_{0}$.
\end{proof}

\medskip

\begin{proof}[Proof of Lemma~\ref{lem:longcycle}]
Let $G=(V,E)$ be an instance of $\problem{LONGCYCLE}(k)$.
A directed cycle in $G$ of length~$k$ is called a \emph{$k$-cycle}.
Since the graph $G$ is $k$-partite, every $k$-cycle is elementary.

We denote by $\po{G}=(\po{V},\po{E})$ the undirected graph whose vertices are the $k$-cycles of $G$.
There is an undirected edge between any two distinct $k$-cycles $P_{1}$ and $P_{2}$ if $V(P_{1})\cap V(P_{2})\neq\emptyset$. 
We show that the following are equivalent:
\begin{enumerate}
\item\label{it:polyp}
$\po{G}$ has a chordless cycle of length $\geq 2k$
or $G$ has an elementary directed cycle of length $nk$ with $2\leq n\leq 2k-3$.
\item\label{it:pylop}
$G$ contains an elementary directed cycle of length~$\geq 2k$.
\end{enumerate}

\framebox{\ref{it:polyp}$\implies$\ref{it:pylop}}
Assume that \ref{it:polyp} holds true.
The result is obvious if there exists $n$ such that $2\leq n\leq 2k-3$ and $G$ has an elementary cycle of length $nk$.
Assume next that $\po{G}$ has a chordless elementary cycle $(P_0, P_1, \dots, P_{m-1}, P_0)$ of length $m \geq 2k$.
We construct a labeled cycle $C$ in $G$ using the following procedure.
The construction will define a labeling function $\ell$ from the vertices in $C$ to $\{0,1,\dots,m-1\}$.
It will be the case that $w\in V(P_{\ell(w)})$ for every vertex $w$ in $C$.
We start with any vertex $v_0\in V(P_{m-1})\cap V(P_0)$ and define its label as $\ell(v_{0})\defeq 0$. 
At any point of the procedure,
if we are at vertex $u$ with label $\ell(u)$, we choose the next vertex $w$ in $C$ to be the next vertex in the $k$-cycle $P_{\ell(u)}$.
If $\ell(u)<m-1$ and $w$ also belongs to $P_{\ell(u)+1}$, we let $\ell(w)\defeq\ell(u)+1$; otherwise $\ell(w)\defeq\ell(u)$.
The procedure terminates when we attempt to add a vertex that already exists in $C$,
and thus $C$ will be elementary.

We first show that the termination condition will not be met for any vertex distinct from $v_{0}$. 
Suppose, toward a contradiction, that the sequence constructed so far is $C = \tuple{v_0, v_1, \dots, v_n}$, $\ell(v_n) = i \leq m-1$, and the next vertex in $P_i$ is some $v_j$ with $j\in\{1, \dots, n-1\}$. 
Since $v_{j}$ belongs to both $P_{i}$ and $P_{\ell(v_{j})}$, 
it must be the case that $\ell(v_{j})\geq i-1$, because otherwise $\{P_{i},P_{\ell(v_{j})}\}$ is a chord in $(P_0, P_1, \dots, P_{m-1}, P_0)$, a contradiction.
%
We now distinguish two cases:
\begin{description}
\item[Case $\ell(v_j) = i-1$.] 
Then, $v_j\in V(P_{i-1})\cap V(P_i)$. By the procedure, this means that $\ell(v_{j-1})=i-2$.
Indeed, if $\ell(v_{j-1})=i-1$,
then the procedure would have set $\ell(v_{j})$ to $i$, because $v_{j}$ also belongs to $P_{i}$.
But then this also implies that $v_j \in V(P_{i-2})$, a contradiction to the fact that the cycle is chordless.
\item[Case $\ell(v_j) = i$.]
Informally, the procedure reaches a vertex on $P_{i}$ that has been visited before. 
Then, $C$ contains all the vertices of $P_i$, and none of them are in $P_{i+1}$, a contradiction.
\end{description}
It is now clear that at some point we will reach $v_0$. Indeed, when the label becomes $m-1$, the procedure will follow the edges of $P_{m-1}$ until it reaches $v_0$.
The cycle $C$ has length $\geq 2k$, because every label from $\{0, \dots, m-1\}$ occurs in some vertex of $C$, hence $C$ contains at least  $m\geq 2k$ vertices.

\framebox{\ref{it:pylop}$\implies$\ref{it:polyp}}
We first introduce some notions that will be useful in the proof.
A subpath of a path is a consecutive subsequence of edges of that path. 
Every path is a subpath of itself.
We write $\start{\pi}$ and $\fin{\pi}$ to denote, respectively, the first and the last vertex of a path $\pi$.
If $\fin{\pi}=\start{\pi'}$,
then $\pi\cdot\pi'$ denotes the concatenation of paths $\pi$ and $\pi'$.

\myparagraph{Covering}
Let $O$ be an elementary cycle in $G$ of size $\geq 2k$.
A \emph{seam in $O$} is a subpath of $O$ that is also a subpath of some $k$-cycle.
Obviously, every seam in $O$ has length $<k$.
A \emph{covering of $O$} is a set of seams in $O$ such that every edge of $O$ is an edge of some seam in the set.
Since every edge of $G$ belongs to some $k$-cycle by our hypothesis, $O$ has a covering.

\myparagraph{Cyclic ordering of the seams in a minimal covering}
Let $C=\{S_{0},S_{1},\dots,S_{\ell-1}\}$ be a minimal (with respect to cardinality) covering of $O$.
For every $i\in\{0,\dots,\ell-1\}$,
pick an edge $e_{i}\in E(O)$ such that $e_{i}\in E(S_{i})$ and $e_{i}\notin E(S_{j})$ for all $j\in\{0,\dots,\ell-1\}$ such that $j\neq i$.
Notice that if such $e_{i}$ would not exist for some $i$, then every edge of $O$ would belong to some seam in $C\setminus\{S_{i}\}$, contradicting $C$'s minimality. 
From here on, we will assume that if $C$ is a minimal covering, then its seams are listed such that a traversal of $O$ that starts with $e_{0}$ traverses these $e_{i}$'s in the order $e_{0}$, $e_{1}$, \dots, $e_{\ell-1}$.
It can be seen that this cyclic order does not depend on which $e_{i}$ is picked from $S_{i}$ when more than one choice would be possible.
Thus, if we traverse the edges of $O$ starting from $\start{S_{0}}$, then for every $i\in\{1,\dots,\ell-1\}$, we will traverse $e_{i}\in E(S_{i})$ after $S_{i-1}$ and before $S_{(i+1)\mod\ell}$.
For the following definition, it is useful to note that if $S$ and $S'$ are consecutive seams in a minimal covering, then $S$ and $S'$ can overlap on a common subpath.

\myparagraph{Preferred coverings}
Let $O_{1}$ and $O_{2}$ be two elementary cycles, both of length $\geq 2k$.
These elementary cycles need not have the same length and need not be distinct. 
Let $C_{1}$ be a covering of $O_{1}$,
and $C_{2}$ a covering of $O_{2}$.
The covering $C_{1}$ is said to be \emph{preferred over} $C_{2}$ if $\card{C_{1}}<\card{C_{2}}$.
If $\card{C_{1}}=\card{C_{2}}$,
then $C_{1}$ is said to be \emph{preferred over} $C_{2}$ if $\sum_{S\in C_{1}}\card{E(S)}>\sum_{S\in C_{2}}\card{E(S)}$.
Informally, we prefer coverings of smaller cardinality;
if cardinalities are equal, we prefer the covering with greater average seam length.

We are now ready to write down the proof for \ref{it:pylop}$\implies$\ref{it:polyp}.
Assume that $G$ contains an elementary directed cycle of length~$\geq 2k$.
We can assume the existence of an elementary cycle $O$ in $G$ of length $\geq 2k$ having a covering $C^{\ast}$ such that no elementary cycle of length $\geq 2k$ has a covering that is preferred over  $C^{\ast}$.

Let $C^{\ast}=\{S_{0},S_{1},\dots,S_{\ell-1}\}$.
Note that it must be the case that $\ell\geq 3$.
For every $i\in\{0,\dots,\ell-1\}$, we can assume a $k$-cycle $P_{i}$ such that $S_{i}$ is a subpath of $P_{i}$.
It will be the case that $(P_{0},P_{1},\dots,P_{\ell-1},P_{0})$ is a cycle in $\hat{G}$.
We will show that this cycle in $\hat{G}$ is chordless.
The proof is by contradiction.
Assume that the $\hat{G}$-cycle  $(P_{0},P_{1},\dots,P_{\ell-1},P_{0})$ has a chord.
Then we can assume two $k$-cycles $P,P'\in\{P_{0},P_{1},\dots,P_{\ell-1}\}$ such that $P$ and $P'$ are not adjacent in the $\hat{G}$-cycle and $V(P)\cap V(P')\neq\emptyset$ (and thus $\{P,P'\}$ is a chord in the $\hat{G}$-cycle). 
Let $S$ and $S'$ be the two seams that are subpaths of $P$ and $P'$, respectively. 
Informally, the covering $O^{\ast}$ uses disjoint subpaths, $S$ and $S'$, of two intersecting $k$-cycles. 
This situation is sketched in Fig.~\ref{fig:sketch}.

\begin{figure}[h]\centering
\begin{tabular}{cc}
\includegraphics[scale=0.50,trim={10cm 5cm 10cm 2cm},clip]{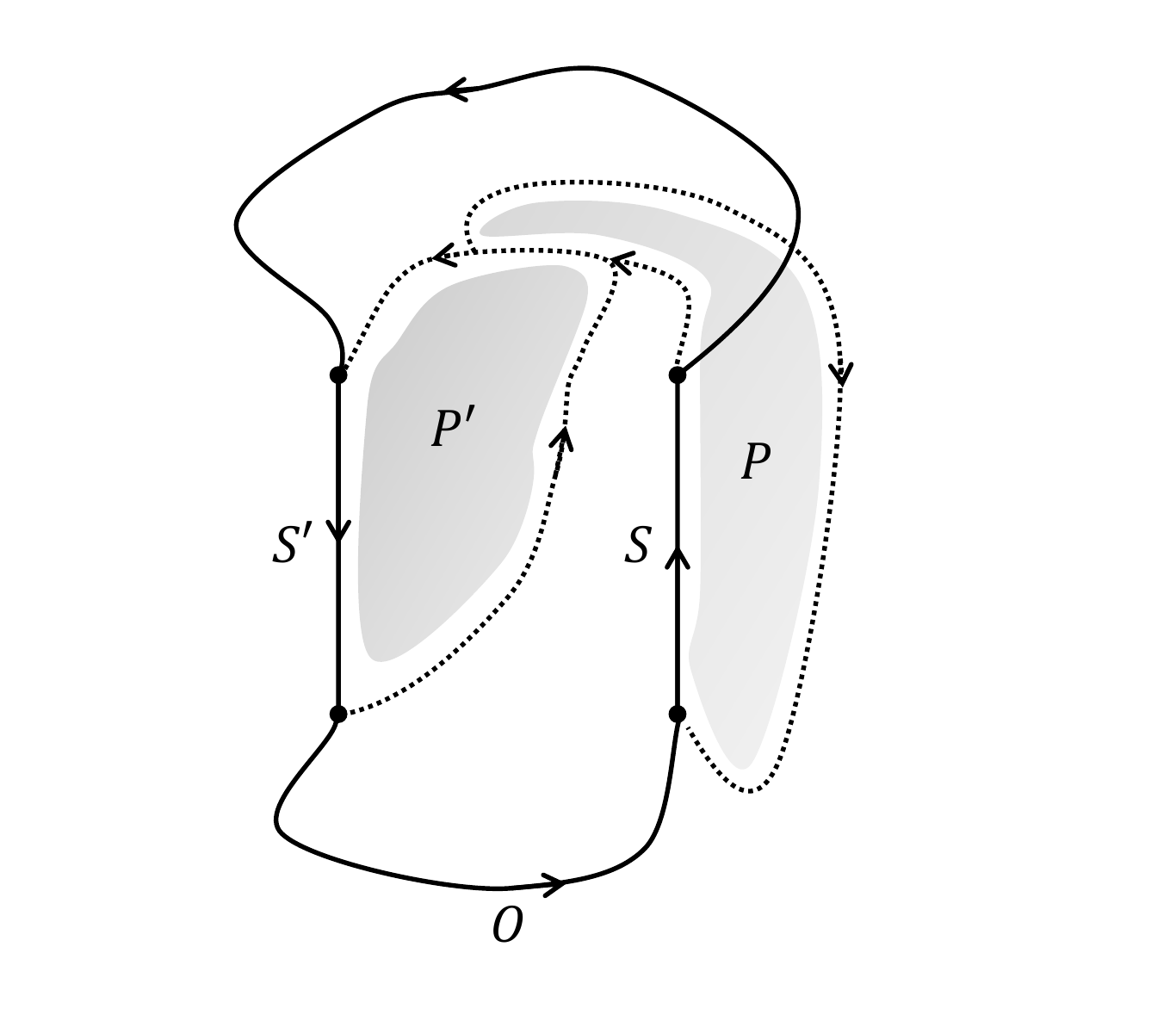}
&
\includegraphics[scale=0.50,trim={0cm 8cm 20cm 0cm},clip]{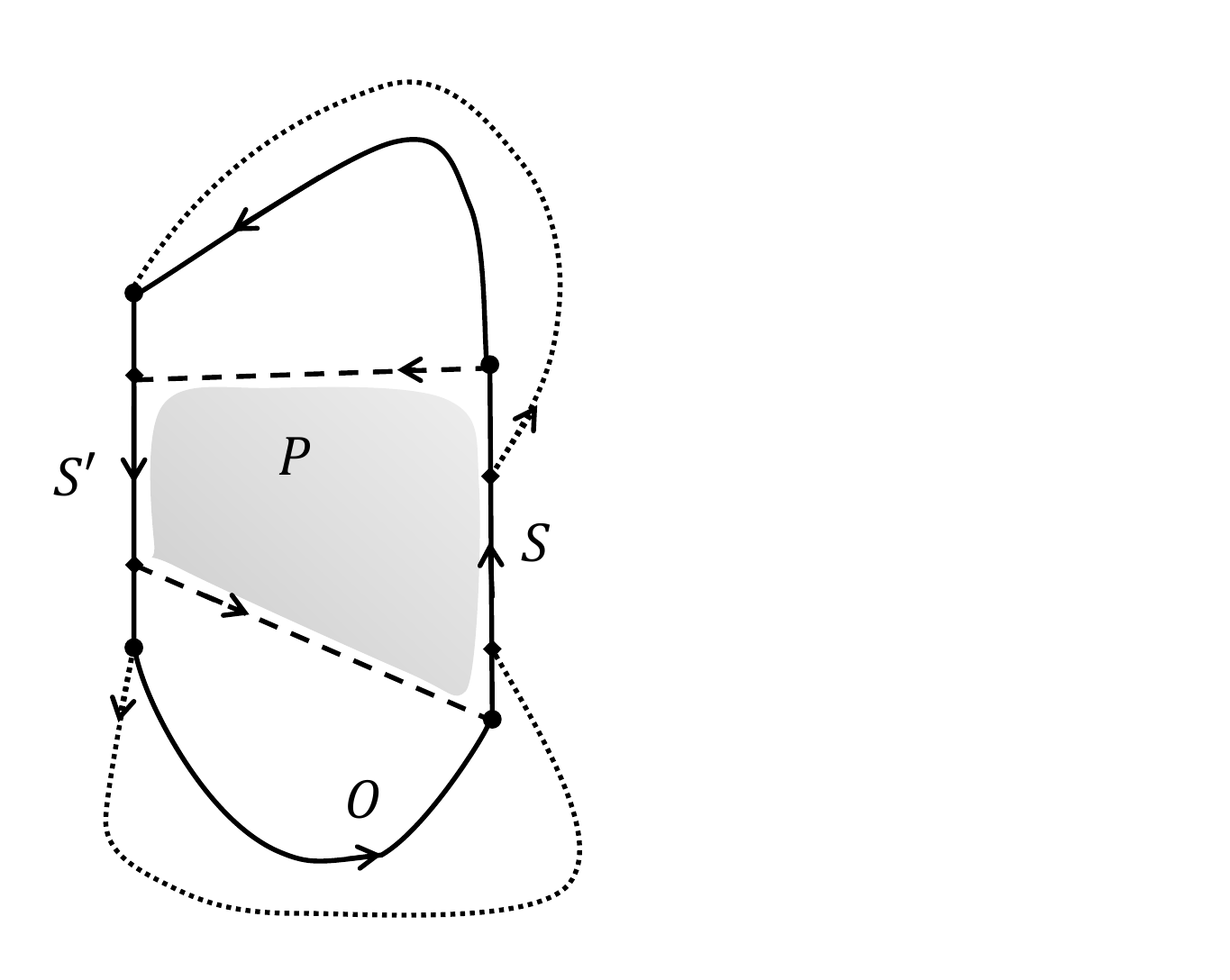}
\end{tabular}
\caption{
The seams $S$ and $S'$ are subpaths of the intersecting $k$-cycles $P$ and $P'$, respectively.
The solid line is the elementary cycle $O$ of length $\geq 2k$.
In the right diagram, the $k$-cycle $P'$ uses the dotted paths.
}\label{fig:sketch}
\end{figure} 

Assume without loss of generality that $S=S_{0}$.
We can assume $m\in\{2,\dots,\ell-2\}$ such that $S'=S_{m}$.
Thus,
\begin{equation}\label{eq:cstar}
C^{\ast}=\{S,S_{1},\dots,S_{m-1},S',S_{m+1},\dots,S_{\ell-1}\}.
\end{equation}
It will be the case that $S$ and $S'$ are disjoint paths (or else $C^{\ast}$ would not be a preferred covering, a contradiction).

Let $\pi$ be the subpath of $O$ from $\fin{S'}$ to $\start{S}$.
Let $T$ be the shortest subpath of $P$ that starts in $\fin{S}$ and ends in a vertex (call it $x$) that belongs to $P'$.
Such $x$ will always be reached because $P$ and $P'$ intersect. 
\begin{figure}[h]\centering
\includegraphics[scale=0.50,trim={0cm 8cm 10cm 2cm},clip]{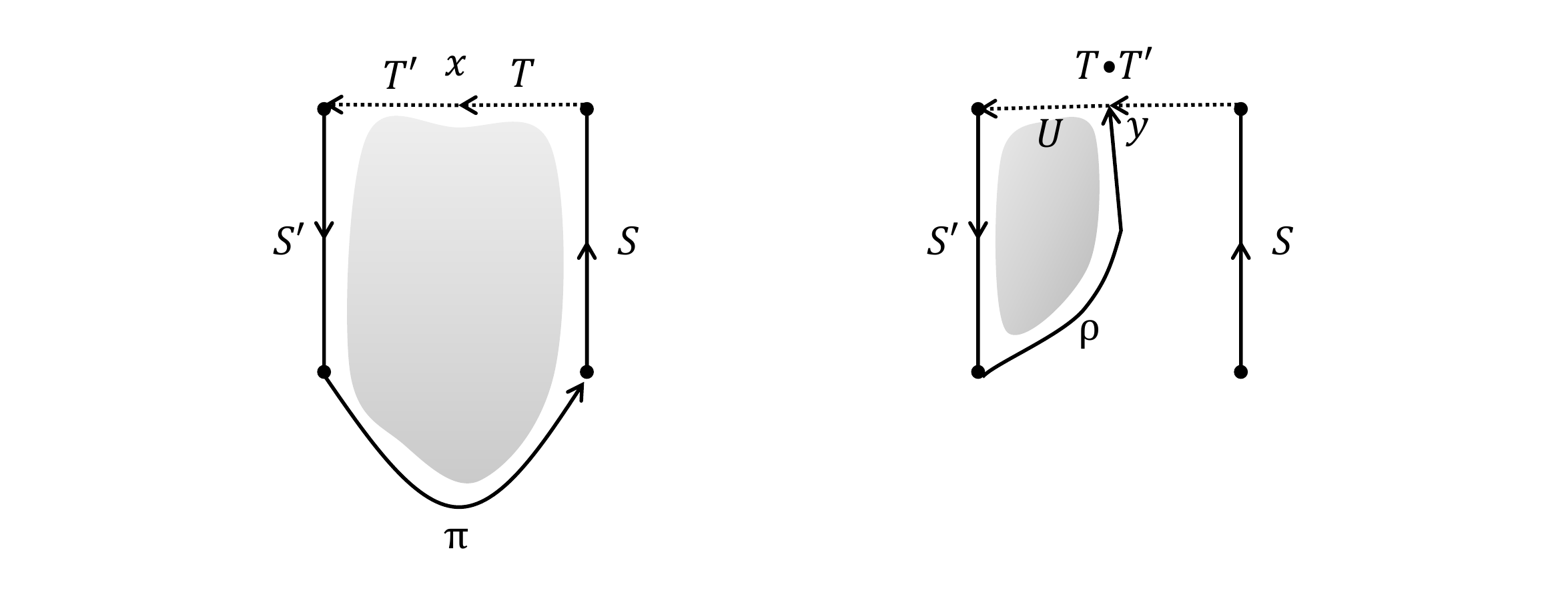}
\caption{Construction in the proof of Lemma~\ref{lem:longcycle}.
The $k$-cycles $P$ and $P'$ intersect in a vertex $x$ that does not belong to $S'$.
The path $S'\cdot\pi\cdot S$ is a subpath of $O$.
In the left diagram, $\pi$ does not intersect $T\cdot T'$.
In the right diagram, $\pi$ intersects $T\cdot T'$ in $y$.
}\label{fig:clutter}
\end{figure} 
\begin{figure}[h]\centering
\includegraphics[scale=0.50,trim={0cm 8cm 24cm 0cm},clip]{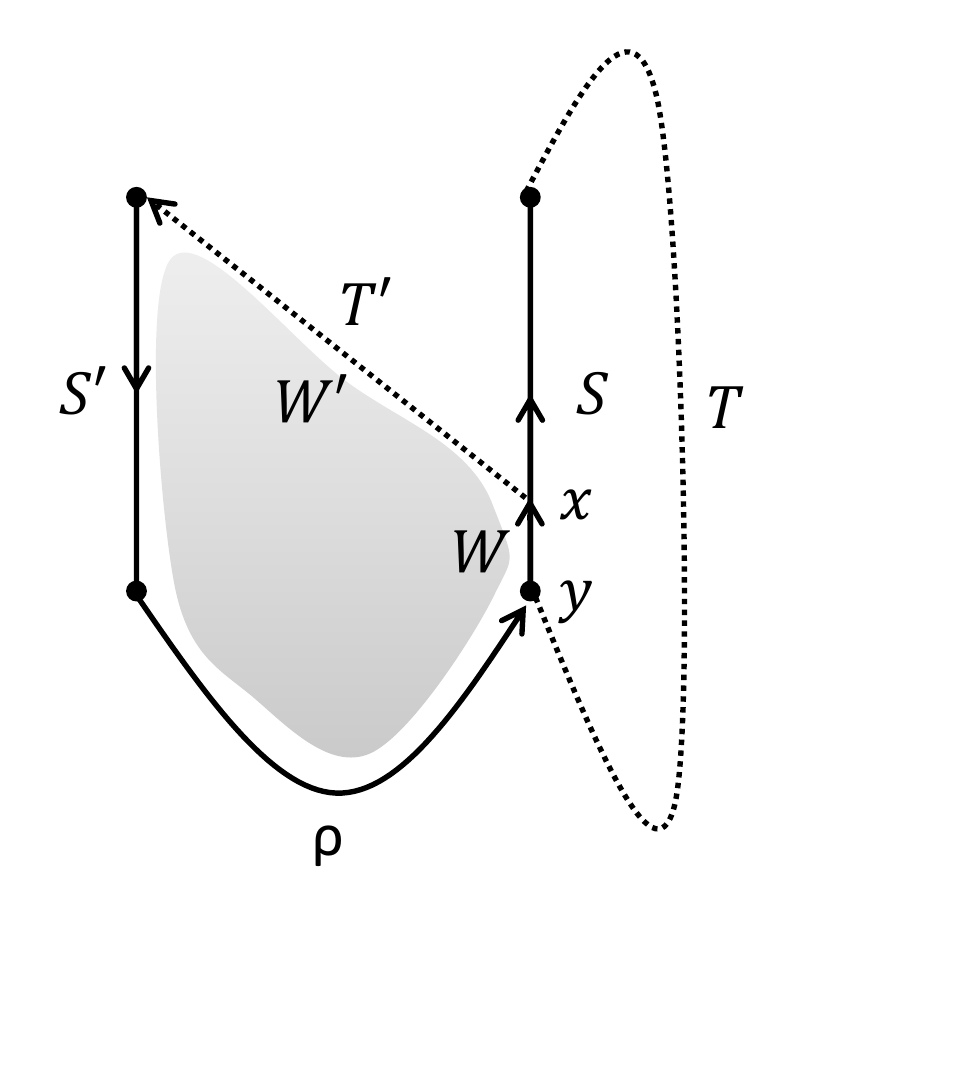}
\caption{Construction in the proof of Lemma~\ref{lem:longcycle}.
The $k$-cycles $P$ and $P'$ intersect in a vertex $x$ that belongs to $S$.
}\label{fig:clutterxonS}
\end{figure} 
We distinguish two cases, which are illustrated by Figures~\ref{fig:clutter} and~\ref{fig:clutterBis}.
\paragraph*{Case that $x\notin V(S')$.}
Let $T'$ be the subpath of $P'$ from $x$ to $\start{S'}$.
The path $T\cdot T'$ is elementary, or else $x$ would not be the first vertex on $T$ that belongs to $P'$, a contradiction.
We distinguish two cases, which are illustrated by the left and right diagrams of~Fig.~\ref{fig:clutter}.
\begin{description}
\item[Case that $\pi$ does not intersect $T\cdot T'$.]
Let $O'$ be the cycle $S\cdot T\cdot T'\cdot S'\cdot\pi$, which will be elementary.
The interior of $O'$ is shaded in the left diagram of Fig.~\ref{fig:clutter}.
Assume, toward a contradiction, that $O'$ has length $\geq 2k$.
Then, $S\cdot T$ and $T'\cdot S'$ are two seams of $O'$.
Then 
\begin{equation*}
\{S\cdot T, T'\cdot S',S_{m+1},\dots,S_{\ell-1}\}
\end{equation*}
is a covering of $O'$ that is preferred over $C^{\ast}$ (as can bee seen by comparing with Equation~\ref{eq:cstar}), a contradiction.
We conclude by contradiction that $O'$ has length~$k$.
But then $S'\cdot\pi\cdot S$ is a seam in $O$.
Then 
\begin{equation*}
\{S'\cdot\pi\cdot S,S_{1},\dots,S_{m-1}\}
\end{equation*}
is a covering of $O$ that is preferred over $C^{\ast}$, a contradiction. 
\item[Case that $\pi$ intersects $T\cdot T'$.]
Let $\rho$ be the shortest prefix of $\pi$ that ends in a vertex (call it $y$) that belongs to $T\cdot T'$.
Assume, toward a contradiction, that $\rho$ is the empty path.
Then, $y=\fin{S'}$.
Since $y\in V(S')$, it must be the case that $y=\start{T'}$, hence $x=y$.
But then $x\in V(S')$, a contradiction.
We conclude by contradiction that $\rho$ contains at least one edge.
Let $U$ denote the suffix of $T\cdot T'$ that starts in $y$. 
Let $O'$ be the cycle $U\cdot S'\cdot\rho$, which will be elementary.
The interior of $O'$ is shaded in the right diagram of Fig.~\ref{fig:clutter}.
Assume, toward a contradiction, that $O'$ has length $\geq 2k$.
We distinguish two cases.
\begin{itemize}
\item
Case that $y\in V(T')$.
Then $U\cdot S'$ is a seam in $O'$.
Then $O'$ can be covered by $U\cdot S'$ together with the seams in $C^{\ast}$ that cover $\rho$ (where the seam of $C^{\ast}$ that covers a suffix of $\rho$  may need to be truncated at $y$).
\item
Case that $y\not\in V(T')$.
Then, $U=W\cdot W'$ where $W$ is the prefix of $U$ that ends in $x$ (and thus contains $y$),
and $W'$ is the suffix of $U$ that starts from $x$.
Then $W$ and $W'\cdot S'$ are seams of $O'$.
Then $O'$ can be covered by $W$, $W'\cdot S'$, and the seams in $C^{\ast}$ that cover $\rho$ (where again the seam covering a suffix of $\rho$  may need to be truncated at $y$).
\end{itemize}
It can now be seen that $O'$ has a covering of cardinality $\ell-m+1<\ell=\card{C^{\ast}}$.
Thus, $O'$ has a covering that is preferred over $C^{\ast}$, a contradiction.
We conclude by contradiction that $O'$ has length~$k$.
Then $S'\cdot\rho$ is a seam in $O$.
Then
\begin{equation*}
\{S,S_{1},\dots,S_{m-1},S'\cdot\rho,S_{m+1},\dots,S_{\ell-1}\}
\end{equation*}
is a (not necessarily minimal) covering of $O$ that is preferred over $C^{\ast}$, a contradiction. 

We note incidentally that in this case, it is possible that $x$ belongs to $S$, as illustrated by Fig.~\ref{fig:clutterxonS}.
\end{description}

\begin{figure}[h]\centering
\includegraphics[scale=0.50,trim={0cm 8cm 10cm 2cm},clip]{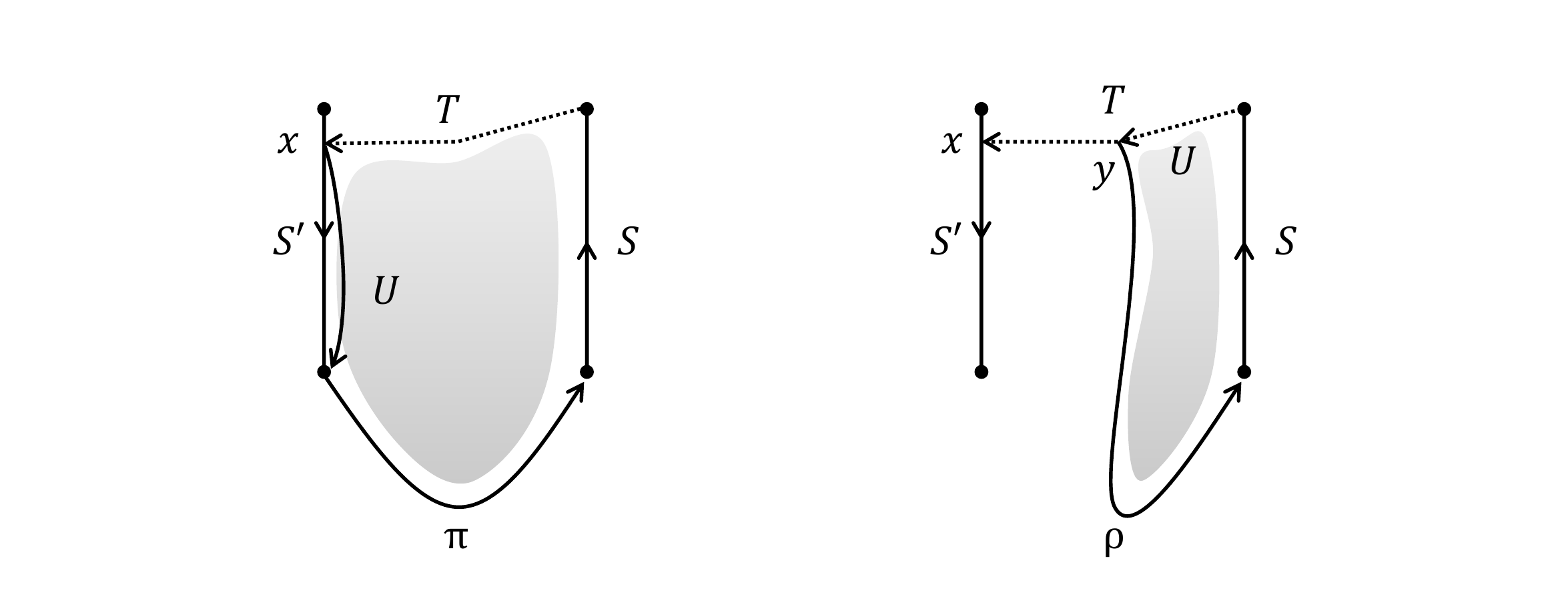}
\caption{Construction in the proof of Lemma~\ref{lem:longcycle}.
The $k$-cycles $P$ and $P'$ intersect in a vertex $x$ that belongs to $S'$.
The path $S'\cdot\pi\cdot S$ is a subpath of $O$.
In the left diagram, $\pi$ does not intersect $T$.
In the right diagram, $\pi$ intersects $T$ in $y$.
}\label{fig:clutterBis}
\end{figure}

\paragraph*{Case that $x\in V(S')$.}
We distinguish two cases, which are illustrated by the left and right diagrams of~Fig.~\ref{fig:clutterBis}.
\begin{description}
\item[Case that $\pi$ does not intersect $T$.]
Let $U$ be the suffix of $S'$ that starts in $x$.
Note incidentally that if $x=\start{S'}$, then $U=S'$.
Let $O'$ be the cycle $U\cdot\pi\cdot S\cdot T$, which will be elementary.
Assume, toward a contradiction, that $O'$ has length~$\geq~2k$.
Then, $S\cdot T$ is a seam in $O'$.
Then 
\begin{equation*}
\{S\cdot T, U,S_{m+1},\dots,S_{\ell-1}\}
\end{equation*}
is a covering of $O'$ that is preferred over $C^{\ast}$, a contradiction.
We conclude by contradiction that $O'$ has length~$k$.
Then $U\cdot\pi\cdot S$ is a seam in $O$.
Then
\begin{equation*}
\{U\cdot\pi\cdot S,S_{1},\dots,S_{m-1},S'\}
\end{equation*}
is a (not necessarily minimal) covering of $O$ that is preferred over $C^{\ast}$, a contradiction. 
\item[Case that $\pi$ intersects $T$.]
Let $U$ be the shortest prefix of $T$ that ends in a vertex (call it $y$) that belongs to $\pi$.
Let $\rho$ be the suffix of $\pi$ that starts in $y$.
Assume, toward a contradiction, that $\rho$ is the empty path.
Then, $y=\start{S}$.
Since $y$ is on the subpath of $P$ from $\fin{S}$ to $x$, it must be the case that $x=y$.
Then, $\start{S}\in V(S')$, a contradiction.
We conclude by contradiction that $\rho$ contains at least one edge.
Let $O'$ be the cycle $S\cdot U\cdot\rho$, which is elementary. 
Assume, toward a contradiction, that $O'$ has length~$\geq 2k$.
Then, $S\cdot U$ is a seam in $O'$.
It can be easily seen that $O'$ has a covering that is preferred over $C^{\ast}$ ($O'$ can be covered by the seam $S\cdot U$ together with the seams in $O^{\ast}$ that cover $\rho$), a contradiction.
We conclude by contradiction that $O'$ has length~$k$.
Then, $\rho\cdot S$ is a seam in  $O$.
Then
\begin{equation*}
\{\rho\cdot S,S_{1},\dots,S_{m-1},S',S_{m+1},\dots,S_{\ell-1}\}
\end{equation*}
is a (not necessarily minimal) covering of $O$ that is preferred over $C^{\ast}$, a contradiction.
\end{description}
It is now correct to conclude that $(P_{0},P_{1},\dots,P_{\ell-1},P_{0})$ is a chordless cycle in $\hat{G}$.
We now distinguish two cases.
\begin{description}
\item[Case $\ell\geq 2k$.]
Then $\hat{G}$ has a chordless cycle of length $\geq 2k$.
\item[Case $\ell<2k$].
Since $O$ contains edges from only $\ell$ $k$-cycles,
and since every $k$-cycle can contribute at most $k-1$ edges to $O$ (because $O$ is elementary),
it follows that the length of $O$ is at most $(2k-1)(k-1)$.
Since the length of $O$ must be a multiple of $k$ and $(2k-1)(k-1)<2(k-1)k$, the length of $O$ cannot exceed the greatest multiple of $k$ that is strictly smaller than $2(k-1)k=(2k-2)k$.
Therefore, the length of $O$ is at most $(2k-3)k$. 
\end{description}
This concludes the proof of \framebox{\ref{it:pylop}$\implies$\ref{it:polyp}}.

The equivalence \framebox{\ref{it:polyp}$\iff$\ref{it:pylop}} is now used to develop a logspace algorithm  for deciding whether $G$ contains an elementary cycle of length $\geq 2k$.

All elementary cycles of length between $2k$ and $(2k-3)k$ can obviously be found in $\FO$.
The graph $\po{G}$ can clearly be constructed in logarithmic space.
The existence of a chordless cycle can be computed in logarithmic space, as follows:
check whether there exists a path $(P_{1},P_{2},\dots,P_{2k})$ whose two subpaths of length~$2k-1$ are chordless and whose endpoints (i.e., $P_{1}$ and $P_{2k}$) are either equal or connected by a path that uses no vertex in $\{P_{2},\dots,P_{2k-1}\}$. 
Since undirected connectivity can be decided in logarithmic space~\cite{DBLP:journals/jacm/Reingold08}, it is correct to conclude that $\problem{LONGCYCLE}(k)$ is in logarithmic space.
\end{proof}

\begin{figure*}\centering
\includegraphics[scale=0.8]{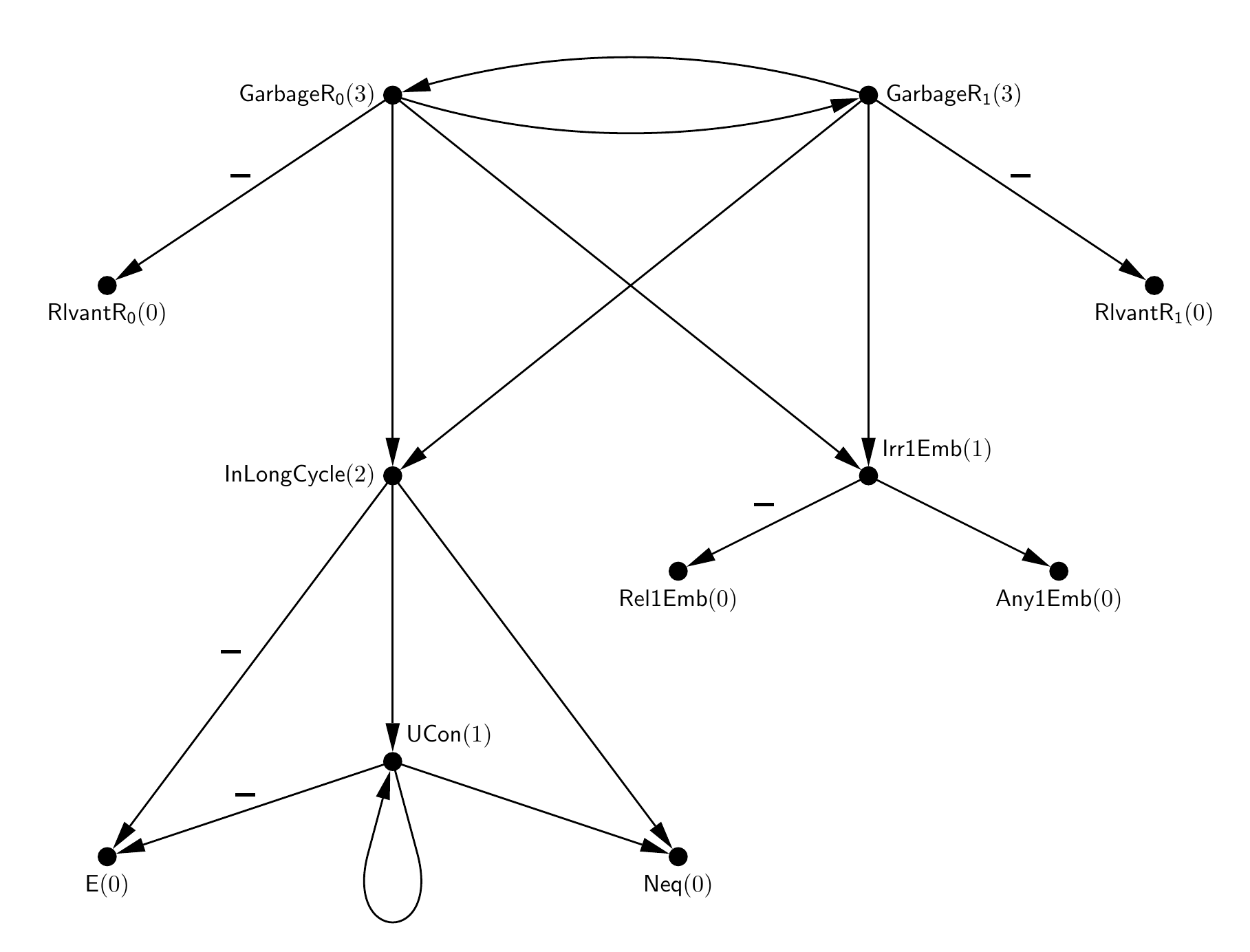}
\caption{
Precedence graph for the symmetric stratified Datalog program constructed in the proof of Lemma~\ref{lem:scrub},
for a cycle $R_{0}\markov R_{1}\markov R_{0}$.
Equality and disequality predicates have been omitted.
Edges with a $-$ label represent negative dependencies.
The numbers between parentheses are the strata.
}\label{fig:pdg}
\end{figure*}

\medskip

\begin{proof}[Proof of Lemma~\ref{lem:scrub}]
Let the elementary cycle in the \mgraph be $C=F_{0}\markov F_{1}\markov\dotsm\markov F_{k-1}\markov F_{0}$ where $k\geq 2$ is the length of the cycle.
For each $i\in\{0,\dots,k-1\}$, let $F_{i}=R_{i}(\underline{\vec{x}_{i}},\vec{y}_{i})$.
Further, for every $i\in\{0,\dots,k-1\}$ such that the signature of $R_{i}$ is $\signature{n}{\ell}$:
\begin{itemize}
\item
let $\vec{u}_{i}$ and $\vec{w}_{i}$ be sequences of fresh distinct variables of lengths~$\ell$ and~$n-\ell$ respectively.
Thus, the atom $R_{i}(\underline{\vec{u}_{i}},\vec{w}_{i})$ is syntactically well-defined;
\item
let $\good{R_{i}}$ be an IDB predicate of arity $n$;
\item
let $\del{R_{i}}$ be an IDB predicate of arity $\ell$.
\end{itemize}
Informally,
whenever a fact $\del{R_{i}}(\vec{a}_{i})$ will be derived,
then the input database contains a block $R_{i}(\underline{\vec{a}_{i}},\blockfiller)$ that belongs to the maximal garbage set for $C$.
The precedence graph of our Datalog program, with an indication of the strata, is shown in Fig.~\ref{fig:pdg}. 
We start by defining the IDB predicates $\good{R_{i}}$, where $\good{R_{i}}(\underline{\vec{a}_{i}},\vec{b_{i}})$ indicates that $R_{i}(\underline{\vec{a}_{i}},\vec{b_{i}})$ belongs to a relevant $1$-embedding of $C$.
For every $i\in\{0,1,\dots,k-1\}$, we add the rules:
$$
\begin{datalogpgm}
\good{R_{i}}(\vec{x}_{i},\vec{y}_{i}) & q\\[1.0ex]
\del{R_{i}}(\vec{u}_{i}) & R_{i}(\vec{u}_{i},\vec{w}_{i}), \neg\good{R_{i}}(\vec{u}_{i},\vec{w_{i}})\\
\end{datalogpgm}
$$
These rules implement condition~\ref{it:scrubnotdangling} in Lemma~\ref{lem:scrubalgo};
condition~\ref{it:keyequalclosure} is also captured since the argument of $\del{R_{i}}$ is limited to primary-key positions, which identify blocks rather than individual facts.
To implement condition~\ref{it:recursive} in Lemma~\ref{lem:scrubalgo}, we add,
for every $i,j\in\{0,1,\dots,k-1\}$ such that $i<j$, the rules:
$$
\begin{datalogpgm}
\del{R_{i}}(\vec{x}_{i}) & q, \del{R_{j}}(\vec{x}_{j})\\[1.0ex]
\del{R_{j}}(\vec{x}_{j}) & q, \del{R_{i}}(\vec{x}_{i})
\end{datalogpgm}
$$
These rules are each other's symmetric version.

For every variable $x$ and every $i\in\{\supone,\supthree\}\cup\{0,1,2,\dots\}$,
we write $\rename{x}{i}$ to denote a fresh variable such that $\rename{x}{i}=\rename{y}{j}$ if and only if $x=y$ and $i=j$.
This notation extends to sequences of variables and queries in the natural way.
For example, if $\vec{x}=\tuple{x_{1},x_{2},\dots,x_{n}}$,
then $\rename{\vec{x}}{i}=\tuple{\rename{x_{1}}{i},\rename{x_{2}}{i},\dots,\rename{x_{n}}{i}}$.
If $c$ is a constant, then we define $\rename{c}{i}=c$.

We will need to compare composite primary-key values for disequality.
To this extent, we add the following rules for every $i\in\{0,1,\dots,k-1\}$:
$$
\begin{datalogpgm}
\eqpred{R_{i}}(\vec{x}_{i},\vec{x}_{i}) &  R_{i}(\vec{x}_{i},\vec{y}_{i})\\[1.0ex]
\diseqpred{R_{i}}(\vec{x}_{i},\rename{\vec{x}_{i}}{\supone}) & 
\left\{
\begin{array}{l}
 R_{i}(\vec{x}_{i},\vec{y}_{i}),
 R_{i}(\rename{\vec{x}_{i}}{\supone},\rename{\vec{y}_{i}}{\supone}),\\
 \neg\eqpred{R_{i}}(\vec{x}_{i},\rename{\vec{x}_{i}}{\supone})
\end{array}
\right\} 
\end{datalogpgm}
$$
Note that the rule for $\eqpred{R_{i}}$ only applies to $R_{i}$-facts that satisfy the rule body $\{R_{i}(\vec{x}_{i},\vec{y}_{i})\}$.
This suffices, because $R_{i}$-facts falsifying  $\{R_{i}(\vec{x}_{i},\vec{y}_{i})\}$ cannot belong to a relevant $1$-embedding,
and will be added to the garbage set by previous rules.

In what follows, $\diseqpred{R_{i}}(\vec{x}_{i},\rename{\vec{x}_{i}}{\supone})$ will be abbreviated as $\vec{x}_{i}\neq_{R_{i}}\rename{\vec{x}_{i}}{\supone}$.
Likewise, $\eqpred{R_{i}}(\vec{x}_{i},\rename{\vec{x}_{i}}{\supone})$ will be abbreviated as $\vec{x}_{i}=_{R_{i}}\rename{\vec{x}_{i}}{\supone}$.
Of course, in Datalog with $\neq$, these predicates can be expressed by using disequality ($\neq$) instead of negation ($\neg$).

The predicate $\anyone$ computes all $1$-embeddings of $C$.
Then, $\relone$ computes the relevant $1$-embeddings, and $\irrone$ the irrelevant $1$-embeddings, which is needed in the implementation of condition~\ref{it:scruboneembedding} in Lemma~\ref{lem:scrubalgo}.

$$
\begin{array}{l}
\begin{datalogpgm}
\anyone(
\rename{\vec{x}_{0}}{0},\rename{\vec{y}_{0}}{0},
\rename{\vec{x}_{1}}{1},\rename{\vec{y}_{1}}{1},
\dots,
\rename{\vec{x}_{k-1}}{k-1},\rename{\vec{y}_{k-1}}{k-1}
)
&
\left\{
\begin{array}{l}
\rename{q}{0},\rename{q}{1},\dots,\rename{q}{k-1},\\[1.0ex]
\begin{array}{ccc}
\rename{\vec{x}_{0}}{0}&=_{R_{0}}&\rename{\vec{x}_{0}}{k-1},\\
\rename{\vec{x}_{1}}{1}&=_{R_{1}}&\rename{\vec{x}_{1}}{0},\\
\rename{\vec{x}_{2}}{2}&=_{R_{2}}&\rename{\vec{x}_{2}}{1},\\
&\vdots&\\
\rename{\vec{x}_{k-1}}{k-1}&=_{R_{k-1}}&\rename{\vec{x}_{k-1}}{k-2}
\end{array}
\end{array}
\right\}
\end{datalogpgm}
\\
\\
\begin{datalogpgm}
\relone(
\vec{x}_{0},\vec{y}_{0},
\vec{x}_{1},\vec{y}_{1},
\dots,
\vec{x}_{k-1},\vec{y}_{k-1}
)
&
q
\end{datalogpgm}\\[1.0ex]
\begin{datalogpgm}
\irrone(
\rename{\vec{x}_{0}}{0},
\rename{\vec{x}_{1}}{1},
\dots,
\rename{\vec{x}_{k-1}}{k-1}
)
&
\left\{
\begin{array}{l}
\anyone(
\rename{\vec{x}_{0}}{0},\rename{\vec{y}_{0}}{0},
\rename{\vec{x}_{1}}{1},\rename{\vec{y}_{1}}{1},
\dots,
\rename{\vec{x}_{k-1}}{k-1},\rename{\vec{y}_{k-1}}{k-1}
),\\[1.0ex]
\neg\relone(
\rename{\vec{x}_{0}}{0},\rename{\vec{y}_{0}}{0},
\rename{\vec{x}_{1}}{1},\rename{\vec{y}_{1}}{1},
\dots,
\rename{\vec{x}_{k-1}}{k-1},\rename{\vec{y}_{k-1}}{k-1}
)
\end{array}\right\}
\end{datalogpgm}
\end{array}
$$
To finish the implementation of condition~\ref{it:scruboneembedding} in Lemma~\ref{lem:scrubalgo}, we add, for every $i\in\{0,\dots,k-1\}$, the following rules:
$$
\begin{datalogpgm}
\del{R_{i}}(\rename{\vec{x}_{i}}{i}) & \irrone(
\rename{\vec{x}_{0}}{0},
\rename{\vec{x}_{1}}{1},
\dots,
\rename{\vec{x}_{k-1}}{k-1}
)
\end{datalogpgm}
$$

\begin{figure}\centering
\includegraphics[scale=0.8]{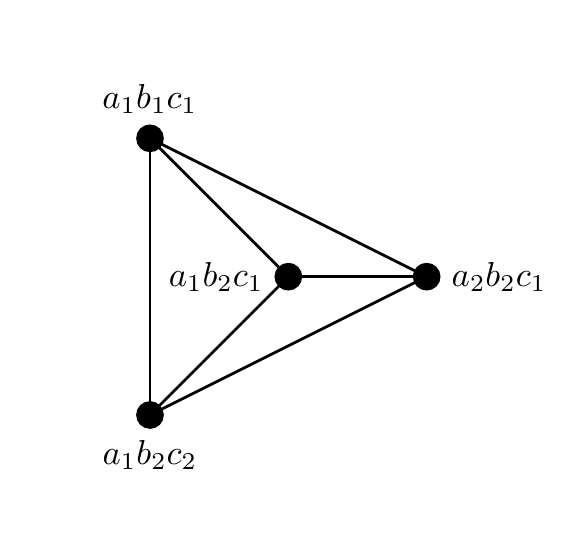}
\caption{$\polyedge$-edges for the example of Fig.~\ref{fig:gt}, where $k=3$.
There is no chordless cycle of length~$2k=6$.
However, since the inequalities $2\leq n\leq 2k-3$ have solutions $n=2$ and $n=3$,
the Datalog program will also contain non-recursive rules for detecting $2$-embeddings and $3$-embeddings. 
}\label{fig:guidedtourpoly}
\end{figure}

\begin{figure*}\centering
\captionsetup[subfigure]{justification=centering}
\begin{subfigure}[t]{\textwidth}\centering
\includegraphics[scale=0.8]{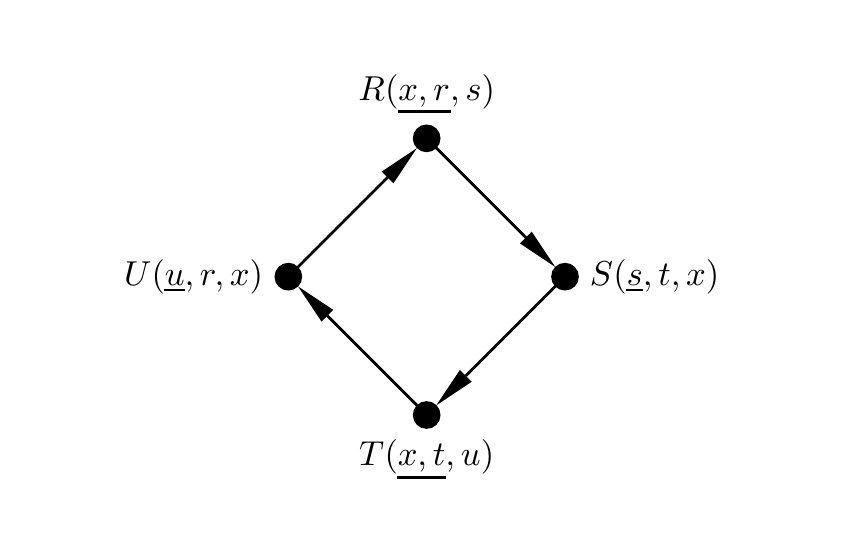}
\caption{\mgraph.}\label{fig:gcm}
\end{subfigure}
\newline
\captionsetup[subfigure]{justification=centering}
\begin{subfigure}[t]{0.5\textwidth}\centering
\includegraphics[scale=0.8]{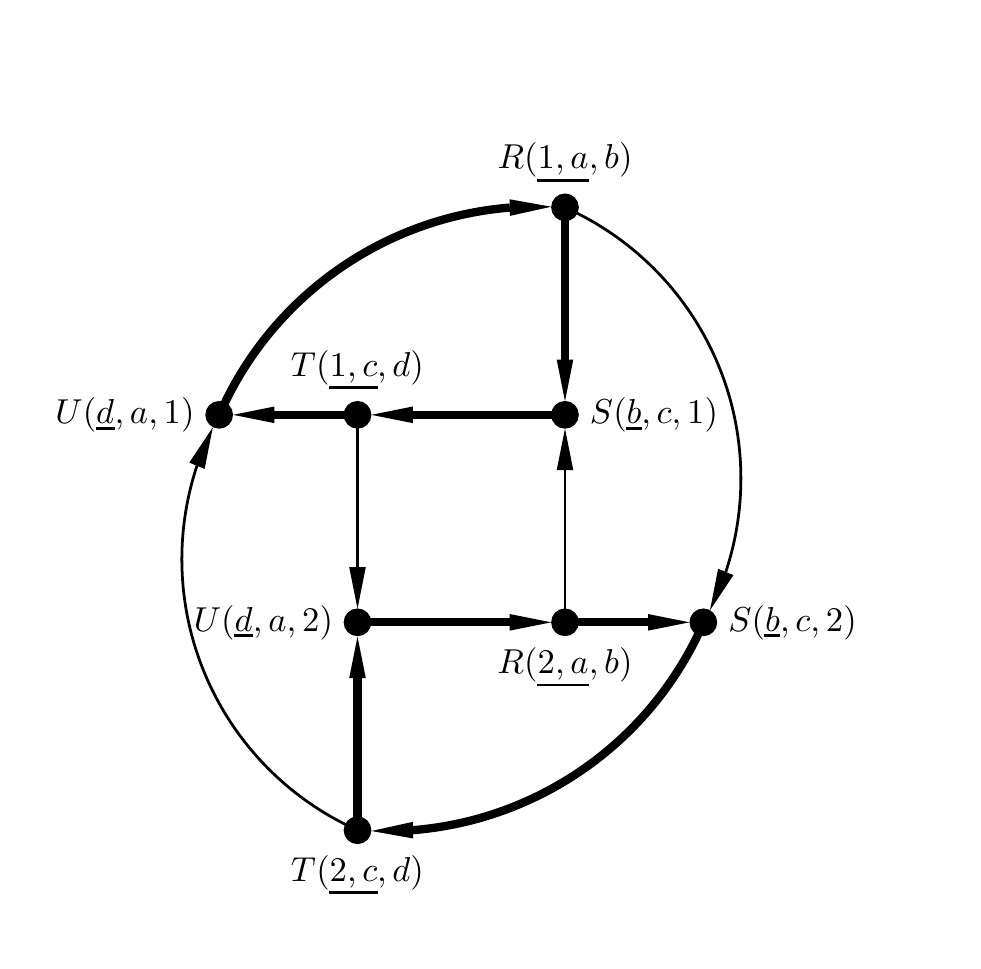}
\caption{$\cmhook{C}$-graph.}\label{fig:gcc}
\end{subfigure}
\begin{subfigure}[t]{0.4\textwidth}\centering
\includegraphics[scale=0.8]{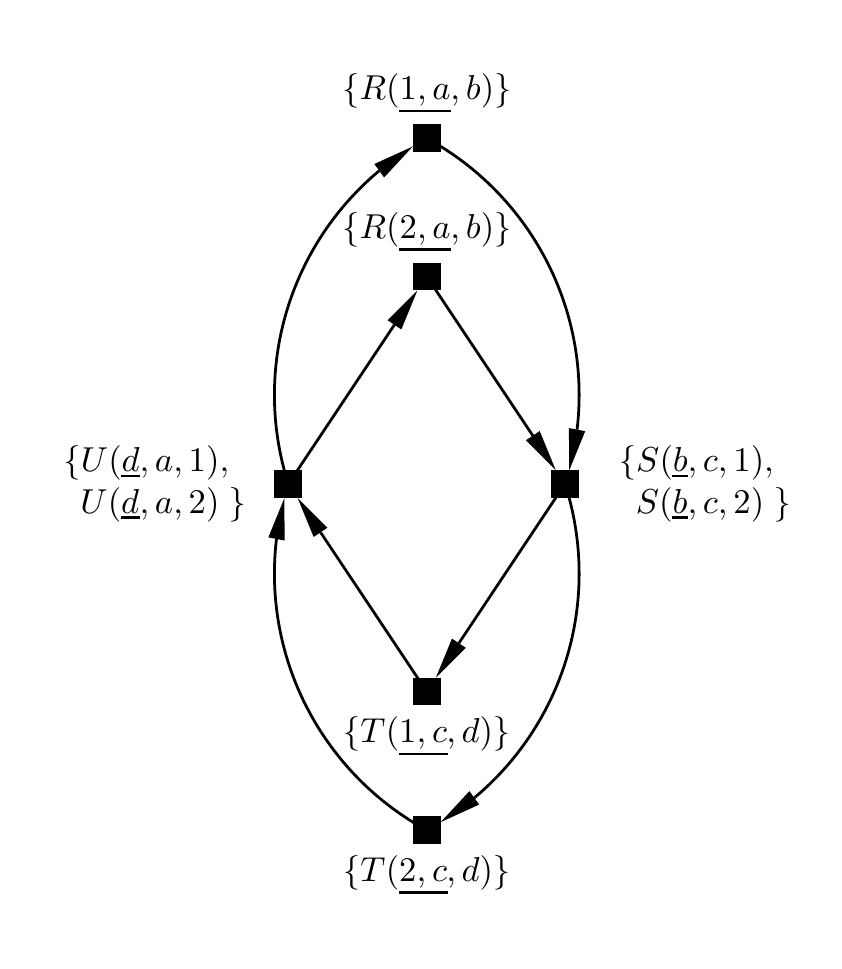}
\caption{Block-quotient graph.}\label{fig:gchook}
\end{subfigure}
\caption{
The $\cmhook{C}$-graph contains two relevant $1$-embeddings (thick arrows) and four irrelevant $1$-embeddings.
The block-quotient graph contains four elementary cycles of length~$4$; the outermost cycle (curved arrows) and the innermost cycle (straight arrows) are induced by irrelevant $1$-embeddings. }\label{fig:garbagecomputation}
\end{figure*}

We now add rules that implement the algorithm sketched in the proof of Lemma~\ref{lem:longcycle}, capturing condition~\ref{it:scrubmultiple} in Lemma~\ref{lem:scrubalgo}.
From here on, whenever $C$ occurs as an argument of a predicate,
then it is understood to be a shorthand for the sequence
$\tuple{\vec{x}_{0},\vec{x}_{1},\dots,\vec{x}_{k-1}}$.
The IDB predicate $\polyedge$ is used for undirected edges between vertices that are $k$-cycles in the block-quotient graph,
and the predicate $\polydistinct$ tests whether two vertices are distinct.
Figure~\ref{fig:guidedtourpoly} shows the $\polyedge$-edges for the example of Fig.~\ref{fig:gt}.
It suffices to consider only $k$-cycles of the block-quotient graph induced by relevant $1$-embeddings of~$C$, because irrelevant $1$-embeddings are already added to the garbage set by previous rules.
Figure~\ref{fig:garbagecomputation} illustrates that $k$-cycles in the block-quotient graph can be induced by $1$-embeddings of~$C$ that are not relevant; such $k$-cycles, however, are ignored by our Datalog program.

For all $i,j\in\{0,\dots,k-1\}$ such that $i\neq j$,
add the following rules:
$$
\begin{datalogpgm}
\polyedge(\rename{C}{0},\rename{C}{1})
&
\left\{
\begin{array}{l}
\rename{q}{0}, \rename{q}{1},\\[1.0ex]
\begin{array}{ccc}
\rename{\vec{x}_{i}}{0} & =_{R_{i}} & \rename{\vec{x}_{i}}{1},\\
\rename{\vec{x}_{j}}{0} & \neq_{R_{j}} & \rename{\vec{x}_{j}}{1}
\end{array}
\end{array}
\right\}
\end{datalogpgm}
$$

For all $j\in\{0,\dots,k-1\}$,
add the following rules:
$$
\begin{datalogpgm}
\polydistinct(\rename{C}{0},\rename{C}{1})
&
\left\{
\begin{array}{l}
\rename{q}{0}, \rename{q}{1},\\[1.0ex]
\begin{array}{ccc}
\rename{\vec{x}_{j}}{0} & \neq_{R_{j}} & \rename{\vec{x}_{j}}{1}
\end{array}
\end{array}
\right\}
\end{datalogpgm}
$$

The predicate $\polyconnected$ computes undirected connectivity in the graph defined by $\polyedge$; 
it takes $2k$ vertices as operands, and holds true if there exists an undirected path between the first two operands such that no vertex on the path is equal to or adjacent to any of the remaining $2k-2$ operands.
$$
\begin{datalogpgm}
\polyconnected(\rename{C}{1},\rename{C}{1},\rename{C}{3},\dots,\rename{C}{2k})
&
\left\{
\begin{array}{l}
\polydistinct(\rename{C}{1},\rename{C}{3}),
\neg\polyedge(\rename{C}{1},\rename{C}{3}),\\
\multicolumn{1}{c}{\vdots}\\
\polydistinct(\rename{C}{1},\rename{C}{2k}),
\neg\polyedge(\rename{C}{1},\rename{C}{2k})\\
\end{array}
\right\}
\end{datalogpgm}
$$
$$
\begin{datalogpgm}
\polyconnected(\rename{C}{1},\rename{C}{2},\rename{C}{3},\dots,\rename{C}{2k})
&
\left\{
\begin{array}{l}
\polyconnected(\rename{C}{1},\rename{C}{\supone},\rename{C}{3},\dots,\rename{C}{2k}),
\polyedge(\rename{C}{\supone},\rename{C}{2}),\\
\begin{array}{l}
\polydistinct(\rename{C}{\supone},\rename{C}{3}),\neg\polyedge(\rename{C}{\supone},\rename{C}{3}),\\
\multicolumn{1}{c}{\vdots}\\
\polydistinct(\rename{C}{\supone},\rename{C}{2k}),\neg\polyedge(\rename{C}{\supone},\rename{C}{2k}),\\
\polydistinct(\rename{C}{2},\rename{C}{3}),\neg\polyedge(\rename{C}{2},\rename{C}{3}),\\
\multicolumn{1}{c}{\vdots}\\
\polydistinct(\rename{C}{2},\rename{C}{2k}),\neg\polyedge(\rename{C}{2},\rename{C}{2k})
\end{array}
\end{array}
\right\}
\end{datalogpgm}
$$
$$
\begin{datalogpgm}
\polyconnected(\rename{C}{1},\rename{C}{\supone},\rename{C}{3},\dots,\rename{C}{2k})
&
\left\{
\begin{array}{l}
\polyconnected(\rename{C}{1},\rename{C}{2},\rename{C}{3},\dots,\rename{C}{2k}),
\polyedge(\rename{C}{\supone},\rename{C}{2}),\\
\begin{array}{l}
\polydistinct(\rename{C}{\supone},\rename{C}{3}),\neg\polyedge(\rename{C}{\supone},\rename{C}{3}),\\
\multicolumn{1}{c}{\vdots}\\
\polydistinct(\rename{C}{\supone},\rename{C}{2k}),\neg\polyedge(\rename{C}{\supone},\rename{C}{2k}),\\
\polydistinct(\rename{C}{2},\rename{C}{3}),\neg\polyedge(\rename{C}{2},\rename{C}{3}),\\
\multicolumn{1}{c}{\vdots}\\
\polydistinct(\rename{C}{2},\rename{C}{2k}),\neg\polyedge(\rename{C}{2},\rename{C}{2k})
\end{array}
\end{array}
\right\}
\end{datalogpgm}
$$
The latter two rules are each other's symmetric version.
We are now ready to encode the two conditions for the existence of an elementary directed cycle of length $\geq 2k$ in the proof of Lemma~\ref{lem:longcycle}.
We add non-recursive rules that detect $n$-embeddings for every  $n$ such that $2\leq n\leq 2k-3$.
We show here only the rules for $n=2$, i.e., for $\cmhook{C}$- cycles of length $2k$ without key-equal atoms.
We add, for every $i\in\{0,\dots,k-1\}$, the following rules:
$$
\begin{datalogpgm}
\del{R_{i}}(\vec{x}_{i})
&
\left\{
\begin{array}{l}
\rename{q}{0},\rename{q}{1},\dots,\rename{q}{k-1},
\rename{q}{k},\rename{q}{k+1},\dots,\rename{q}{2k-1},\\[1.0ex]
\begin{array}{ccc}
\rename{\vec{x}_{0}}{0}&=_{R_{0}}&\rename{\vec{x}_{0}}{2k-1},\\
\rename{\vec{x}_{1}}{1}&=_{R_{1}}&\rename{\vec{x}_{1}}{0},\\
&\vdots&\\
\rename{\vec{x}_{k-1}}{k-1}&=_{R_{k-1}}&\rename{\vec{x}_{k-1}}{k-2},\\
\rename{\vec{x}_{0}}{k}&=_{R_{0}}&\rename{\vec{x}_{0}}{k-1},\\
\rename{\vec{x}_{1}}{k+1}&=_{R_{1}}&\rename{\vec{x}_{1}}{k},\\
&\vdots&\\
\rename{\vec{x}_{k-1}}{2k-1}&=_{R_{k-1}}&\rename{\vec{x}_{k-1}}{2k-2},\\[1.0ex]
\rename{\vec{x}_{0}}{0} & \neq_{R_{0}} & \rename{\vec{x}_{0}}{k},\\
\rename{\vec{x}_{1}}{1} & \neq_{R_{1}} & \rename{\vec{x}_{1}}{k+1},\\
&\vdots&\\
\rename{\vec{x}_{k-1}}{k-1} & \neq_{R_{k-1}} & \rename{\vec{x}_{k-1}}{2k-1}\\
\end{array}
\end{array}
\right\}
\end{datalogpgm}
$$
The following rule checks whether $C$ belongs to a chordless cycle of length $\geq 2k$.
$$
\begin{datalogpgm}
\polylong(\rename{C}{1})
&
\left\{
\begin{array}{l}
\polyedge(\rename{C}{1},\rename{C}{2}),
\polyedge(\rename{C}{2},\rename{C}{3}),
\dots,
\polyedge(\rename{C}{2k-1},\rename{C}{2k}),\\
\{\neg\polyedge(\rename{C}{i},\rename{C}{j})\}_{\scriptsize
\begin{array}{l}
1\leq i\leq 2k-2,\\ 
i+2\leq j\leq 2k,\\ 
(i,j)\neq (1,2k)
\end{array}
},\\
\\
\{\polydistinct(\rename{C}{i},\rename{C}{j})\}_{\scriptsize
\begin{array}{l}
1\leq i<j\leq 2k,\\ (i,j)\neq (1,2k)
\end{array}
},\\
\\
\polyconnected(\rename{C}{1},\rename{C}{2k},\rename{C}{2},\dots,\rename{C}{2k-1})
\end{array}
\right\}
\end{datalogpgm}
$$
Finally, to finish the implementation of condition~\ref{it:scrubmultiple} in Lemma~\ref{lem:scrubalgo}, we add, for every $i\in\{0,\dots,k-1\}$, the following rules:
$$
\begin{datalogpgm}
\del{R_{i}}(\vec{x}_{i}) & \polylong(C)
\end{datalogpgm}
$$
This concludes the computation of the maximal garbage set for $C$.
\end{proof}


The following example illustrates the Datalog program in the proof of Lemma~\ref{lem:scrub}.

\begin{example}
Let $q=\{R(\underline{x},y,z), S(\underline{y},x,z), U(\underline{z},a)\}$, where $a$ is a constant.
We show a program in symmetric stratified Datalog that computes the garbage set for the \mcycle
$C=R(\underline{x},y,z)\markov S(\underline{y},x,z)\markov R(\underline{x},y,z)$.
In this example, $k=2$.
The program is constructed as in the proof of Lemma~\ref{lem:scrub} (up to some straightforward syntactic simplifications).

$R$-facts and $S$-facts belong to the maximal garbage set
if they do not belong to a relevant $1$-embedding.
This is expressed by the following rules.
$$
\begin{datalogpgm}
\good{R}(x,y,z) & R(x,y,z), S(y,x,z), U(z,a)\\[1.0ex]
\del{R}(x) & R(x,y,z), \neg\good{R}(x,y,z)\\[1.0ex]
\good{S}(y,x,z) & R(y,x,z), S(y,x,z), U(z,a)\\[1.0ex]
\del{S}(y) & S(y,x,z), \neg\good{S}(y,x,z)
\end{datalogpgm}
$$
If some $R$-fact or $S$-fact of a relevant $1$-embedding belongs to the maximal garbage set, then every fact of that $1$-embedding belongs to the maximal garbage set.
This is expressed by the following rules.
$$
\begin{datalogpgm}
\del{R}(x) & R(x,y,z), S(y,x,z), U(z,a), \del{S}(y)\\[1.0ex]
\del{S}(y) & R(x,y,z), S(y,x,z), U(z,a), \del{R}(x)
\end{datalogpgm}
$$
Note that the predicates $\del{R}$ and $\del{S}$ refer to blocks:
whenever a fact is added to the garbage set, its entire block is added.
The following rules compute irrelevant $1$-embeddings.
$$
\begin{datalogpgm}
\anyone(x,y,z,z') & 
\left\{
\begin{array}{l}
R(x,y,z), S(y,x,z), U(z,a),\\
R(x,y,z'), S(y,x,z'), U(z',a)
\end{array}
\right\}\\\spacebetweenrules
\relone(x,y,z,z) & R(x,y,z), S(y,x,z), U(z,a)\\[1.0ex]
\irrone(x,y) & 
\anyone(x,y,z,z'),
\neg\relone(x,y,z,z')
\end{datalogpgm}
$$
The predicate $\polyedge$ is used for edges between vertices;
each vertex is a $(x,y)$-value.
The predicate $\polydistinct$ expresses disequality of vertices. 
$$
\begin{datalogpgm}
\cpolyedge{x}{y}{x}{y'}
& 
\left\{
\begin{array}{l}
R(x,y,z), S(y,x,z), U(z,a),\\
R(x,y',z'), S(y',x,z'), U(z',a),
y\neq y'
\end{array}
\right\}\\\spacebetweenrules
\cpolyedge{x}{y}{x'}{y}
& 
\left\{
\begin{array}{l}
R(x,y,z), S(y,x,z), U(z,a),\\
R(x',y,z'), S(y,x',z'), U(z',a),
x\neq x'
\end{array}
\right\}
\end{datalogpgm}
$$

$$
\begin{datalogpgm}
\cpolydistinct{x}{y}{x'}{y'}
& 
\left\{
\begin{array}{l}
R(x,y,z), S(y,x,z), U(z,a),\\
R(x',y',z'), S(y',x',z'), U(z',a),
x\neq x'
\end{array}
\right\}\\\spacebetweenrules
\cpolydistinct{x}{y}{x'}{y'}
& 
\left\{
\begin{array}{l}
R(x,y,z), S(y,x,z), U(z,a),\\
R(x',y',z'), S(y',x',z'), U(z',a),
y\neq y'
\end{array}
\right\}
\end{datalogpgm}
$$
The predicate $\polyconnected$ is used for undirected connectivity of the $\polyedge$-predicate.
In particular, it will be the case that $\polyconnected(a_1,b_1,a_2,b_2,a_3,b_3,a_4,b_4)$ holds true if there exists a path between vertices $(a_1,b_1)$ and $(a_2,b_2)$ such that no vertex on the path is equal to or adjacent to a vertex in $\{(a_3,b_3), (a_4,b_4)\}$.
Recall that each vertex is itself a pair.
$$
\begin{datalogpgm}
\polyconnected(\CC{1},\CC{1},\CC{3},\CC{4})
&
\left\{
\begin{array}{l}
\polydistinct(\CC{1},\CC{3}),
\neg\polyedge(\CC{1},\CC{3}),\\
\polydistinct(\CC{1},\CC{4}),
\neg\polyedge(\CC{1},\CC{4})\\
\end{array}
\right\}
\\\spacebetweenrules
\polyconnected(\CC{1},\CC{2},\CC{3},\CC{4})
&
\left\{
\begin{array}{l}
\polyconnected(\CC{1},\CC{\supone},\CC{3},\CC{4}),
\polyedge(\CC{\supone},\CC{2}),\\
\begin{array}{l}
\polydistinct(\CC{\supone},\CC{3}),\neg\polyedge(\CC{\supone},\CC{3}),\\
\polydistinct(\CC{\supone},\CC{4}),\neg\polyedge(\CC{\supone},\CC{4}),\\
\polydistinct(\CC{2},\CC{3}),\neg\polyedge(\CC{2},\CC{3}),\\
\polydistinct(\CC{2},\CC{4}),\neg\polyedge(\CC{2},\CC{4})
\end{array}
\end{array}
\right\}
\\\spacebetweenrules
\polyconnected(\CC{1},\CC{\supone},\CC{3},\CC{4})
&
\left\{
\begin{array}{l}
\polyconnected(\CC{1},\CC{2},\CC{3},\CC{4}),
\polyedge(\CC{\supone},\CC{2}),\\
\begin{array}{l}
\polydistinct(\CC{\supone},\CC{3}),\neg\polyedge(\CC{\supone},\CC{3}),\\
\polydistinct(\CC{\supone},\CC{4}),\neg\polyedge(\CC{\supone},\CC{4}),\\
\polydistinct(\CC{2},\CC{3}),\neg\polyedge(\CC{2},\CC{3}),\\
\polydistinct(\CC{2},\CC{4}),\neg\polyedge(\CC{2},\CC{4})
\end{array}
\end{array}
\right\}
\end{datalogpgm}
$$
The latter two rules are each other's symmetric version.
The following rule checks whether a vertex $(a_1,b_1)$ belongs to a chordless $\polyedge$-cycle of length $\geq 2k$.

$$
\begin{datalogpgm}
\polylong(\CC{1})
&
\left\{
\begin{array}{l}
\polyedge(\CC{1},\CC{2}),
\polyedge(\CC{2},\CC{3}),
\polyedge(\CC{3},\CC{4}),\\
\neg\polyedge(\CC{1},\CC{3}),
\neg\polyedge(\CC{2},\CC{4}),\\
\polydistinct(\CC{1},\CC{2}),
\polydistinct(\CC{1},\CC{3}),\\
\polydistinct(\CC{2},\CC{3}),
\polydistinct(\CC{2},\CC{4}),\\
\polydistinct(\CC{3},\CC{4}),\\
\polyconnected(\CC{1},\CC{4},\CC{2},\CC{3})
\end{array}
\right\}
\end{datalogpgm}
$$

The following rules add to the maximal garbage sets all $R$-facts and $S$-facts that belong to an irrelevant $1$-embedding or to a strong component of the $\cmhook{C}$-graph that contains an elementary $\cmhook{C}$-cycle of length $\geq 2k$.
Whenever a fact is added, all facts of its block are added.
$$
\begin{datalogpgm}
\del{R}(x) & \polylong(x,y)\\[1.0ex]
\del{S}(y) & \polylong(x,y)\\[1.0ex]
\del{R}(x) & \irrone(x,y)\\[1.0ex]
\del{S}(y) & \irrone(x,y)
\end{datalogpgm}
$$
This terminates the computation of the garbage set.
In general, we have to check the existence of elementary $\cmhook{C}$-cycles of length $nk$ with $2\leq n\leq 2k-3$.
However, for $k=2$, no such $n$ exists.
\end{example}

\medskip

\begin{proof}[Proof of Lemma~\ref{lem:toT}]
Let  $q'=\formula{q\setminus C}\cup\{T\}$.
For every $i\in\{0,1,\dots,k-1\}$, let $F_{i}=R_{i}(\underline{\vec{x}_{i}},\vec{y}_{i})$.

\myparagraph{Proof of the first item}
We show the existence of a reduction from $\cqa{q}$ to the problem $\cqa{q'\cup p}$ that is expressible in $\ssdatalogmin$.
We first describe the reduction, and then show that it can be expressed in $\ssdatalogmin$.

Let $\db_{0}$ be a database that is input to $\cqa{q}$.
By Lemma~\ref{lem:scrub}, we can compute in symmetric stratified Datalog the maximal garbage set $\bfo$ for $C$ in $\db_{0}$.
Let $\db=\db_{0}\setminus\bfo$.
We know, by Lemma~\ref{lem:nomenestomen}, that the problem $\cqa{q}$ has the same answer on instances $\db_{0}$ and $\db$.
Moreover, by Lemma~\ref{lem:together}, every garbage set for~$C$ in $\db$ is empty, which implies, by Lemma~\ref{lem:scrubalgo}, that \emph{(i)}~every $n$-embedding of $C$ in $\db$ must be a relevant $1$-embedding, and \emph{(ii)}~every fact $A$ with $\qatom{A}{q}\in C$ belongs to a $1$-embedding. 
The reduction will now encode all these $1$-embeddings as $T$-facts.

We show that every directed edge of the $\cmhook{C}$-graph belongs to a directed cycle.
To this extent, take any edge $A\cmhook{C}B$.
Since every garbage set for $C$ in $\db$ is empty,
the $\cmhook{C}$-graph contains a relevant $1$-embedding containing $A$,
and a relevant $1$-embedding containing $B$.
Let $A'$ be the fact such that $A'\cmhook{C}B$ is a directed edge in the $1$-embedding containing $B$.
Let $B'$ be the fact such that $A\cmhook{C}B'$ is a directed edge in the $1$-embedding containing $A$.
Since $A\cmhook{C}B$ and $A\cmhook{C}B'$, it follows $B\sim B'$ by Lemma~\ref{lem:hockey}.
From $A'\cmhook{C}B$ and $B\sim B'$, it follows $A'\cmhook{C}B'$.
Thus, the $\cmhook{C}$-graph contains a directed path from $B$ to $A'$,
an edge from $A'$ to $B'$, and a directed path from $B'$ to $A$.
Consequently, the $\cmhook{C}$-graph contains a directed path from $B$ to $A$.

It follows that every strong component of the $\cmhook{C}$-graph is initial.
It can be easily seen that if an initial strong component contains some fact $A$,
then it contains every fact that is key-equal to~$A$.
Let $\rep$ be a repair of $\db$.
For every fact $A\in\rep$, there exists a unique fact $B\in\rep$ such that $A\cmhook{C}B$.
It follows that $\rep$ must contain an elementary $\cmhook{C}$-cycle, which must be a relevant $1$-embedding (because every garbage set for $C$ in $\db$ is empty) belonging to the same initial strong component as~$A$.
It can also be seen that there exists a repair that contains exactly one such $1$-embedding for every strong component of the $\cmhook{C}$-graph.

We define an undirected graph $G$ as follows:
for each valuation $\mu$ over $\queryvars{q}$ such that $\mu(q)\subseteq\db$,
we introduce a vertex $\theta$ with $\theta=\mu[\queryvars{C}]$.
We add an edge between two vertices $\theta$ and $\theta'$ if for some $i\in\{0,\dots,k-1\}$,
$\theta(\vec{x}_{i})=\theta'(\vec{x}_{i})$.
The graph $G$ can clearly be constructed in logarithmic space (and even in $\FO$).
We define a set $\db_{T}$ of $T$-facts and, for every $i\in\{0,\dots,k-1\}$, a set $\db_{i}$ as follows:
for all two vertices $\theta$, $\theta'$ of~$G$,
if
$$
\theta'(\vec{x}_{0})=\min\left\{\theta''(\vec{x}_{0})\mid
\mbox{$\theta''\in V(G)$ belongs to the same strong component as $\theta$}\right\},
$$
then we add to $\db_{T}$ the fact $\substitute{\theta}{u}{\theta'(\vec{x}_{0})}(T)$, 
and we add to $\db_{i}$ the fact $\substitute{\theta}{u}{\theta'(\vec{x}_{0})}(N_{i})$.
In this way, every $\db_{i}$ is consistent.
Informally, if $T$ is the atom $T(\underline{u},\vec{w})$, then we add to $\db_{T}$ the $T$-fact $T(\underline{\theta'(\vec{x}_{0})},\theta(\vec{w}))$, where $\theta'(\vec{x}_{0})$ is treated as a single value.
This fact represents that $\theta$ belongs to the strong component that is identified by $\theta'(\vec{x}_{0})$.
Since undirected connectivity can be computed in logarithmic space~\cite{DBLP:journals/jacm/Reingold08}, $\db_{T}$ and each $\db_{i}$ can be constructed in logarithmic space.

Let $\db_{C}$ be the set of all $F_{i}$-facts in $\db$ ($0\leq i\leq k-1$), 
and let $\db_{\common}\defeq\db\setminus\db_{C}$,
the part of the database $\db$ that is preserved by the reduction. 
Let $\db_{N}=\bigcup_{i=0}^{k-1}\db_{i}$.
Since $\db_{N}$ is consistent, $\db_{\common}\uplus\db_{T}\uplus\db_{N}$ is a legal input to $\cqa{q'\cup p}$, where the use of $\uplus$ (instead of $\cup$) indicates that the operands of the union are disjoint. 

We show that the following are equivalent:
\begin{enumerate}
\item\label{it:crepairone}
Every repair of $\db$ satisfies $q$.
\item\label{it:crepairtwo}
For every $\sep\in\repairs{\db_{\common}}$,
for every repair $\rep_{T}$ of $\db_{T}$,
$\sep\uplus\rep_{T}\uplus\db_{N}\models q'\cup p$.
\item\label{it:crepairthree}
Every repair of $\db_{\common}\uplus\db_{T}\uplus\db_{N}$ satisfies $q'\cup p$.
\end{enumerate}
The equivalence \ref{it:crepairtwo}$\iff$\ref{it:crepairthree} is straightforward.
We show next the equivalence \ref{it:crepairone}$\iff$\ref{it:crepairtwo}.
\framebox{\ref{it:crepairone}$\implies$\ref{it:crepairtwo}}
Let $\sep\in\repairs{\db_{\common}}$ and let $\rep_{T}$ be a repair of $\db_{T}$.
By our construction of $\db_{T}$, there exists a repair $\rep_{C}$ of $\db_{C}$ such that for every valuation $\theta$ over $\queryvars{q}$, if $\theta(q)\subseteq\sep\cup\rep_{C}$,
then for some value~$c$, $\substitute{\theta}{u}{c}(q'\cup p)\subseteq\sep\cup\rep_{T}\cup\db_{N}$.
Informally, $\rep_{C}$ contains all (and only) the relevant $1$-embeddings of $C$ in $\sep\cup\rep_{C}$ that are encoded by the $T$-facts of $\rep_{T}$.
Since $\sep\cup\rep_{C}$ is a repair of $\db$, by the hypothesis~\ref{it:crepairone}, we can assume a valuation $\theta$ over $\queryvars{C}$ such that $\theta(q)\subseteq\sep\cup\rep_{C}$.
Consequently, for some value $c$, $\substitute{\theta}{u}{c}(q'\cup p)\subseteq\sep\cup\rep_{T}\cup\db_{N}$.
\framebox{\ref{it:crepairtwo}$\implies$\ref{it:crepairone}}
Let $\rep$ be a repair of $\db$.
There exist $\sep\in\repairs{\db_{\common}}$ and $\rep_{C}\in\repairs{\db_{C}}$ such that $\rep=\sep\cup\rep_{C}$.
By the construction of $\db_{T}$, there exists a repair $\rep_{T}$ of $\db_{T}$ such that for every valuation $\theta$ over $\queryvars{q}$, if $\substitute{\theta}{u}{c}(q'\cup p)\subseteq\sep\cup\rep_{T}\cup\db_{N}$ for some $c$, then $\theta(q)\subseteq\sep\cup\rep_{C}$ (note incidentally that the converse does not generally hold). 
Informally, for every strong component $\isc$ of the $\cmhook{C}$-graph of $\db$ such that $\sep\cup\formula{\rep_{C}\cap V(\isc)}\models q$, 
the set $\rep_{T}$ encodes one $1$-embedding of $C$ in $\sep\cup\formula{\rep_{C}\cap V(\isc)}$.
Here, $V(\isc)$ denotes the vertex set of the strong component~$\isc$.
Since $\sep\cup\rep_{T}\cup\db_{N}$ is a repair of $\db_{\common}\uplus\db_{T}\uplus\db_{N}$,
it follows by the hypothesis~\ref{it:crepairtwo} that there exists a valuation $\theta$ over $\queryvars{q}$ such that $\substitute{\theta}{u}{c}(q'\cup p)\subseteq\sep\cup\rep_{T}\cup\db_{N}$ for some $c$.
Consequently, $\theta(q)\subseteq\sep\cup\rep_{C}$.

We still have to argue that $\db_{T}$ and $\db_{N}$ can be computed in $\ssdatalogmin$.
For every $i\in\{0,\dots,k-1\}$, let $\keep{R_{i}}$ be an IDB predicate of the same arity as~$R_{i}$.
For every $i\in\{0,\dots,k-1\}$, we add the following rules:
$$
\begin{datalogpgm}
\keep{R_{i}}(\vec{x}_{i},\vec{y}_{i}) & R_{i}(\vec{x}_{i},\vec{y}_{i}), \neg\del{R_{i}}(\vec{x}_{i})
\end{datalogpgm}
$$ 
where $\del{R_{i}}$ is the IDB predicate defined in the proof of Lemma~\ref{lem:scrub}.
Each predicate $\keep{R_{i}}$ is used to compute the $R_{i}$-facts that are not in the maximal garbage set. 

We now introduce rules for computing the relations for $T$ and for each $N_{i}$. 
For every $i\in\{0,\dots,k-1\}$, add the rule:
$$
\begin{datalogpgm}
\conedge(\vec{x}_{0},\rename{\vec{x}_{0}}{\supone}) &
\left\{
\begin{array}{l}
\keep{R_{0}}(\vec{x}_{0},\vec{y}_{0}),\\
\phantom{\keep{R_{0}}}\vdots\\
\keep{R_{k-1}}(\vec{x}_{k-1},\vec{y}_{k-1}),\\[1.5ex]
\keep{R_{0}}(\rename{\vec{x}_{0}}{\supone},\rename{\vec{y}_{0}}{\supone}),\\
\phantom{\keep{R_{0}}}\vdots\\
\keep{R_{k-1}}(\rename{\vec{x}_{k-1}}{\supone},\rename{\vec{y}_{k-1}}{\supone}),\\[1.5ex]
\begin{array}{ccc}
\vec{x}_{i} & =_{R_{i}} & \rename{\vec{x}_{i}}{\supone}
\end{array}
\end{array}\right\}
\end{datalogpgm}
$$
Informally, a fact $\conedge(\vec{a},\vec{a'})$ tells us that the blocks $R_{0}(\underline{\vec{a}},\blockfiller)$ and $R_{0}(\underline{\vec{a}'},\blockfiller)$ belong to the same strong component of the $\cmhook{C}$-graph.
Obviously, $\conedge$ defines a reflexive and symmetric binary relation on sequences of constants.
The predicate $\concomp$ computes undirected connectivity in the $\conedge$ relation.
$$
\begin{datalogpgm}
\concomp(\vec{x}_{0},\rename{\vec{x}_{0}}{\supone}) &
\conedge(\vec{x}_{0},\rename{\vec{x}_{0}}{\supone})\\[1.0ex]
\concomp(\vec{x}_{0},\rename{\vec{x}_{0}}{\supone}) &
\concomp(\vec{x}_{0},\rename{\vec{x}_{0}}{\supthree}),
\conedge(\rename{\vec{x}_{0}}{\supthree},\rename{\vec{x}_{0}}{\supone})
\\[1.0ex]
\concomp(\vec{x}_{0},\rename{\vec{x}_{0}}{\supthree}) &\concomp(\vec{x}_{0},\rename{\vec{x}_{0}}{\supone}),
\conedge(\rename{\vec{x}_{0}}{\supthree},\rename{\vec{x}_{0}}{\supone})
\end{datalogpgm}
$$
The latter two rules are each other's symmetric version.
The following rule picks a single identifier for each connected component of $G$,
using the abbreviated syntax for the query~(\ref{eq:groupby}) introduced in Section~\ref{sec:preliminaries}.
$$
\begin{datalogpgm}
\pick(\vec{x}_{0},\min(\rename{\vec{x}_{0}}{\supthree}))
&
\concomp(\vec{x}_{0},\rename{\vec{x}_{0}}{\supthree})
\end{datalogpgm}
$$
Informally, $\pick(\vec{a},\vec{a}')$ means that $\vec{a}'$, rather than $\vec{a}$, will serve to uniquely identify the strong component.
The following rule computes all $T$-facts:
$$
\begin{datalogpgm}
\encodet(\underline{\rename{\vec{x}_{0}}{\supone}},\vec{x}_{0},\vec{y}_{0},\dots,\vec{x}_{k-1},\vec{y}_{k-1})
&
\left\{
\begin{array}{l}
\keep{R_{0}}(\vec{x}_{0},\vec{y}_{0}),\\
\phantom{\keep{R_{0}}}\vdots\\
\keep{R_{k-1}}(\vec{x}_{k-1},\vec{y}_{k-1}),\\[1.5ex]
\pick(\vec{x}_{0},\rename{\vec{x}_{0}}{\supone})
\end{array}
\right\}
\end{datalogpgm}
$$
Finally, for every $i\in\{0,\dots,k-1\}$, add the rule:
$$
\begin{datalogpgm}
\detu{i}(\underline{\vec{x}_{i}},\rename{\vec{x}_{0}}{\supone}) &
\encodet(\underline{\rename{\vec{x}_{0}}{\supone}},\vec{x}_{0},\vec{y}_{0},\dots,\vec{x}_{k-1},\vec{y}_{k-1})
\end{datalogpgm}
$$
Note that in this encoding, the cardinality of the primary key of $\encodet$ can be greater than~$1$.
This is not a problem, because we can treat values for $u$ as composite values.

\myparagraph{Proof of the second item}
Since $\atomvars{N_{i}}\subseteq\atomvars{T}$ for every atom $N_{i}\in p$, 
we can limit our analysis to witnesses for attacks that do not contain any $N_{i}$.
Indeed, if $N_{i}$ would occur in a witness, it can be replaced with $T$.
Let $\isc$ be an initial strong component of the attack graph of $q$ that contains every atom of $\{F_{0},F_{1},\dots,F_{k-1}\}$.
It can be easily seen that for all $i\in\{0,1,\dots,k-1\}$,
\begin{equation}\label{eq:sp}
\FD{q'\cup p}\models\fd{\keyvars{F_{i}}}{\atomvars{F_{i}}}.
\end{equation}
We will use the following properties:
\begin{enumerate}[label=(\alph*)]
\item\label{it:alpha}
For every $H\in q\setminus C$, we have $\keycl{H}{q}\subseteq\keycl{H}{q'\cup p}$.
Immediate consequence of~(\ref{eq:sp}).
\item\label{it:beta}
For every $H\in q\setminus C$, if $H\attacks{q'\cup p}T$, then $H\in\isc$.
To show this result, let $H\in q\setminus C$ such that $H\attacks{q'\cup p}T$.
We can assume without loss of generality the existence of a witness for $H\attacks{q'\cup p}T$ of the form $\omega\step{v}T$ 
such that $v\neq u$.
We can assume the existence of $j\in\{0,\dots,k-1\}$ such that $v\in\queryvars{F_{j}}$. 
Then the sequence $\omega\step{v}F_{j}$ is a witness for $H\attacks{q}F_{j}$,
thus $H\in\isc$.
\end{enumerate}

\medskip
\noindent
We know by~\cite[Lemma~3.6]{DBLP:journals/tods/KoutrisW17} that if the attack graph contains a strong cycle, then it contains a strong cycle of length~$2$.
Assume that the attack graph of $q'\cup p$ contains an attack cycle $H\attacks{q'\cup p}J\attacks{q'\cup p}H$.
Then, either $H\neq T$ or $J\neq T$ (or both).
We assume without loss of generality that $H\neq T$.
We show that the attack cycle $H\attacks{q'\cup p}J\attacks{q'\cup p}H$ is weak.
We distinguish three cases.
\begin{description}
\item[Case $H\nattacks{q'\cup p}T$ (thus $J\neq T$) and $J\nattacks{q'\cup p}T$.]
Then no witness for $H\attacks{q'\cup p}J$ or $J\attacks{q'\cup p}H$ can contain $T$.
By property~\ref{it:alpha}, $H\attacks{q}J\attacks{q}H$.
Since the attack graph of $q$ contains no strong attack cycle,
$\FD{q}\models\fd{\keyvars{H}}{\keyvars{J}}$ and $\FD{q}\models\fd{\keyvars{J}}{\keyvars{H}}$.
Then, $\FD{q'\cup p}\models\fd{\keyvars{H}}{\keyvars{J}}$ and $\FD{q'\cup p}\models\fd{\keyvars{J}}{\keyvars{H}}$.
It follows that the attack cycle $H\attacks{q'\cup p}J\attacks{q'\cup p}H$ is weak.
\item[Case $H\attacks{q'\cup p}T$.]
By the property~\ref{it:beta}, $H\in\isc$.
We distinguish two cases.
\begin{description}
\item[Case $J=T$.]
The attack cycle $H\attacks{q'\cup p}J\attacks{q'\cup p}H$ is weak
because $\FD{q'\cup p}\models\fd{\keyvars{H}}{u}$ and $\FD{q'\cup p}\models\fd{u}{\keyvars{H}}$.
Recall that $\{u\}=\keyvars{T}$. 
\item[Case $J\neq T$.]
We show that $J\in\isc$ by distinguishing two cases:
\begin{itemize}
\item
if $J\nattacks{q'\cup p}T$, 
then no witness for $J\attacks{q'\cup p}H$ can contain $T$;
then $J\attacks{q}H$, and thus $J\in\isc$; and
\item
if $J\attacks{q'\cup p}T$, then $J\in\isc$ by the property~\ref{it:beta}.
\end{itemize}
From $H,J\in\isc$, it follows $\FD{q}\models\fd{\keyvars{H}}{\keyvars{J}}$ and $\FD{q}\models\fd{\keyvars{J}}{\keyvars{H}}$.
Then, $\FD{q'\cup p}\models\fd{\keyvars{H}}{\keyvars{J}}$ and $\FD{q'\cup p}\models\fd{\keyvars{J}}{\keyvars{H}}$.
It follows that the attack cycle $H\attacks{q'\cup p}J\attacks{q'\cup p}H$ is weak.
\end{description}
\item[Case $J\attacks{q'\cup p}T$ (thus $J\neq T$).]
This case is symmetrical to a case that has already been treated.
\end{description}
\end{proof}


\section{Proofs of Section~\ref{sec:keyjoin}}

We will use the following helping lemma.

\begin{lemma}\label{lem:warmup}
Let $q$ be a query in $\sjfbcq$ that has the key-join property. 
Then, for all $F,G\in q$,
if $F\attacks{q}G$,
there exists a sequence $F_{0},F_{1},\dots,F_{\ell}$
such that $F_{0}=F$, $F_{\ell}=G$,
and for all $i\in\{1,2,\dots,\ell\}$, $\keyvars{F_{i}}\subseteq\atomvars{F_{i-1}}$.
\end{lemma}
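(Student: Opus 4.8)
The plan is to unwind the definition of attack. By definition of $F \attacks{q} G$, there is a witness
$F = F_0 \step{x_1} F_1 \step{x_2} \dotsm \step{x_\ell} F_\ell = G$,
where each $x_i \in (\atomvars{F_{i-1}} \cap \atomvars{F_i}) \setminus \keycl{F}{q}$. So for each consecutive pair $(F_{i-1}, F_i)$ there is at least one shared variable $x_i$ that witnesses the edge $F_{i-1} \step{x_i} F_i$, and in particular $\atomvars{F_{i-1}} \cap \atomvars{F_i} \neq \emptyset$. The first thing I would do is invoke the key-join property on this pair: since $\atomvars{F_{i-1}} \cap \atomvars{F_i}$ is nonempty, either $\atomvars{F_{i-1}} \cap \atomvars{F_i} \in \{\keyvars{F_{i-1}}, \keyvars{F_i}\}$, or $\atomvars{F_{i-1}} \cap \atomvars{F_i} \supseteq \keyvars{F_{i-1}} \cup \keyvars{F_i}$. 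In the last case and in the case $\atomvars{F_{i-1}} \cap \atomvars{F_i} = \keyvars{F_i}$, we immediately get $\keyvars{F_i} \subseteq \atomvars{F_{i-1}}$, which is exactly what the lemma asks for at step $i$. The only problematic case is $\atomvars{F_{i-1}} \cap \atomvars{F_i} = \keyvars{F_{i-1}}$, i.e., the shared variables are precisely the primary-key variables of the \emph{source} atom $F_{i-1}$, not the target.

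To handle that problematic case I would argue that such an edge cannot actually occur in a witness for an attack emanating from $F$. Suppose $\atomvars{F_{i-1}} \cap \atomvars{F_i} = \keyvars{F_{i-1}}$. Then the witness variable $x_i$ lies in $\keyvars{F_{i-1}}$. I would then show, by induction on $i$ along the witness, that $\keyvars{F_{i-1}} \subseteq \keycl{F}{q}$; this contradicts $x_i \notin \keycl{F}{q}$, so this case is vacuous. For the induction: at $i=1$ we have $F_0 = F$, and trivially $\keyvars{F} \subseteq \keycl{F}{q}$ by definition of the closure. For the inductive step, assume $\keyvars{F_{i-1}} \subseteq \keycl{F}{q}$; I want $\keyvars{F_i} \subseteq \keycl{F}{q}$ as well (so that the same reasoning can be pushed forward, and so that in fact every atom along the witness has its key inside the closure — but wait, that would force the whole witness to be trivial). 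Let me re-think: actually what I need is only to rule out the bad sub-case, so it suffices to maintain the invariant ``$\keyvars{F_{i-1}} \subseteq \keycl{F}{q}$ for every $i$ that is the source of an edge in the witness.'' If the edge $F_{i-1}\step{x_i}F_i$ is of the bad type, then $x_i \in \keyvars{F_{i-1}} \subseteq \keycl{F}{q}$, contradicting $x_i \notin \keycl{F}{q}$; hence every edge is of one of the good types, giving $\keyvars{F_i} \subseteq \atomvars{F_{i-1}}$. It remains to propagate the invariant: from a good edge we get $\keyvars{F_i} \subseteq \atomvars{F_{i-1}}$, and combined with $\FD{q\setminus\{F\}} \cup \FD{\catoms{q}} \models \fd{\keyvars{F}}{x}$ for all $x \in \keycl{F}{q}$ and the FD $\fd{\keyvars{F_{i-1}}}{\atomvars{F_{i-1}}}$ in $\FD{q}$... here I must be careful, because $F_{i-1}$ may equal $F$, in which case $\fd{\keyvars{F_{i-1}}}{\atomvars{F_{i-1}}}$ is not in $\FD{q \setminus \{F\}}$. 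But if $F_{i-1} = F$ then $\keyvars{F_{i-1}} = \keyvars{F} \subseteq \keycl{F}{q}$ trivially and $\atomvars{F_{i-1}} = \atomvars{F}$, so $\keyvars{F_i} \subseteq \atomvars{F} = \keyvars{F} \cup (\text{non-key vars of }F)$; each non-key variable $y$ of $F$ satisfies $\fd{\keyvars{F}}{y} \in \FD{q}$... still not obviously in $\FD{q\setminus\{F\}}\cup\FD{\catoms{q}}$. So I'd instead propagate directly: $\keyvars{F_i} \subseteq \atomvars{F_{i-1}}$ and by the inductive hypothesis applied via sequential-proof-style reasoning, $\FD{q\setminus\{F\}}\cup\FD{\catoms{q}} \models \fd{\keyvars{F}}{\keyvars{F_i}}$, because the prefix $F_1, \dots, F_{i-1}$ of the witness is a sequential proof (every $F_j$'s key is contained in $\keyvars{F}$ union the variables of earlier atoms), so $\keyvars{F_i} \subseteq \keycl{F}{q}$.

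The cleanest organization, then, is: first establish by induction along the witness that for each $j \geq 1$, $\keyvars{F_j} \subseteq \keycl{F}{q}$ — equivalently that $F_1, \dots, F_j$ forms a sequential-proof-like chain relative to $\FD{q\setminus\{F\}} \cup \FD{\catoms{q}}$, using the key-join property to show $\keyvars{F_i} \subseteq \atomvars{F_{i-1}}$ at each step — and simultaneously derive the desired containment $\keyvars{F_i} \subseteq \atomvars{F_{i-1}}$ as a byproduct. The induction hypothesis at step $i$ is ``$\keyvars{F_{i-1}} \subseteq \keycl{F}{q}$''; from it and $x_i \notin \keycl{F}{q}$ we get $x_i \notin \keyvars{F_{i-1}}$, so the shared-variable set $\atomvars{F_{i-1}} \cap \atomvars{F_i}$ is not equal to $\keyvars{F_{i-1}}$, hence (key-join property) it is $\keyvars{F_i}$ or a superset of $\keyvars{F_{i-1}} \cup \keyvars{F_i}$, either way $\keyvars{F_i} \subseteq \atomvars{F_{i-1}}$; then $\keyvars{F_i} \subseteq \atomvars{F_{i-1}} \subseteq \keycl{F}{q}$ (using that $F_{i-1}$'s variables are all in the closure, which follows from $\keyvars{F_{i-1}} \subseteq \keycl{F}{q}$ together with the FD $\fd{\keyvars{F_{i-1}}}{\atomvars{F_{i-1}}}$ and the standard closure computation, carefully distinguishing whether $F_{i-1}=F$). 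I expect the main obstacle to be exactly this bookkeeping around whether the current atom equals $F$ and around which FDs are available ($\FD{q\setminus\{F\}}\cup\FD{\catoms{q}}$ versus $\FD{q}$) when arguing $\atomvars{F_{i-1}} \subseteq \keycl{F}{q}$; I would resolve it by phrasing the invariant in terms of sequential proofs, as in the ``sequential proof'' paragraph of the preliminaries, since a sequential proof for $\fd{\keyvars{F}}{x}$ using $q \setminus \{F\}$ together with $\catoms{q}$ never needs $F$ itself once we know all earlier atoms' keys are in the closure.
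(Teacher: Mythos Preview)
Your proposed invariant $\keyvars{F_{i-1}}\subseteq\keycl{F}{q}$ is not maintainable, and this is not merely bookkeeping around the case $F_{i-1}=F$. Take $q=\{R(\underline{x},y),\,S(\underline{y},z),\,T(\underline{z},x)\}$, which has the key-join property. The (unique, shortest) witness for $R\attacks{q}T$ is $R\step{y}S\step{z}T$, and here $\keycl{R}{q}=\{x\}$ (the closure of $\{x\}$ under $\fd{y}{z}$ and $\fd{z}{x}$). At step $i=2$ your invariant would require $\keyvars{S}=\{y\}\subseteq\{x\}$, which is false. The underlying problem is that your propagation needs $\atomvars{F_{i-1}}\subseteq\keycl{F}{q}$, but already for $i=2$ this asks for $\atomvars{F}\subseteq\keycl{F}{q}$, which fails precisely because the FD $\fd{\keyvars{F}}{\atomvars{F}}$ is the one excluded from $\FD{q\setminus\{F\}}$. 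Your alternative ``sequential-proof-style'' phrasing has the same defect: the chain $F_1,\dots,F_{i-1}$ only certifies $\fd{\atomvars{F}}{\atomvars{F_{i-1}}}$, not $\fd{\keyvars{F}}{\atomvars{F_{i-1}}}$.

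The paper rules out the bad case $\atomvars{F_{i-1}}\cap\atomvars{F_i}=\keyvars{F_{i-1}}$ by a different mechanism. It takes a \emph{shortest} witness and observes that consecutive intersections $\atomvars{F_{i-1}}\cap\atomvars{F_i}$ and $\atomvars{F_i}\cap\atomvars{F_{i+1}}$ must then be $\subseteq$-incomparable (otherwise one could skip $F_i$ and shorten the witness). The induction hypothesis is simply the lemma's conclusion itself, $\keyvars{F_i}\subseteq\atomvars{F_{i-1}}$; this gives $\keyvars{F_i}\subseteq\atomvars{F_{i-1}}\cap\atomvars{F_i}$, and if the next intersection equaled $\keyvars{F_i}$ it would be contained in the previous one, contradicting incomparability. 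Your base case $i=1$ (using $x_1\notin\keycl{F}{q}\supseteq\keyvars{F_0}$) is exactly right; it is the inductive step that needs the shortest-witness argument rather than the closure invariant.
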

\begin{proof}
Since $q$ has the key-join property, 
for all $F,G\in q$ one of the following cases holds:
\begin{enumerate}
\item\label{it:myempty}
$\atomvars{F}\cap\atomvars{G}=\emptyset$;
\item\label{it:myleft}
$\atomvars{F}\cap\atomvars{G}=\keyvars{F}$;
\item\label{it:myright}
$\atomvars{F}\cap\atomvars{G}=\keyvars{G}$; or
\item\label{it:myboth}
$\atomvars{F}\cap\atomvars{G}\supseteq\keyvars{F}\cup\keyvars{G}$.
\end{enumerate}

Assume $F\attacks{q}G$.
We can assume a shortest sequence
$$
F_{0}\step{x_{1}}F_{1}\step{x_{2}}F_{2}\dotsm\step{x_{\ell-1}}F_{\ell-1}\step{x_{\ell}}F_{\ell}
$$
that is a witness for $F\attacks{q}G$.
We can assume that for all $0\leq i\neq j\leq \ell-1$,
$\atomvars{F_{i}}\cap\atomvars{F_{i+1}}$ and $\atomvars{F_{j}}\cap\atomvars{F_{j+1}}$ are not comparable by $\subseteq$, 
or else the witness can be shortened, contradicting that it is the shortest witness possible.

We show by induction on increasing $i$ that for all $i\in\{1,\dots,\ell\}$,
$\keyvars{F_{i}}\subseteq\atomvars{F_{i-1}}$.
This holds obviously true for those $i$ satisfying $\atomvars{F_{i-1}}\cap\atomvars{F_{i}}\supseteq\keyvars{F_{i}}$, which happens in cases~\ref{it:myright} and~\ref{it:myboth}. 
Also, for all $i\in\{1,\dots,\ell\}$, the intersection $\atomvars{F_{i-1}}\cap\atomvars{F_{i}}$ contains $x_{i}$ and is thus non-empty, 
which excludes case~\ref{it:myempty}.
So in the remainder, it suffices to show that for $i\in\{1,\dots,\ell\}$, $\atomvars{F_{i-1}}\cap\atomvars{F_{i}}\neq\keyvars{F_{i-1}}$,  which excludes case~\ref{it:myleft}.

\myparagraph{Induction basis $i=1$}
From $x_{1}\notin\keycl{F_{0}}{q}$, it follows $x_{1}\not\in\keyvars{F_{0}}$.
It follows that $\atomvars{F_{0}}\cap\atomvars{F_{1}}\neq\keyvars{F_{0}}$,
which excludes case~\ref{it:myleft}.

\myparagraph{Induction step $i\rightarrow i+1$}
The induction hypothesis is that 
$\keyvars{F_{i}}\subseteq\atomvars{F_{i-1}}$.
Consequently,
$\keyvars{F_{i-1}}\subseteq\atomvars{F_{i-1}}\cap\atomvars{F_{i}}$.
It follows that $\atomvars{F_{i}}\cap\atomvars{F_{i+1}}\nsubseteq\keyvars{F_{i}}$,
or else, as argued before, the witness would not be the shortest possible.
It follows $\atomvars{F_{i}}\cap\atomvars{F_{i+1}}\neq\keyvars{F_{i}}$,
which excludes case~\ref{it:myleft}.
\end{proof}

The proof of Theorem~\ref{the:warmup} can now be given.

\begin{proof}[Proof of Theorem~\ref{the:warmup}]
Assume that $q$ has the key-join property
We show that the attack graph of $q$ contains no strong attacks.
To this extent, assume $F\attacks{q}G$.
The sequence $F_{0},F_{1},\dots,F_{\ell-1}$ in the statement of Lemma~\ref{lem:warmup} is a sequential proof for
$\FD{q}\models\fd{\keyvars{F_{0}}}{\keyvars{F_{\ell}}}$,
and thus the attack  $F\attacks{q}G$ is weak.
The result then follows from Theorem~\ref{the:effectivedichotomy}.
\end{proof}

\end{document}